\documentclass[11pt,letterpaper]{article}
\usepackage[utf8]{inputenc}
\usepackage{amsmath,amssymb,amsthm,mathrsfs,amsfonts}
\usepackage{bm}
\usepackage{color}
\usepackage{enumerate}
\usepackage{physics}
\usepackage{complexity}
\usepackage{soul,xcolor}
\usepackage{textcomp}
\usepackage{verbatim}
\usepackage{tikz}
\usetikzlibrary{arrows}
\usetikzlibrary{backgrounds}
\usetikzlibrary{quantikz}
\usepackage{caption}
\usepackage{subcaption}
\usepackage{algpseudocode}
\usepackage{algorithm}
\usepackage{makecell}
\usepackage[backend=biber,style=alphabetic,url=false,isbn=false,maxnames=10,minnames=3,maxalphanames=4,minalphanames=3]{biblatex}
\addbibresource{ref.bib}
\newtheorem{conjecture}{Conjecture}
\newtheorem{corollary}{Corollary}
\newtheorem{definition}{Definition}
\newtheorem{lemma}{Lemma}

\newtheorem{remark}{Remark}

\newtheorem{theorem}{Theorem}

\newcommand{\mc}{\mathcal}
\renewcommand{\E}{\mathop{\mathbb E\/}}

\newclass{\sharpp}{\#P}
\newclass{\cocequalp}{coC_{=}P}
\newfunc{\expp}{exp}

\newcommand{\xeb}{\mathrm{XEB}}
\newcommand{\xq}{\mathrm{XQ}}
\newcommand{\fsim}{\mathrm{fSim}}

\newcommand{\lket}[1]{\vert #1 \rangle\!\rangle}

\newcommand{\lbraket}[2]{\langle\!\langle #1 \vert #2 \rangle\!\rangle}

\newcommand{\lmel}[3]{\langle\!\langle #1 \vert #2 \vert #3 \rangle\!\rangle}

\usepackage[colorlinks = true]{hyperref}
\usepackage{longtable}

\usepackage{xcolor}
\definecolor{darkred}  {rgb}{0.5,0,0}
\definecolor{darkblue} {rgb}{0,0,0.5}
\definecolor{darkgreen}{rgb}{0,0.5,0}

\hypersetup{
  urlcolor   = blue,         
  linkcolor  = darkblue,     
  citecolor  = darkgreen,    
  filecolor   = darkred       
}

\usepackage[
letterpaper,
top=1in,
bottom=1in,
left=1in,
right=1in]{geometry}
\newtheorem*{theorem*}{Theorem}

\begin{document}
\title{A polynomial-time classical algorithm for noisy random circuit sampling}
\author{Dorit Aharonov\thanks{The Hebrew University. \href{mailto:dorit.aharonov@gmail.com}{dorit.aharonov@gmail.com}}
\and Xun Gao\thanks{Harvard University. \href{mailto:xungao@g.harvard.edu}{xungao@g.harvard.edu}}
\and Zeph Landau\thanks{UC Berkeley. \href{mailto:zeph.landau@gmail.com}{zeph.landau@gmail.com}}
\and Yunchao Liu\thanks{UC Berkeley. \href{mailto:yunchaoliu@berkeley.edu}{yunchaoliu@berkeley.edu}}
\and Umesh Vazirani\thanks{UC Berkeley. \href{mailto:vazirani@cs.berkeley.edu}{vazirani@cs.berkeley.edu}}}
\date{}
\maketitle

\begin{abstract}
We give a polynomial time classical algorithm for sampling from the output distribution of a noisy random quantum circuit in the regime of anti-concentration to within inverse polynomial total variation distance. This gives strong evidence that, in the presence of a constant rate of noise per gate, random circuit sampling (RCS) cannot be the basis of a scalable experimental violation of the extended Church-Turing thesis. Our algorithm is not practical in its current form, and does not address finite-size RCS based quantum supremacy experiments.
\end{abstract}

\section{Introduction}

Quantum random circuit sampling (RCS) is a basic primitive at the heart of recent ``quantum supremacy" experiments~\cite{Arute2019,Wu2021Strong,ZHU2022Quantum}. The quantum circuits in question are typically defined over a fixed architecture with gates chosen at random from some distribution (Fig.~\ref{fig:rcs}); in this work we assume two qubit gates which are Haar random\footnote{The requirement of Haar random 2-qubit gates can be relaxed; see Definition~\ref{def:architecture} and Remark~\ref{remark:assumption}.}. There are three parameters associated with the circuit: the number of qubits $n$, the circuit depth $d$, and the number of gates $m=\Theta(n d)$. In the experiments a relatively small number of samples are collected from the experimental implementation of RCS (though this number must necessarily scale exponentially in $d$), followed by a classical verification of these samples, using a statistical measure such as linear cross entropy (XEB), which requires classical post-processing time that is much larger and scales exponentially in $n$. Moreover in the experiments the depth $d$ is sufficiently large that the output distribution of the ideal random quantum circuit (Fig.~\ref{fig:rcs} (a)) is anti-concentrated\footnote{Anti-concentration is a property of random circuits which says that the output distribution is sufficiently flat when circuit depth is large enough. It was proven that anti-concentration holds for random circuits defined on 1D architecture~\cite{barak2021spoofing,Dalzell2022random} as long as circuit depth $d=\Omega(\log n)$, and it was conjectured that $\Omega(\log n)$ depth suffices for anti-concentration for any reasonably connected architecture such as 2D lattice~\cite{Dalzell2022random}, due to the fact that random circuits in 2D is expected to have faster mixing than 1D (Remark~\ref{remark:anticoncentration}).}, and indeed the output distribution tends to the Porter-Thomas distribution. 

Quantum supremacy is not only a milestone on the way to a practical quantum computer, it is also a fundamental physics experiment that tests quantum mechanics in the limit of high complexity --- it is ``an experimental violation of the extended Church-Turing thesis''~\cite{Bernstein1997Quantum,Aaronson2013Computational}. To demonstrate such a violation one must carry out a quantum computation that cannot be simulated by a polynomial time classical algorithm. This has been the main motivation for showing the complexity-theoretic hardness of both ideal and noisy random circuit sampling~\cite{Bouland2019on,Movassagh2019Quantum,Bouland2022Noise,Kondo2022Quantum,Krovi2022Average,Dalzell2021random,Aaronson2017Complexity,Aaronson2020on} (see Section~\ref{sec:priorwork} for a detailed discussion). However, recent work~\cite{Gao2021Limitations} cast doubt on the conjecture of~\cite{Aaronson2020on} that provided the hardness of noisy RCS based on the XEB test. Yet, this work left unclear whether hardness of the XEB test could be restored by formulating a new conjecture, and whether efficient classical algorithms exist for the other statistical tests for RCS output distributions such as the Heavy Output Generation (HOG) \cite{Aaronson2017Complexity} and log XEB~\cite{Boixo2018}.

In this paper we study the classical complexity of RCS in the presence of a constant rate of noise per gate. Specifically we consider a simple noise model shown in Fig.~\ref{fig:rcs} (b) where a (arbitrarily small) constant amount of depolarizing noise is applied to each qubit at each time step, which is a theoretical model for the actual RCS experiments. Our main result shows that sampling from the output distribution of a noisy random circuit can be approximately simulated by an efficient classical algorithm within small total variation distance. 

\begin{figure}[t]
    \centering
    \begin{subfigure}[b]{0.45\textwidth}
    \includegraphics[width=\linewidth]{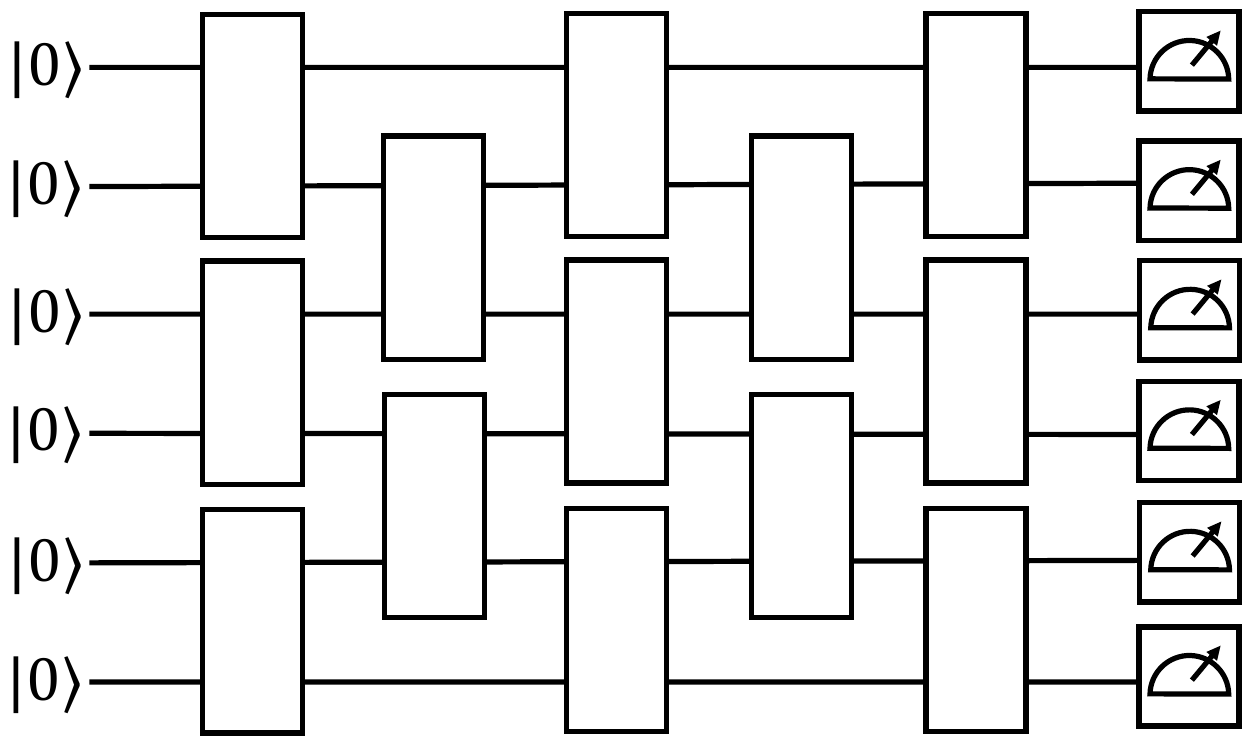}
    \caption{Ideal RCS}
    \end{subfigure}
    \hfill
    \begin{subfigure}[b]{0.45\textwidth}
    \includegraphics[width=\linewidth]{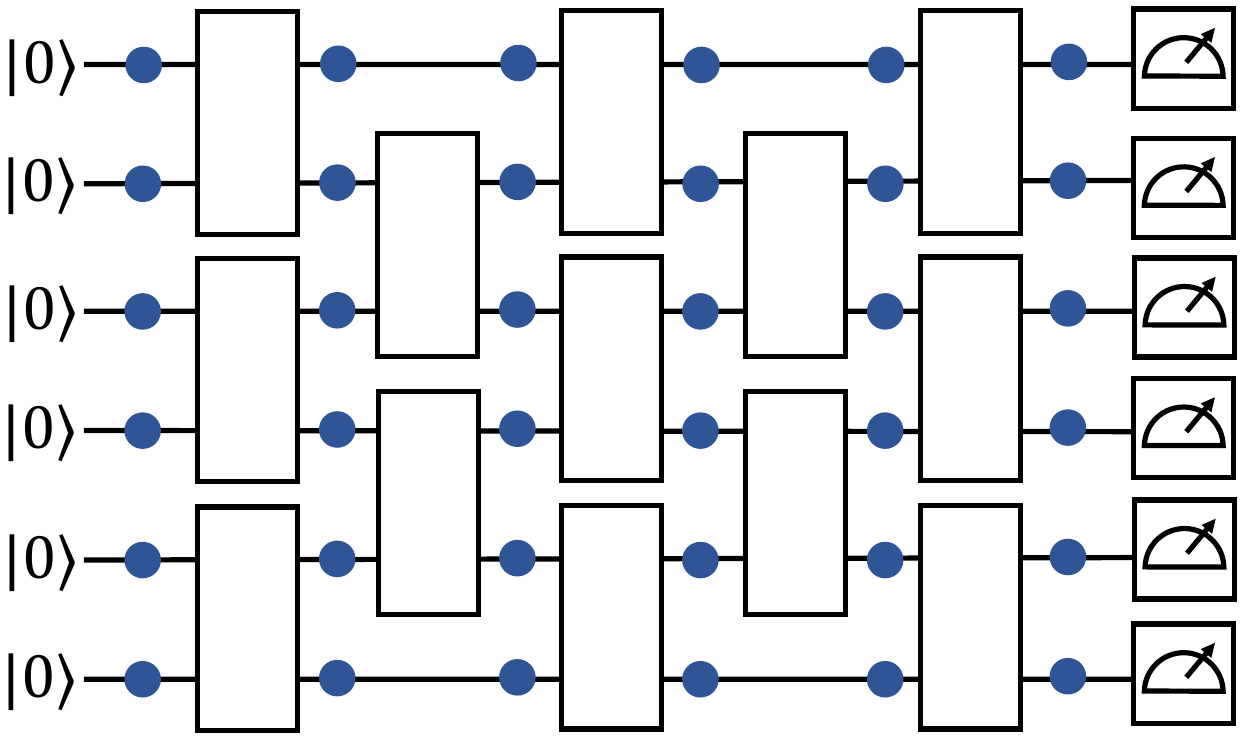}
    \caption{Noisy RCS}
    \end{subfigure}
    \caption{Random circuit sampling, each white box is an independent Haar random 2-qubit gate. (a) Ideal RCS generates an output distribution $p(C)$ that satisfies anti-concentration when $d=\Omega(\log n)$. (b) Noisy RCS, where an arbitrarily small constant amount of depolarizing noise is applied to each qubit at each step, which generates a noisy output distribution $\tilde{p}(C)$. Here the 1D architecture is for illustration; the result applies to general architectures (Definition~\ref{def:architecture}).}
    \label{fig:rcs}
\end{figure}

\begin{theorem}[Main result]
\label{thm:mainresult}
Assuming anti-concentration, there is a classical algorithm that, on input a random circuit $C$ on any fixed architecture, outputs a sample from a distribution that is $\varepsilon$-close to the noisy output distribution $\tilde{p}(C)$ in total variation distance with success probability at least $1-\delta$ over the choice of $C$, in time $\poly(n,1/\varepsilon,1/\delta)$.
\end{theorem}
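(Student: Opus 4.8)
\medskip
\noindent\textbf{Proof proposal.}
The plan is to replace the noisy output probability by its \emph{Pauli path integral} and to argue that, because of the constant-rate noise, it is accurately approximated by the contribution of the low-weight Pauli paths, of which there are only polynomially many. Expanding every gate layer $\mathcal U_k$ and every depolarizing-noise layer in the normalized Pauli basis $\{I,X,Y,Z\}^{\otimes n}$ and inserting resolutions of the identity between layers gives
\begin{equation*}
\tilde p(C)(x)\;=\;\sum_{s=(s_0,\dots,s_d)}\;2^{-n}\,(-1)^{\langle \mathrm{supp}(s_d),\,x\rangle}\;\Big(\prod_{k=1}^{d}\lmel{s_k}{\mathcal U_k}{s_{k-1}}\Big)\,(1-p)^{|s|},
\end{equation*}
where $s_k\in\{I,X,Y,Z\}^{\otimes n}$, $|s|=\sum_k|s_k|$ is the total Hamming weight, $(1-p)^{|s|}$ is the attenuation produced by the noise (whose Pauli transfer matrix is $\mathrm{diag}(1,1-p,1-p,1-p)$ on each qubit), and $\lmel{s_k}{\mathcal U_k}{s_{k-1}}$ is a product over the gates of layer $k$ of real entries of their $4\times4$ Pauli transfer matrices. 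Two structural facts drive the argument: (i) the boundary inner products force $s_0,s_d\in\{I,Z\}^{\otimes n}$, so a surviving path contributes a monomial of degree $|s_d|$ in the variables $(-1)^{x_j}$; and (ii) since every gate and every noise channel fixes the identity and preserves tracelessness, a nonzero-coefficient path is either the all-identity ``vacuum'' path or is non-identity at \emph{every} layer, with support a light-cone-connected region of the spacetime interaction graph each of whose connected components already spans all $d+1$ layers (hence has weight $\ge d+1$).

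First I would fix a cutoff $\ell$ (scaling like $\log(1/\varepsilon\delta)$ divided by the noise rate) and let $q$ be the truncation of the path integral to paths with $|s|\le\ell$. To bound the truncation error I pass to $\ell_2$: $\|\tilde p(C)-q\|_1\le 2^{n/2}\|\tilde p(C)-q\|_2$, so by Jensen it suffices to bound $2^{n}\,\E_C\|\tilde p(C)-q\|_2^2$. Parseval in the $\pm1$ Fourier basis expresses this as $\E_C$ of a sum over \emph{pairs} $(s,s')$ of heavy paths with $\mathrm{supp}(s_d)=\mathrm{supp}(s'_d)$ of the product of their amplitudes, and the key computation is the per-gate second moment for a Haar-random gate,
\begin{equation*}
\E_{U}\big[\lmel{Q}{\mathcal U}{P}\,\lmel{Q'}{\mathcal U}{P'}\big]\;=\;\tfrac{1}{15}\,\delta_{P,P'}\,\delta_{Q,Q'}\quad\text{on traceless }P,P',Q,Q',
\end{equation*}
which also vanishes when exactly one side is the identity. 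Combined with the connectivity in (ii), this forces all off-diagonal ($s\ne s'$) pairings to vanish, so that $2^{n}\,\E_C\|\tilde p(C)-q\|_2^2=\sum_{|s|>\ell}15^{-a(s)}(1-p)^{2|s|}$, where $a(s)\ge|s|/2$ is the number of active gates. Since every term is nonnegative I may extract $(1-p)^{2\ell}$ and bound what remains by $\sum_{s}15^{-a(s)}=2^{n}\,\E_C\|p(C)\|_2^2$ --- the expected collision probability of the \emph{ideal} circuit times $2^n$ --- which is $O(1)$ precisely because anti-concentration holds (which for a bounded-degree architecture also forces $d=\Omega(\log n)$). Thus $\E_C\|\tilde p(C)-q\|_1=O\big((1-p)^{\ell}\big)\le\varepsilon\delta$ for the stated $\ell$, and Markov over $C$ gives $\|\tilde p(C)-q\|_1\le\varepsilon$ for all but a $\delta$-fraction of the circuits.

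It remains to sample from a distribution close to $q$ in time $\poly(n,1/\varepsilon,1/\delta)$. By (ii) each connected component of a nonzero-coefficient path spans the full depth, so a weight-$\le\ell$ path has only $O(\ell/d)=O(1)$ components in the regime of interest ($\varepsilon,\delta$ inverse-polynomial, $p$ constant), and each component, being light-cone-connected in a bounded-degree graph, is one of at most $n\cdot 2^{O(\ell)}=\poly(n)$ possibilities (a lattice-animal count); hence the weight-$\le\ell$ paths can be enumerated, and each amplitude is a product of $O(\ell)$ scalars, computable in polynomial time. Grouping the enumerated paths by $S=\mathrm{supp}(s_d)$ yields the (few, low-degree) Fourier coefficients $\widehat q(S)$, and every conditional marginal $\Pr[x_i=b\mid x_{<i}]$ needed for bit-by-bit sampling is then an explicit ratio of partial sums of these coefficients. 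Because the vacuum path is the unique weight-$\le\ell$ path with $s_d=I^{\otimes n}$, one has $\sum_x q(x)=1$ exactly and $\|q\|_1\le 1+\varepsilon\delta$, so clipping any out-of-range conditional probabilities to a genuine distribution changes the sampled output by only $O(\varepsilon)$ in total variation; rescaling $\varepsilon$ completes the argument.

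The step I expect to be the main obstacle is the second-moment analysis above: carrying out the Weingarten computation at each gate, verifying that the connectivity of Pauli paths kills every off-diagonal pairing (or, for gates that are only approximate unitary designs as in Remark~\ref{remark:assumption}, bounding those pairings), and correctly identifying the surviving diagonal sum with $2^{n}$ times the ideal collision probability so that anti-concentration can be applied. A secondary but essential ingredient --- without which the running time would only be quasipolynomial --- is the structural observation that low-weight Pauli paths are geometrically confined and have boundedly many components, which is what keeps the path count, and hence the runtime, polynomial.
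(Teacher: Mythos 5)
Your proposal is correct and follows the same overall skeleton as the paper: truncate the Pauli path integral at weight $\ell=O(\gamma^{-1}\log(1/\varepsilon\delta))$, pass to $\ell_2$ via Cauchy--Schwarz, use per-gate orthogonality to kill cross terms, identify the surviving diagonal sum with $2^n$ times the ideal collision probability so that anti-concentration gives $\E_C[\Delta^2]\le O(1)\,(1-\gamma)^{2\ell}$, apply Markov over $C$, compute marginals path-by-path by summing out output bits, and finish with a sampling-from-marginals reduction. Where you genuinely diverge is in the counting/enumeration step, which is the paper's main technical section: the paper fixes the per-layer weight profile $(w_0,\dots,w_d)$ with all $w_i\ge 1$, enumerates the configuration of the lightest layer ($\binom{n}{w_t}\le n^{k/d}$ choices since $w_t\le k/d$), and propagates forward and backward through the gate layers with $3^{w}$ branching, yielding $n^{\ell/d}\cdot 2^{O(\ell)}$; you instead use the structural fact that every connected component of a legal path's support spans all $d+1$ layers, so there are at most $\ell/(d+1)$ components, each enumerable by a rooted lattice-animal count. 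This gives the same bound and is a valid alternative; note only that the ``bounded degree'' you need is that of the spacetime incidence graph (each Pauli slot touches at most two gates, each gate four slots), which holds automatically for the architecture of Definition~\ref{def:architecture} even though gates may couple arbitrary qubit pairs, so no geometric-locality assumption is actually required. Two minor points: your bound $a(s)\ge|s|/2$ on the number of active gates should be $a(s)\ge|s|/4$ (an active gate can touch four non-identity Paulis), though you never use it; and your final ``clip the conditional probabilities'' step is precisely the content of Lemma~10 of Bremner--Montanaro--Shepherd, which the paper simply cites --- the one-line claim that clipping costs only $O(\varepsilon)$ in total variation does require the (known, but not completely trivial) argument that the errors incurred over the $n$ bit-by-bit sampling steps accumulate to at most $O(\|\tilde p-\bar q\|_1)$ rather than compounding.
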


To put this in perspective, consider a RCS quantum supremacy experiment that collects $M$ samples. We claim that Theorem~\ref{thm:mainresult} implies that there is a classical algorithm running in time bounded by polynomial in $M$, that outputs $M$ samples that are indistinguishable, i.e.~no statistical test can distinguish the output of the algorithm from the output of the experiment with probability greater than $1/2 + \mu$, for any constant $\mu>0$. This is because to achieve statistical indistinguishability it suffices to choose $\varepsilon=\mu/M$, which by the main result above gives a running time $\poly(n,M/\mu)$. Thus the running time of our algorithm is at most a polynomial in the running time of the experiment.

\begin{corollary}\label{cor:indistinguishability}
Assuming anti-concentration, no statistical test applied to $M$ samples can distinguish between the output of a noisy random circuit and the above classical algorithm with running time $\poly(n,M)$. In particular, if $M = \poly(n)$, the classical algorithm runs in $\poly(n)$ time.
\end{corollary}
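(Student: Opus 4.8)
The plan is to obtain Corollary~\ref{cor:indistinguishability} from Theorem~\ref{thm:mainresult} by a routine hybrid argument that lifts the single-sample total variation guarantee to the joint distribution of all $M$ samples, combined with the operational characterization of total variation distance as optimal distinguishing advantage. Recall that for any two distributions $P,Q$ on a common space and any (computationally unbounded, randomized) test $T$, the probability that $T$ correctly distinguishes one sample from $P$ versus one sample from $Q$ is at most $\tfrac12 + \tfrac12\,\|P-Q\|_{\mathrm{TV}}$, with equality for the likelihood-ratio test. So it suffices to upper bound the total variation distance between the law of the $M$ i.i.d.\ experimental samples, namely $\tilde p(C)^{\otimes M}$, and the law of $M$ independent runs of the classical algorithm of Theorem~\ref{thm:mainresult}, which I denote $\mathcal{A}(C)^{\otimes M}$ where $\mathcal{A}(C)$ is the algorithm's single-sample output distribution.

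First I would condition on the ``good circuit'' event $\mathcal{G}$ supplied by Theorem~\ref{thm:mainresult}: with probability at least $1-\delta$ over $C$ we have $\|\tilde p(C)-\mathcal{A}(C)\|_{\mathrm{TV}}\le\varepsilon$. On $\mathcal{G}$, a standard hybrid (telescoping) argument, using subadditivity of total variation distance under products, gives
\[
\bigl\|\tilde p(C)^{\otimes M} - \mathcal{A}(C)^{\otimes M}\bigr\|_{\mathrm{TV}} \;\le\; M\,\bigl\|\tilde p(C) - \mathcal{A}(C)\bigr\|_{\mathrm{TV}} \;\le\; M\varepsilon .
\]
Choosing $\varepsilon = \mu/M$ and $\delta = \mu$ (both permitted by Theorem~\ref{thm:mainresult}), any test's distinguishing advantage is at most $\tfrac12 M\varepsilon$ on $\mathcal{G}$ plus a $\Pr[\neg\mathcal{G}]\le\delta$ contribution from circuits outside $\mathcal{G}$, hence at most $\mu$ overall for any constant $\mu>0$ (rescaling $\varepsilon,\delta$ by fixed constants absorbs the factors of two). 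The running time is that of $M$ independent runs, $M\cdot\poly(n,1/\varepsilon,1/\delta) = M\cdot\poly(n,M/\mu) = \poly(n,M)$ for constant $\mu$; when $M=\poly(n)$ this is $\poly(n)$, which gives the second assertion.

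I do not expect a genuine obstacle here, since this is an amplification argument rather than a new technical ingredient; the only points needing care are (a) that the accuracy guarantee of Theorem~\ref{thm:mainresult} holds only on $\mathcal{G}$, so $\delta$ must be charged to the distinguisher's advantage rather than dropped, and (b) that both the experiment and the classical algorithm produce genuinely i.i.d.\ samples — the circuit $C$ is fixed and shared across all $M$ draws, while the within-run randomness (hardware noise on one side, internal coins on the other) is fresh each time — which is what legitimizes the product structure and hence the hybrid bound.
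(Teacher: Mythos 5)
Your proposal is correct and matches the paper's own argument essentially step for step: condition on the set of ``good'' circuits where Theorem~\ref{thm:mainresult} gives $\|\tilde p(C)-q(C)\|_1\le\varepsilon$, bound the $M$-fold product distributions via subadditivity of total variation distance (the hybrid step), add the $\delta$ probability of bad circuits, and set $\varepsilon=\Theta(1/M)$ with constant $\delta$ to conclude indistinguishability with running time $\poly(n,M)$. The only cosmetic difference is that the paper fixes concrete constants ($\varepsilon=0.01/M$, $\delta=0.01$) to meet the $1/3$ threshold in its Definition~\ref{def:distinguishability}, whereas you phrase the bound in terms of an arbitrary constant advantage $\mu$; the substance is identical.
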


We note that the implications of our result are complexity theoretic and do not directly address the soundness of finite-size quantum supremacy experiments. 

Also note that anti-concentration is a central assumption for both the RCS experiments and our algorithm, which is believed to hold for general architectures above $\Omega(\log n)$ depth~\cite{Dalzell2022random}. At the same time, the output distribution of noisy random circuits is $2^{-\Theta(d)}$ close to uniform in total variation distance~\cite{Aharonov1996limitations,Gao2018efficient,Deshpande2021tight}. 
This means that any quantum supremacy experiment must collect $M = 2^{\Omega(d)}$ samples. Thus $d = \Theta(\log n)$ was recognized as the sweet spot for scalable experimental demonstration of quantum computational advantage~\cite{Deshpande2021tight}, depth $O(\log n)$ to guarantee polynomial number of samples and $\Omega(\log n)$ to guarantee anti-concentration. In this regime both the sample complexity of the experiment and running time of our classical algorithm scale polynomially in $n$.

Our approach builds upon the work of Gao and Duan in 2018~\cite{Gao2018efficient}. They developed the idea of performing a Fourier transform on quantum circuits and an algorithm for simulating noisy random circuits via a truncation in Fourier domain and calculating low-degree Fourier coefficients. They used the resulting algorithm to efficiently estimate local observables for random analogs of fault-tolerance circuits, thus showing that structure is necessary for quantum fault-tolerance. While not explicitly mentioned in \cite{Gao2018efficient}, their approach in fact produces a quasi-polynomial time algorithm for sampling from the output distribution to within inverse polynomial total variation distance\footnote{This fact was unknown at the time, as it was believed that anti-concentration requires large circuit depth. Recent developments \cite{barak2021spoofing,Dalzell2022random} suggest otherwise.}. This raises the challenge of giving a polynomial time algorithm for the sampling problem.

We start by reformulating the Fourier transform defined by~\cite{Gao2018efficient} as Feynman path integral in the Pauli basis, and the simulation algorithm as calculating those Feynman paths with lowest Hamming weight. The Pauli basis framework was also used by \cite{Gao2021Limitations} to give an alternative argument for achieving a $2^{-O(d)}$ XEB (see Appendix~\ref{sec:xquath} for a more formal treatment). The advantage of using the Pauli basis for Feynman path integral is that most low-Hamming-weight Feynman paths have 0 contribution to the path integral. This view helps design an enumeration algorithm that calculates the contributions of only non-trivial paths in polynomial time. From the perspective of Fourier analysis, prior algorithms of \cite{Bremner2017achievingquantum,Gao2018efficient} based on low-degree Fourier approximation mainly rely on noise sensitivity and have running time $n^{O(\log 1/\varepsilon)}$ where $\varepsilon$ is the desired approximation error which results in quasi-polynomial running time for our purpose, but our algorithm has running time $2^{O(\log 1/\varepsilon)}=\poly(1/\varepsilon)$ due to the additional property of Fourier sparsity. 

Our algorithm is not practical in its current form due to a large exponent in the running time, and we leave as an interesting future direction to develop practical implementations using our framework. See Section~\ref{sec:remarks} for discussions regarding finite-size noisy RCS experiments.

\subsection{Description of algorithm}
\label{sec:descriptionofalgorithm}

Let $\rho$ be an $n$ qubit density matrix. We can write $\rho=\sum_{s\in\mathsf{P}_n}\alpha_s\cdot s$ where $\mathsf{P}_n$ are the normalized $n$-qubit Pauli operators, and $\alpha_s = \Tr(s\rho)$ is real. We keep track of the coefficients in the Pauli basis after unitary evolution $\rho\mapsto U\rho U^\dag$, which evolve according to the rule $\Tr(s U\rho U^\dag)=\sum_{t\in\mathsf{P}_n} \Tr(s U t U^\dag) \Tr(t\rho)$. Comparing with the transition rule $\mel{x}{U}{\psi}=\sum_{y}\mel{x}{U}{y}\braket{y}{\psi}$ we can see that while $\mel{x}{U}{y}$ is the transition amplitude from $\ket{y}$ to $\ket{x}$, $\Tr(s U t U^\dag)$ plays the role of transition amplitude from $t$ to $s$.

Consider a quantum circuit $C=U_d U_{d-1}\cdots U_1$ where $U_i$ is a layer of 2-qubit gates and $d$ is circuit depth. A \textbf{Pauli path} is a sequence $s=(s_0,\dots,s_d)\in\mathsf{P}_n^{d+1}$. The Feynman path integral in the Pauli basis (in short, Pauli path integral) is written as sum of product of transition amplitudes,
\begin{equation}
    p(C,x)=\sum_{s_0,\dots,s_d\in\mathsf{P}_n}\Tr(\ketbra{x}s_d)\Tr(s_d U_d s_{d-1} U_d^\dag)\cdots\Tr(s_1 U_1 s_0 U_1^\dag)\Tr(s_0\ketbra{0^n}).
\end{equation}
Note that LHS is the probability $p(C,x)=\left|\mel{x}{C}{0^n}\right|^2$ instead of amplitude. Denote the contribution of a Pauli path $s=(s_0,\dots,s_d)\in\mathsf{P}_n^{d+1}$ to the path integral as $f(C,s,x)$, which gives $p(C,x)=\sum_{s\in\mathsf{P}_n^{d+1}}f(C,s,x)$.

Our algorithm for simulating noisy random circuits is based on a simple but powerful fact, used in \cite{Kempe2010Upper,Gao2018efficient}. Consider the single-qubit depolarizing noise with strength $\gamma$, $\mc E(\rho):=(1-\gamma)\rho + \gamma\frac{I}{2}\Tr(\rho)$. Then the contribution of a Pauli path of a noisy quantum circuit subject to this noise, decays exponentially with the Hamming weight of the Pauli path:
\begin{equation}
    \tilde{p}(C,x)=\sum_{s\in\mathsf{P}_n^{d+1}}(1-\gamma)^{|s|}f(C,s,x),
\end{equation}
where $\tilde{p}(C,x)$ is the output probability of the noisy circuit and $|s|$ is the \textbf{Hamming weight} of $s$ (the number of non-identity Pauli in $s$).
We would like to approximate the value $\tilde{p}(C,x)$ by summing only over the low-weight Pauli paths,
\begin{equation}\label{eq:introapprox}
    \tilde{p}(C,x)\approx\sum_{s\in\mathsf{P}_n^{d+1}:|s|\leq \ell}(1-\gamma)^{|s|}f(C,s,x),
\end{equation}
and claim that the total variation distance achieved by the approximation is $2^{-\Omega(\ell)}$ on average. This is not immediate since the $f(C,s,x)$ can be both positive and negative. We invoke two properties of random circuits: the first is \textbf{orthogonality}, which says that on average over random circuits the product of the contributions from two different Pauli paths equals 0, i.e.~$\E_C[f(C,s,x)f(C,s',x)]=0$ when $s\neq s'$; the second is \textbf{anti-concentration}, which says that the sum of squares of the output probability of a random circuit is small, i.e.~$\E_C\sum_x p(C,x)^2=O(1)\cdot 2^{-n}$. Roughly speaking, orthogonality allows us to upper bound the total variation distance by a sum of squares quantity, which is then upper bounded using anti-concentration.

The next step is to develop an algorithm to calculate the RHS of Eq.~\eqref{eq:introapprox}. Note that a straightforward sum over all paths up to weight $\ell$ gives a running time of $O(n d)^{O(\ell)}$ leading to a quasi-polynomial time algorithm as in~\cite{Gao2018efficient}. Here we develop a \textbf{counting argument} and efficient enumeration method for all Pauli paths of weight at most $\ell$ which takes only $2^{O(\ell)}$ time. The idea is \textbf{sparsity} of the low-weight paths, meaning that for most Pauli paths in $\mathsf{P}_n^{d+1}$, its contribution $f(C,s,x)$ is $0$; therefore we design a combinatorial algorithm that only enumerates those paths that have non-zero contributions. Finally, the sampling algorithm follows from a general sampling-to-computing reduction of \cite{Bremner2017achievingquantum}.

At a high level, the hardness assumptions in~\cite{Aaronson2017Complexity,Aaronson2020on} may be intuitively viewed as asserting that Feynman path integral in the computational basis is essentially the best classical algorithm for RCS, and achieving non-trivial correlation requires following exponentially many paths. Instead, the Pauli path integral approach has the virtue that low weight paths have the most significant contribution.

\subsection{Prior work regarding the computational complexity of RCS}
\label{sec:priorwork}

To put the above results in context, let us recall the background regarding complexity theoretic evidence that classical computers cannot efficiently sample from the output of a random quantum circuit (this section focuses on asymptotic hardness; see next section for discussions regarding finite-size experiments). There are two main genres of results along those lines, which we review below (see \cite{Hangleiter2022Computational} for a more comprehensive survey).

The first is in the form of a worst-case to average-case reduction, showing that if an efficient classical algorithm can sample from the output distribution of ideal RCS within small total variation distance, then the Polynomial Hierarchy collapses~\cite{Bouland2019on,Movassagh2019Quantum,Bouland2022Noise,Kondo2022Quantum,Krovi2022Average}. 
The eventual goal of this program was to show classical hardness for sampling within constant total variation distance, which would require showing average-case hardness of computing the output probability of ideal RCS within additive error $O(2^{-n})$. While the earliest average-case hardness results could only tolerate very small additive error, it was hoped that over time the reductions could be made more robust. This has indeed been the case, with an improvement from a large polynomial in the exponent~\cite{Bouland2019on} to $2^{-O(m)}$~\cite{Krovi2022Average}, but this line of work has hit an obstacle that may prove difficult to overcome (see e.g.~\cite[Section 3]{Bouland2022Noise} and \cite[Section II A]{Deshpande2021tight}). Moreover, these results do not address the actual RCS experiments which are highly noisy\footnote{The related work of \cite{Dalzell2021random} argued that the hardness of approximate sampling for noisy RCS can be reduced to ideal RCS, but the argument required a local noise model that decreases as $\tilde{O}(1/n)$, which is not scalable.}.

The second genre is based on complexity theoretic assumptions about the difficulty of distinguishing heavy and light outputs of the random circuit~\cite{Aaronson2017Complexity,Aaronson2020on}. These assumptions essentially say that even a tiny correlation (order $2^{-n}$) with the output distribution of ideal RCS is hard to achieve classically. While these assumptions are quite strong, they have the virtue of yielding robust bounds. Indeed a specific conjecture in this genre called XQUATH~\cite{Aaronson2020on} has provided robust complexity theoretic foundation of the linear cross entropy benchmark (XEB) used in recent experiments~\cite{Arute2019,Wu2021Strong,ZHU2022Quantum}. This provided a way to heuristically argue that even the very small XEB achieved in actual 50-70 qubit experiments was a classically difficult computational task. However, the strong parameters in the assumption (correlation of order $2^{-n}$) was called into question by the result of \cite{Gao2021Limitations}, although it remained unclear if the hardness of the XEB test can be restored by changing the parameters in XQUATH. In addition, it was unclear whether the hardness of the other statistical tests such as HOG or log XEB was impacted. Our results address these questions by showing that no statistical tests, like the XEB, HOG and log XEB, can distinguish between noisy RCS and our classical algorithm.

\subsection{Concluding remarks: what our results do not address}
\label{sec:remarks}
We importantly note several points left unaddressed by our results. 
\begin{itemize}
    \item {\bf Practical speed-ups.}
We note that our results do not address RCS based quantum supremacy in its non-asymptotic, practical form. In particular, much progress has been made in developing practical spoofing algorithms for achieving a similar numerical value as the XEB in current 53-60 qubit RCS experiments. Practical tensor network algorithms \cite{Gray2021hyperoptimized,Huang2020Classical,Pan2022Simulation,Kalachev2021Multi,Pan2022Solving,Kalachev2021Classical} can achieve this goal using hundreds of GPUs in a few hours, but these algorithms have exponential scaling and become impractical if the system size increases by a few qubits. A numerical implementation of the algorithm in~\cite{Gao2021Limitations} achieved roughly 10\% of Google's XEB using 1 GPU in 1 second, though it remains unclear whether this algorithm can achieve Google's XEB (using much less than hundreds of GPUs).
Our algorithm is not practical in its current form, as there is a large constant (of order $1/\gamma$ where $\gamma$ is the error per gate) in the degree of the polynomial of the running time. An interesting future direction is to develop practical implementations using our framework and ideas from \cite{Gao2021Limitations} that achieves similar XEB as in the experiments~\cite{Arute2019,Wu2021Strong,ZHU2022Quantum} using a small amount of resource.

\item {\bf Sublogarithmic depth.} 
Our algorithm assumes anti-concentration and therefore works for random circuits with depth at least $\Omega(\log n)$.\footnote{It was shown \cite{Dalzell2022random,Deshpande2021tight} that anti-concentration requires at least $\Omega(\log n)$ depth for random circuits with Haar random 2-qubit gates.} The issue with sub-logarithmic depth random circuits (with Haar random 2-qubit gates) is that there is no evidence for hardness of sampling even for ideal RCS, as all existing results for average-case hardness (the first genre discussed above) are only relevant for sampling when anti-concentration holds. In addition, \cite{Napp2022Efficient} gives evidence that 2D ideal RCS can be efficiently simulated when depth is smaller than some fixed constant. The complexity of ideal and noisy RCS remains unclear at depth between constant and $o(\log n)$. Separately, existing quantum supremacy experiments rely on the assumption that the ideal circuit is close to Porter-Thomas for benchmarking; closeness to Porter-Thomas is even stronger than anti-concentration.  

Notwithstanding the above discussion, it remains possible that a different approach based on RCS of sublogarithmic depth circuits, which does not rely on anti-concentration, could lead to a scalable experimental violation of the extended Church-Turing thesis.

\item{\bf Less random gate sets.}
Besides anti-concentration, our algorithm also requires randomness in the gate set.
The simplest distribution over the gate set to think of is that of Haar random 2-qubit gates. However, the gate sets used in actual experiments~\cite{Arute2019,Wu2021Strong,ZHU2022Quantum} are not Haar random 2-qubit gates, but gates with more limited 
randomness. While we do not know if our results hold for the exact gate sets used in those recent experiments, we show in Section~\ref{sec:gateset} that our algorithm works for a gate set which is closely related to the gate sets used in those experiments; more generally, the required condition for our results is in fact much weaker than Haar random 2-qubit gates (see Definition~\ref{def:architecture}).
\end{itemize}

\noindent{\bf Overview of remainder of paper.} In Section~\ref{sec:paulibasis} we give formal definitions of the Pauli path integral and derive useful properties of this framework. In Section~\ref{sec:mainresult} we give the proof of our main result, and discuss Google and USTC's gate set in Section~\ref{sec:gateset}. Appendix~\ref{sec:xquath} contains a formal proof for refuting XQUATH using the Pauli basis framework. As an application of the Pauli basis framework, we provide simple proofs for existing results about random circuits, including a lower bound on the depth for anti-concentration previously shown by~\cite{Dalzell2022random} (Corollary~\ref{cor:aclowerbound}), and an improved lower bound on the convergence to uniform for noisy random circuits previously shown by~\cite{Deshpande2021tight} (Appendix~\ref{sec:convergetouniform}).

\section{The Pauli basis framework}
\label{sec:paulibasis}
We first give formal definitions of the Pauli path integral discussed in Section~\ref{sec:descriptionofalgorithm} and then derive useful properties of this framework.

Let $C=U_d U_{d-1}\cdots U_1$ be a quantum circuit acting on $n$ qubits, where $U_i$ is a layer of 2-qubit gates and $d$ is circuit depth. The Feynman path integral in the computational basis is written as
\begin{equation}\label{eq:feynmanpathcomputational}
    \expval{C}{0^n}=\sum_{x_1,\dots,x_{d-1}\in\{0,1\}^n}\mel{0^n}{U_d}{x_{d-1}}\mel{x_{d-1}}{U_{d-1}}{x_{d-2}}\cdots \mel{x_1}{U_1}{0^n}.
\end{equation}

The main difference when switching to the Pauli basis is that instead of thinking about a quantum circuit as applying unitary matrices to vectors, we think of it as unitary channels applied to density matrices, $\mc C=\mc U_d \mc U_{d-1}\cdots \mc U_1$ where each $\mc U_i(\cdot):=U_i(\cdot) U_i^\dag$ is a unitary channel. Similar to decomposing a pure state vector into a superposition of computational basis states, we consider the normalized Pauli operators
\begin{equation}
    \mathsf{P}_n:=\left\{I/\sqrt{2},X/\sqrt{2},Y/\sqrt{2},Z/\sqrt{2}\right\}^{\otimes n}
\end{equation}
as an operator basis and decompose a density matrix into a linear combination of Pauli operators (Table~\ref{tab:paulibasis}). In Table~\ref{tab:paulibasis} we present the operator basis as a direct analogy of vector basis by switching to the operator ket notation (Table~\ref{tab:paulibasis} (c)).

\begin{table}[t]
    \centering
    \begin{tabular}{|l|l|l|l|}
    \hline
         & (a) Vector basis & (b) Operator basis & (c) Operator basis  \\\hline
       State & $\displaystyle\ket{\psi}=\sum_{x\in\{0,1\}^n}\braket{x}{\psi}\ket{x}$ & $\displaystyle\rho=\sum_{s\in\mathsf{P}_n}\Tr(s\rho) s$ & $\displaystyle\lket{\rho}=\sum_{s\in\mathsf{P}_n}\lbraket{s}{\rho}\lket{s}$ \\\hline
       Evolution & $\displaystyle\ket{\psi}\mapsto U\ket{\psi}$ & $\displaystyle\rho\mapsto U\rho U^\dag$ & $\displaystyle\lket{\rho}\mapsto \mc U \lket{\rho}$ \\\hline
       Path integral & \parbox[l]{3.9cm}{\begin{equation*}
           \begin{aligned}
          &\mel{x}{U}{\psi}\\=&\sum_{y\in\{0,1\}^n}\mel{x}{U}{y}\braket{y}{\psi}
           \end{aligned}
       \end{equation*}} 
       & \parbox[l]{4.2cm}{\begin{equation*}
           \begin{aligned}
               &\Tr(s U\rho U^\dag)\\=&\sum_{t\in\mathsf{P}_n} \Tr(s U t U^\dag) \Tr(t\rho)
           \end{aligned}
       \end{equation*}} 
       & \parbox[l]{3.3cm}{\begin{equation*}
           \begin{aligned}
            &\lmel{s}{\mc U}{\rho}\\=&\sum_{t\in\mathsf{P}_n} \lmel{s}{\mc U}{t}\lbraket{t}{\rho} 
           \end{aligned}
       \end{equation*}}\\\hline
    \end{tabular}
    \caption{The Feynman path integral can be viewed as decomposing the state into basis states at each step of time evolution. (a) The standard decomposition with computational basis states. (b) Decomposition using the Pauli operator basis, where states are represented as density matrices and time evolution is represented as unitary channels. (c) The same decomposition using the Pauli operator basis, presented with operator ket notation.}
    \label{tab:paulibasis}
\end{table}

\begin{definition}[Pauli path integral]
Let $C=U_d U_{d-1}\cdots U_1$ be a quantum circuit acting on $n$ qubits, where $U_i$ is a layer of 2-qubit gates and $d$ is circuit depth, and let $p(C,x):=\left|\mel{x}{C}{0^n}\right|^2$ be the output probability distribution. The Pauli path integral is written as
\begin{equation}\label{eq:feynmanpathpauli}
\begin{aligned}
    p(C,x)&=\sum_{s_0,\dots,s_d\in\mathsf{P}_n}\Tr(\ketbra{x}s_d)\Tr(s_d U_d s_{d-1} U_d^\dag)\cdots\Tr(s_1 U_1 s_0 U_1^\dag)\Tr(s_0\ketbra{0^n})\\
    &=\sum_{s_0,\dots,s_d\in\mathsf{P}_n}\lbraket{x}{s_d}\lmel{s_d}{\mc U_d}{s_{d-1}}\cdots \lmel{s_1}{\mc U_1}{s_0}\lbraket{s_0}{0^n}.
\end{aligned}
\end{equation}
Here each term on RHS corresponds to a \textbf{Pauli path} $s=(s_0,\dots,s_d)\in\mathsf{P}_n^{d+1}$. We also define the \textbf{Fourier coefficient} of a quantum circuit $C$ with output $x$ and Pauli path $s$ as
\begin{equation}
    f(C,s,x):=\lbraket{x}{s_d}\lmel{s_d}{\mc U_d}{s_{d-1}}\cdots \lmel{s_1}{\mc U_1}{s_0}\lbraket{s_0}{0^n}
\end{equation}
and the output probability is written as
\begin{equation}
    p(C,x)=\sum_{s\in\mathsf{P}_n^{d+1}}f(C,s,x).
\end{equation}
\end{definition}

Eq.~\eqref{eq:feynmanpathpauli} follows from repeatedly applying the rules shown in Table~\ref{tab:paulibasis}. The above definition can also be extended to noisy quantum circuits. Let $\mc E(\rho):=(1-\gamma)\rho + \gamma\frac{I}{2}\Tr(\rho)$ be the single-qubit depolarizing noise with strength $\gamma$. It has the property that $\mc E(I)=I$ and $\mc E(P)=(1-\gamma)P$ when $P\in\left\{X,Y,Z\right\}$. 

\begin{definition}[Pauli path integral for noisy quantum circuits]\label{def:noisyfeynmanpath}
For a quantum circuit $C=U_d U_{d-1}\cdots U_1$, let $\tilde{C}$ be a noisy quantum circuit where each qubit in $C$ is subject to $\gamma$ depolarizing noise in each layer (Fig.~\ref{fig:rcs} (b)). Let
\begin{equation}
    \tilde{p}(C,x):=\lmel{x}{\mc E^{\otimes n}\mc U_d\mc E^{\otimes n}\cdots \mc U_1\mc E^{\otimes n}}{0^n}
\end{equation}
be the output probability distribution of the noisy circuit $\tilde{C}$. The Pauli path integral for $\tilde{C}$ is defined as
\begin{equation}
    \tilde{p}(C,x)=\sum_{s\in\mathsf{P}_n^{d+1}}\tilde{f}(C,s,x)
\end{equation}
where
\begin{equation}
    \tilde{f}(C,s,x):=\lmel{x}{\mc E^{\otimes n}}{s_d}\lmel{s_d}{\mc U_d\mc E^{\otimes n}}{s_{d-1}}\cdots \lmel{s_1}{\mc U_1\mc E^{\otimes n}}{s_0}\lbraket{s_0}{0^n}.
\end{equation}
Let $|s|$ be the \textbf{Hamming weight} of $s$ (the number of non-identity Pauli in $s$). The definition of depolarizing noise implies that
\begin{equation}\label{eq:noise_decay}
    \tilde{f}(C,s,x)=(1-\gamma)^{|s|}f(C,s,x).
\end{equation}
\end{definition}

\begin{figure}[t]
  \begin{algorithm}[H]
    \caption{Simulating noisy random circuits by low-degree Fourier approximation}\label{alg:mainsimulation}
    \raggedright\textbf{Input:} quantum circuit $C$, truncation parameter $\ell$, $x\in\{0,1\}^n$\\
    \textbf{Output:} an approximation of $\tilde{p}(C,x)$
    \begin{algorithmic}[1]
    \State $q\leftarrow 0$
    \ForAll{legal Pauli path $s$ with $|s|\leq \ell$}
    \State calculate $f(C,s,x)$
    \State $q\leftarrow q+(1-\gamma)^{|s|}f(C,s,x)$
    \EndFor
    \State\textbf{Return} $q$
    \end{algorithmic}
    \end{algorithm}
\end{figure}

Our algorithm described in Section~\ref{sec:descriptionofalgorithm} is summarized in Algorithm~\ref{alg:mainsimulation} (``legal'' Pauli path is defined in Definition~\ref{def:legalpath}). Next we develop properties of the Pauli basis that are useful later.

First, note that the Fourier coefficients $f(C,s,x)$ can be further decomposed into products of transition amplitudes of 2-qubit gates $\lmel{q}{\mc U}{p}=\Tr(q U p U^\dag)$ where $U\in\mathbb{U}(4)$, $p,q\in\mathsf{P}_2$, so any Fourier coefficient can be computed in time $O(n d)$. The Fourier coefficients satisfy $f(C,s,x)\in\mathbb{R}$ and $\left|f(C,s,x)\right|\leq\frac{1}{2^n}$. This is because for any $x\in\{0,1\}^n$ and $s\in\mathsf{P}_n$ we have
\begin{equation}\label{eq:boundary}
    \lbraket{x}{s}=\Tr(\ketbra{x}s)\in\left\{0,-\frac{1}{\sqrt{2^n}},\frac{1}{\sqrt{2^n}}\right\}.
\end{equation}
In addition, the output $x$ only affects the sign of the Fourier coefficient, as
\begin{equation}\label{eq:boundarysign}
    f(C,s,x)^2=f(C,s,0^n)^2,\quad \forall x\in\{0,1\}^n.
\end{equation}

The rest of the properties we develop in this section crucially rely on the randomness of the gate set. We first recall the properties of Haar random 2-qubit gates.

\begin{lemma}[Properties of Haar random 2-qubit gates~\cite{Harrow2009}]\label{lemma:haarrandomgate}
Let $U\in\mathbb{U}(4)$ be a Haar random 2-qubit gate, and $p,q,r,s\in\mathsf{P}_2$. Then
    \begin{equation}\label{eq:haarorthogonality}
        \E_{U\sim\mathbb{U}(4)}\lmel{p}{\mc U}{q}\lmel{r}{\mc U}{s}=0\,\,\,\,\text{if }p\neq r\text{ or }q\neq s.
    \end{equation}
    We also have
    \begin{equation}\label{eq:2qubitgate}
        \E_{U\sim\mathbb{U}(4)}\lmel{p}{\mc U}{q}^2=\begin{cases}
            1, & p=q=I^{\otimes 2}/2,\\
            0, & p=I^{\otimes 2}/2, q\neq I^{\otimes 2}/2,\\
            0, & p\neq I^{\otimes 2}/2, q= I^{\otimes 2}/2,\\
            \frac{1}{15}, & \text{else}.
        \end{cases}
    \end{equation}
\end{lemma}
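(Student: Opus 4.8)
The claim to prove is Lemma~\ref{lemma:haarrandomgate}, the two moment identities for Haar random $U\in\mathbb{U}(4)$ acting by conjugation on normalized Pauli operators $\mathsf{P}_2$. The plan is to use Schur–Weyl / Weingarten calculus for the second moment $\E_U\, U^{\otimes 2}\otimes \bar U^{\otimes 2}$, or equivalently the fact that the adjoint (conjugation) representation of $\mathbb{U}(4)$ decomposes as the trivial representation (spanned by $I^{\otimes 2}/2$) plus a single $15$-dimensional irreducible representation (spanned by the nontrivial Paulis). I would phrase everything in terms of the superoperator $\mc U = U(\cdot)U^\dag$, whose matrix elements in the Pauli basis are $\lmel{p}{\mc U}{q} = \Tr(pUqU^\dag)$, and note that $\mc U$ is a real orthogonal matrix on the $16$-dimensional real space spanned by $\mathsf{P}_2$, fixing the vector $\lket{I^{\otimes 2}/2}$.

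First I would establish the block structure: $\lmel{I^{\otimes 2}/2}{\mc U}{I^{\otimes 2}/2}=1$ deterministically, and $\lmel{I^{\otimes 2}/2}{\mc U}{q}=\lmel{p}{\mc U}{I^{\otimes 2}/2}=0$ for nontrivial $p,q$ since conjugation is trace- and identity-preserving; this immediately gives the first three cases of Eq.~\eqref{eq:2qubitgate} (the nonzero value in case one, and zeros in cases two and three), and it also disposes of Eq.~\eqref{eq:haarorthogonality} whenever one index is the identity. For the nontrivial block, I would invoke Schur's orthogonality relations for the irreducible $15$-dimensional adjoint representation $\pi$: the Haar average $\E_U \pi(U)_{ij}\pi(U)_{k\ell}$ equals $\frac{1}{15}\delta_{ik}\delta_{j\ell}$ (using that $\pi$ is real/orthogonal, so $\overline{\pi(U)}=\pi(U)$ and the conjugate representation coincides with $\pi$). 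Setting $(i,j)=(p,q)$ and $(k,\ell)=(r,s)$ for nontrivial $p,q,r,s$ gives $\E_U \lmel{p}{\mc U}{q}\lmel{r}{\mc U}{s}=\frac{1}{15}\delta_{pr}\delta_{qs}$, which yields simultaneously the vanishing in Eq.~\eqref{eq:haarorthogonality} when $p\neq r$ or $q\neq s$ (over nontrivial Paulis) and the value $\frac{1}{15}$ in the last case of Eq.~\eqref{eq:2qubitgate}.

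The one nontrivial input is the claim that conjugation by $U$, restricted to the span of the $15$ nontrivial normalized Paulis, is an irreducible representation of $\mathbb{U}(4)$. I would justify this by the standard fact that $\mathbb{C}^4\otimes(\mathbb{C}^4)^*$ decomposes under $\mathbb{U}(4)$ as $\mathbf{1}\oplus\mathrm{adj}$ with the adjoint piece irreducible (equivalently, $\mathfrak{su}(4)$ is a simple Lie algebra so its complexified adjoint is irreducible); the Pauli basis just diagonalizes the trivial/nontrivial split since $\{I^{\otimes 2}/2\}$ spans the trivial component and the traceless Hermitian matrices span the adjoint component. Once irreducibility is in hand, the rest is a direct application of Schur orthogonality, with the normalization constant $1/15$ fixed by taking $p=r$, $q=s$ and summing over $q$: $\sum_{q\neq I/2}\E_U\lmel{p}{\mc U}{q}^2 = \E_U \sum_q \lmel{p}{\mc U}{q}^2 - \E_U\lmel{p}{\mc U}{I/2}^2 = 1 - 0 = 1$, and there are $15$ equal terms.

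I expect the main obstacle to be purely expository rather than mathematical: stating the Schur-orthogonality computation cleanly in the $\mathsf{P}_n$ normalization (where the basis vectors have $\lbraket{p}{p}=1$, so $\mc U$ really is orthogonal and no extra normalization factors appear), and citing the irreducibility of the adjoint representation at the right level of detail. Since the lemma is attributed to~\cite{Harrow2009}, a fully rigorous alternative is to simply cite that reference and the standard Weingarten formula $\E_U U_{i_1 j_1}U_{i_2 j_2}\bar U_{k_1\ell_1}\bar U_{k_2\ell_2} = \sum_{\sigma,\tau\in S_2}\mathrm{Wg}(\sigma\tau^{-1},4)\,\delta_{i_1 k_{\sigma(1)}}\cdots$, and then verify the four cases by a short bookkeeping computation; I would likely present the representation-theoretic derivation as the clean conceptual proof and remark that the Weingarten route gives the same answer.
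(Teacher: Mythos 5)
Your argument is correct, but note that the paper itself does not prove this lemma at all: it is stated as a known fact and attributed to the reference~\cite{Harrow2009}, so there is no in-paper proof to compare against. Your representation-theoretic derivation is the standard self-contained route and it works: in the normalized Pauli basis the conjugation superoperator $\mc U$ is a real orthogonal $16\times 16$ matrix fixing $\lket{I^{\otimes 2}/2}$, trace- and unitality-preservation give the deterministic block structure (cases one through three of Eq.~\eqref{eq:2qubitgate} and the identity-index cases of Eq.~\eqref{eq:haarorthogonality}), the action on the traceless block factors through $PSU(4)$ and is the adjoint representation, whose complexification is irreducible because $\mathfrak{su}(4)$ is simple, and Schur orthogonality (using that the matrix is real, so no complex conjugates appear) gives $\E_U\lmel{p}{\mc U}{q}\lmel{r}{\mc U}{s}=\frac{1}{15}\delta_{pr}\delta_{qs}$ on that block; your row-sum argument correctly pins the constant $1/15$ even if one only takes from Schur's lemma that the second moment is proportional to $\delta_{pr}\delta_{qs}$. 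One small wording point: the case $p=q=I^{\otimes 2}/2$ with $r,s$ both nontrivial is not ``disposed of'' by the block structure alone, since $\lmel{p}{\mc U}{q}=1$ there; it requires $\E_U\lmel{r}{\mc U}{s}=0$, i.e.\ the vanishing first moment of the nontrivial irreducible block, which again follows from Schur orthogonality against the trivial representation, so this is a matter of phrasing rather than a gap. Your fallback of citing~\cite{Harrow2009} or doing the Weingarten bookkeeping is exactly what the paper implicitly does.
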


Eq.~\eqref{eq:haarorthogonality} is a key property which we refer to as gate-set orthogonality. It says that if we consider the Pauli basis decomposition and average over two copies of a random unitary, then the randomness forces the input and output Paulis to be the same across the two copies. Next we show that this property does not require full randomness over $\mathbb{U}(4)$; randomness over Pauli operators already suffices.

\begin{lemma}[Gate-set orthogonality]\label{lemma:gatesetorthogonality}
Let $\mc D$ be any distribution over $\mathbb{U}(4)$ that is invariant under right-multiplication of random Pauli, i.e. for any measurable function $F$,
\begin{equation}
    \E_{U\sim\mc D}[F(U)]=\E_{U\sim\mc D}\E_{V\sim\{I,X,Y,Z\}^2}[F(UV)].
\end{equation}
Then for any $P,Q\in \{I,X,Y,Z\}^2$ such that $P\neq Q$, we have
\begin{equation}
    \E_{U\sim\mc D}\left[U P U^\dag \otimes U Q U^\dag\right]=0.
\end{equation}
\end{lemma}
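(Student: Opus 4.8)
The plan is to use the right-Pauli-invariance of $\mathcal{D}$ to insert a uniformly random two-qubit Pauli after $U$, and then exploit that conjugating a Pauli by a Pauli only produces a global sign. Applying the invariance hypothesis to the (matrix-valued) function $F(U)=UPU^\dagger\otimes UQU^\dagger$ gives
\begin{equation}
\mathbb{E}_{U\sim\mathcal{D}}\!\left[UPU^\dagger\otimes UQU^\dagger\right]
=\mathbb{E}_{U\sim\mathcal{D}}\;\mathbb{E}_{V\sim\{I,X,Y,Z\}^2}\!\left[(UV)P(UV)^\dagger\otimes(UV)Q(UV)^\dagger\right].
\end{equation}
Since any two elements of the Pauli group commute or anticommute and $V$ is Hermitian with $V^2=I$, we have $VPV^\dagger=\epsilon_P(V)\,P$ and $VQV^\dagger=\epsilon_Q(V)\,Q$ for signs $\epsilon_P(V),\epsilon_Q(V)\in\{+1,-1\}$, hence $(UV)P(UV)^\dagger=\epsilon_P(V)\,UPU^\dagger$ and likewise for $Q$. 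The operator $UPU^\dagger\otimes UQU^\dagger$ carries no $V$-dependence, so it pulls out of the inner expectation, yielding
\begin{equation}
\mathbb{E}_{U}\!\left[UPU^\dagger\otimes UQU^\dagger\right]
=\Big(\,\mathbb{E}_{V}\;\epsilon_P(V)\,\epsilon_Q(V)\,\Big)\cdot\mathbb{E}_{U}\!\left[UPU^\dagger\otimes UQU^\dagger\right].
\end{equation}

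It then suffices to show the scalar $\mathbb{E}_{V}\,\epsilon_P(V)\epsilon_Q(V)$ vanishes whenever $P\neq Q$, since this forces the left-hand side to be $0$. The cleanest route is to identify each element of $\{I,X,Y,Z\}^{\otimes 2}$ with its binary symplectic vector in $\mathbb{F}_2^{4}$, under which $\epsilon_P(V)=(-1)^{\langle v,p\rangle}$ for the standard nondegenerate symplectic form $\langle\cdot,\cdot\rangle$. Then $\epsilon_P(V)\epsilon_Q(V)=(-1)^{\langle v,\,p+q\rangle}$, and averaging uniformly over $v\in\mathbb{F}_2^4$ gives $0$ unless $p+q=0$. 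Distinct Pauli matrices have distinct symplectic vectors, so $P\neq Q$ implies $p\neq q$; nondegeneracy then makes $v\mapsto\langle v,p+q\rangle$ a nontrivial linear functional, so exactly half the $v$ give $+1$ and half give $-1$, and the average is $0$.

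There is essentially no obstacle beyond this elementary sign computation — the content of the lemma is really the conceptual observation that right-Pauli-invariance alone (rather than full Haar randomness) suffices. As an alternative to the symplectic bookkeeping one can argue directly: $P\neq Q$ forces $P_i\neq Q_i$ on some qubit $i$, among $\{I,X,Y,Z\}$ there is always a single-qubit Pauli that commutes with one of $P_i,Q_i$ and anticommutes with the other, and factoring the uniform $V$-average over the $i$-th coordinate then kills the expectation. I would close with a remark that this is precisely the mechanism behind the gate-set orthogonality invoked later: when two Pauli paths $s\neq s'$ first differ at some gate, the corresponding two-copy average over that gate has the form $\mathbb{E}_{U}[UPU^\dagger\otimes UQU^\dagger]$ with $P\neq Q$ and hence vanishes; and since the Haar measure on $\mathbb{U}(4)$ is in particular right-Pauli-invariant, Lemma~\ref{lemma:gatesetorthogonality} is the randomness-reduced version of the orthogonality in Eq.~\eqref{eq:haarorthogonality} of Lemma~\ref{lemma:haarrandomgate}.
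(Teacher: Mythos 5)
Your proposal is correct and follows essentially the same route as the paper's proof: both use right-Pauli invariance to reduce to averaging $V P V^\dag \otimes V Q V^\dag$ over random Pauli $V$, which produces the sign $(-1)^{\langle V, PQ\rangle}$ (your $(-1)^{\langle v,\,p+q\rangle}$ in symplectic notation), and both conclude by noting that this character is nontrivial when $P\neq Q$ since $PQ$ is a non-identity Pauli, so it averages to zero. The symplectic-vector bookkeeping versus the paper's ``commutes with half, anticommutes with half'' phrasing is only a cosmetic difference.
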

\begin{proof}
    Due to invariance under right-multiplication of random Pauli and linearity, it suffices to prove that 
    \begin{equation}
        \E_{V\sim\{I,X,Y,Z\}^2}\left[V P V^\dag \otimes V Q V^\dag\right]=0 \quad \text{if }P\neq Q.
    \end{equation}
    Let $\langle P,Q \rangle:=1[P,Q\text{ anticommute}]$. Then
    \begin{equation}
    \begin{aligned}
        \E_{V\sim\{I,X,Y,Z\}^2}\left[V P V^\dag \otimes V Q V^\dag\right]&=\frac{1}{16}\sum_{V\in\{I,X,Y,Z\}^2}V P V^\dag \otimes V Q V^\dag\\
        &=\frac{1}{16}\sum_{V\in\{I,X,Y,Z\}^2}(-1)^{\langle V,P\rangle+\langle V,Q\rangle}P \otimes Q\\
        &=\frac{1}{16}\sum_{V\in\{I,X,Y,Z\}^2}(-1)^{\langle V,P Q\rangle}P \otimes Q\\
        &=0,
    \end{aligned}
    \end{equation}
where the last line follows from the fact that $PQ$ is not identity, and therefore commutes with half Paulis and anticommutes with the other half.
\end{proof}

Our main result holds for any gate set and architecture that satisfies gate-set orthogonality and anti-concentration. We discuss these two properties separately and start with orthogonality.

\begin{definition}[Gate set and architecture of random circuits]\label{def:architecture}
We consider random quantum circuits defined over a fixed architecture described as follows. In each layer, each qubit experiences a 2-qubit gate (so the number of qubits $n$ is even, and there are $n/2$ 2-qubit gates per layer). The 2-qubit gates can be applied to any pair of qubits, without geometric locality. Each 2-qubit gate is independently drawn from some distribution that is invariant under right-multiplication of random Pauli. The final layer is drawn from a distribution that is invariant under both left- and right-multiplication of random Pauli.
\end{definition}

Note that the requirement that each qubit experiences a 2-qubit gate in each layer is for convenience; more general architectures can be handled by a suitable redefinition of circuit depth (this was also noted in \cite{Deshpande2021tight}).

Examples of gate sets that satisfy Definition~\ref{def:architecture} include Haar random 2-qubit gates as well as a fixed 2-qubit gate surrounded by Haar random single qubit gates. A fixed 2-qubit gate surrounded by random Pauli gates also satisfies Definition~\ref{def:architecture} but may violate anti-concentration (see Remark~\ref{remark:randompauli}). Any ensemble of random circuits that satisfies Definition~\ref{def:architecture} has the following crucial property that we frequently use.

\begin{lemma}[Orthogonality of Fourier coefficients]\label{lemma:orthogonality}
Let $C$ be a random circuit drawn from some distribution $\mc D$ that satisfies Definition~\ref{def:architecture}. Then for any Pauli paths $s\neq s'\in\mathsf{P}_n^{d+1}$ and for any $x\in\{0,1\}^n$ we have
\begin{equation}\label{eq:orthogonality}
    \E_{C\sim \mc D}\left[f(C,s,x)f(C,s',x)\right]=0.
\end{equation}
\end{lemma}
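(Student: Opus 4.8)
The plan is to expand the product $f(C,s,x)f(C,s',x)$ as a product over the $d$ gate layers (plus the two boundary inner products), and to exploit the fact that the layers of $C$ are drawn independently so that the expectation factorizes into a product of single-layer expectations. Concretely, writing each $f(C,s,x)$ as $\lbraket{x}{s_d}\prod_{i=1}^d \lmel{s_i}{\mc U_i}{s_{i-1}}\lbraket{s_0}{0^n}$, the product $f(C,s,x)f(C,s',x)$ becomes $\lbraket{x}{s_d}\lbraket{x}{s'_d}\lbraket{s_0}{0^n}\lbraket{s'_0}{0^n}$ times $\prod_{i=1}^d \lmel{s_i}{\mc U_i}{s_{i-1}}\lmel{s'_i}{\mc U_i}{s'_{i-1}}$. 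Taking $\E_{C}$ and using independence of the layers, this equals the boundary factors times $\prod_{i=1}^d \E_{U_i}\!\left[\lmel{s_i}{\mc U_i}{s_{i-1}}\lmel{s'_i}{\mc U_i}{s'_{i-1}}\right]$.

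Next I would analyze a single layer-$i$ expectation. The layer $\mc U_i$ is itself a tensor product of $n/2$ independent $2$-qubit gates, each invariant under right-multiplication by a random Pauli (and the last layer additionally invariant under left-multiplication). The matrix element $\lmel{s_i}{\mc U_i}{s_{i-1}}$ factorizes over the $n/2$ gate blocks into a product of $2$-qubit transition amplitudes $\lmel{q}{\mc U}{p}$, and similarly for the primed path; since the gates within a layer are independent, the layer expectation factorizes over the $n/2$ blocks. For a single block acting on qubit pair $\{a,b\}$, I invoke Lemma~\ref{lemma:gatesetorthogonality}: if the restriction of $(s_{i-1},s_i)$ to $\{a,b\}$ differs from the restriction of $(s'_{i-1},s'_i)$ to $\{a,b\}$ — i.e.\ either the input Paulis $P\neq Q$ or the output Paulis differ — then $\E_U\!\left[\lmel{q}{\mc U}{p}\lmel{q'}{\mc U}{p'}\right]=0$, because $\lmel{q}{\mc U}{p}\lmel{q'}{\mc U}{p'} = \Tr\!\big((q\otimes q')\,(U\otimes U)(p\otimes p')(U\otimes U)^\dagger\big)$ and $\E_U[(U\otimes U)(p\otimes p')(U\otimes U)^\dagger]=0$ when $p\neq p'$ as Paulis on the $2$-qubit block; the output-side orthogonality follows symmetrically from the right-multiplication invariance applied to $U^\dagger$, or for the final layer directly from left-multiplication invariance. (One must also handle the normalization: $\lbraket{q}{\cdot}$ uses normalized Paulis, but this is just an overall constant.)

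Then I would argue that $s\neq s'$ forces at least one such vanishing block somewhere. If $s\neq s'$ there is some smallest index $j$ with $s_j\neq s'_j$. If $j\geq 1$, consider layer $j$: the input Paulis $s_{j-1}=s'_{j-1}$ agree but the output Paulis $s_j\neq s'_j$ differ, so they differ on some qubit pair, and the layer-$j$ expectation is $0$, killing the whole product. If instead $s=s'$ on all of $s_1,\dots,s_d$ but $s_0\neq s'_0$, then in layer $1$ the output Paulis agree while the input Paulis differ on some pair, again giving $0$; alternatively the boundary factor $\lbraket{s_0}{0^n}\lbraket{s'_0}{0^n}$ could be handled, but the layer-$1$ argument is cleanest. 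Hence in all cases $\E_C[f(C,s,x)f(C,s',x)]=0$.

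The main obstacle is bookkeeping rather than conceptual: one has to carefully match up the $2$-qubit tensor-factor structure of each layer with the qubit-pair partition, verify that $s\neq s'$ as length-$(d+1)$ Pauli path sequences really does induce a disagreement on some concrete $2$-qubit block in some layer (handling both the ``interior'' case via output Paulis and the ``$s_0$'' case via input Paulis of the first layer), and confirm that Lemma~\ref{lemma:gatesetorthogonality} applies symmetrically to the output side — for interior layers via right-multiplication invariance conjugating $U^\dagger$, and for the last layer via the explicit left-multiplication invariance in Definition~\ref{def:architecture}. I would also double-check the edge case where a layer's block sees agreeing Paulis on both sides so its expectation is nonzero but bounded, which is fine since we only need one vanishing factor. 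No heavy computation is needed beyond the identity $\E_U[(U\otimes U)P'(U\otimes U)^\dagger]=0$ for a nonidentity $2$-qubit Pauli $P'$, which is exactly Lemma~\ref{lemma:gatesetorthogonality}.
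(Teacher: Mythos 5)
Your overall strategy---factorize the expectation over independent layers and independent $2$-qubit blocks, then locate a single block where the two paths disagree and kill it with Lemma~\ref{lemma:gatesetorthogonality}---is the same as the paper's, but one key step is wrong. You claim that for an interior layer, a block whose \emph{output} Paulis differ (inputs agreeing) also has vanishing expectation, ``symmetrically from the right-multiplication invariance applied to $U^\dagger$.'' That symmetry does not hold: since $\lmel{q}{\mc U}{p}=\lmel{p}{\mc U^\dag}{q}$, output-side orthogonality for $\mc U$ is input-side orthogonality for $\mc U^\dag$, which requires the distribution of $U^\dag$ to be invariant under \emph{right} Pauli multiplication, i.e.\ $U$ invariant under \emph{left} multiplication. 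Definition~\ref{def:architecture} grants left-invariance only to the final layer; interior layers are only right-invariant, and for such gate sets the claim is genuinely false. For example, take $U=U_0V$ with $U_0$ a fixed non-Clifford $2$-qubit gate and $V$ a uniformly random $2$-qubit Pauli: with equal non-identity inputs $p=p'$ the random signs from $VpV^\dag=(-1)^{\langle V,p\rangle}p$ cancel in the product, so $\E_U\left[\lmel{q}{\mc U}{p}\lmel{q'}{\mc U}{p}\right]=\Tr(qU_0pU_0^\dag)\Tr(q'U_0pU_0^\dag)$, which can be nonzero for $q\neq q'$ when $U_0pU_0^\dag$ is supported on more than one Pauli.

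This matters because your case analysis leans on exactly this false step: for the smallest $j$ with $s_j\neq s'_j$ and $1\le j\le d-1$ you examine layer $j$, where the inputs agree and only the outputs differ. The repair is to look one layer later: at layer $j+1$ the \emph{input} Paulis $s_j\neq s'_j$ differ on some block, and right-invariance (Lemma~\ref{lemma:gatesetorthogonality}) makes that block's expectation vanish, killing the whole product by independence. The only situation with no later layer available is when $s$ and $s'$ differ solely at $s_d$; there you invoke the final layer's left-invariance, which is precisely why Definition~\ref{def:architecture} imposes it, and is how the paper handles that special case. With that substitution your argument coincides with the paper's proof; as written, the interior-layer case is unjustified.
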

\begin{proof}
    As $s\neq s'$, there exists a 2-qubit gate $U$ that contributes transition amplitude $\lmel{q_1}{\mc U}{p_1}$ to $f(C,s,x)$ and contributes $\lmel{q_2}{\mc U}{p_2}$ to $f(C,s',x)$, such that $p_1\neq p_2\in\mathsf{P}_2$. Lemma~\ref{lemma:gatesetorthogonality} implies that
    \begin{equation}
        \E_{U}\left[\lmel{q_1}{\mc U}{p_1}\lmel{q_2}{\mc U}{p_2}\right]=0.
    \end{equation}
    Due to the independence between different gates, we can separately calculate the expectation over each gate in Eq.~\eqref{eq:orthogonality}. Therefore the above equation implies that the overall expectation in Eq.~\eqref{eq:orthogonality} equals 0. One special case is that the difference between $s$ and $s'$ happens at the last step $s_{d}$. For this case we use the left-invariance under random Pauli of the final layer of gates.
\end{proof}

Next we discuss anti-concentration, which is formally defined as follows.

\begin{definition}[Anti-concentration]\label{def:anticoncentration}
A distribution over quantum circuits $\mc D$ satisfies anti-concentration if
\begin{equation}
    \E_{C\sim\mc D}2^n\sum_{x\in\{0,1\}^n}p(C,x)^2=O(1).
\end{equation}
\end{definition}
\begin{remark}\label{remark:anticoncentration}
The following is known about anti-concentration:
\begin{itemize}
    \item \cite{barak2021spoofing,Dalzell2022random} showed that anti-concentration is satisfied for 1D random circuits with Haar random 2-qubit gates as long as circuit depth is above some constant times $\log n$.
    \item \cite{Dalzell2022random} also showed that $\Theta(n\log n)$ 2-qubit gates are necessary and sufficient for anti-concentration for a stochastic all-to-all connected architecture with Haar random 2-qubit gates.
    \item \cite{Dalzell2022random,Deshpande2021tight} showed that at least $\Omega(\log n)$ depth is necessary for anti-concentration, for any architecture with Haar random 2-qubit gates. We also give a simple proof of this fact using the Pauli basis framework in Corollary~\ref{cor:aclowerbound}.
    \item \cite{Dalzell2022random} remarked that, as anti-concentration is proven for two architectures which are two opposite extremes of geometric locality, they conjecture $\Theta(n\log n)$ size (which is $\Theta(\log n)$ depth in our case) to be necessary and sufficient for anti-concentration for any reasonably well-connected architecture.
\end{itemize}
\end{remark}

\begin{remark}\label{remark:randompauli}
The results discussed in Remark~\ref{remark:anticoncentration} only concern Haar random 2-qubit gates. We expect the same results to hold for a fixed 2-qubit gate surrounded by Haar random single qubit gates. It is worth mentioning that while a fixed 2-qubit gate surrounded by random Pauli gates satisfies Definition~\ref{def:architecture}, we do not expect it to satisfy anti-concentration, due to the fact that it does not generate the entire Clifford group when, for example, the 2-qubit gate is a CNOT gate.
\end{remark}

The reason for requiring anti-concentration for our results is because it is closely related to the Fourier weights of random circuits, which is then related to the error of the simulation algorithm.

\begin{definition}[Fourier weight]
The \textbf{Fourier weight} of a random circuit $C$ at degree $k$ is defined as
\begin{equation}
    W_k=2^{2n}\E_C\sum_{s\in\mathsf{P}_n^{d+1}:|s|=k}f(C,s,0^n)^2.
\end{equation}
\end{definition}

Here the $2^{2n}$ factor is a normalization factor that comes from Eq.~\eqref{eq:boundary}. A crucial property for our arguments is that anti-concentration implies that the total Fourier weight is upper bounded by a constant.

\begin{lemma}[Total Fourier weight]\label{lemma:totalfourierweight}
Let $\mc D$ be a distribution over quantum circuits that satisfies anti-concentration and Definition~\ref{def:architecture}. The Fourier weights $\{W_k\}$ satisfy
\begin{enumerate}
    \item $W_0=1$,
    \item $W_k=0$, $\forall 0<k\leq d$,
    \item $\sum_{k\geq d+1}W_k=O(1)$.
\end{enumerate}
\end{lemma}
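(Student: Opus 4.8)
The plan is to prove the three statements about the Fourier weights $W_k$ in order, using orthogonality (Lemma~\ref{lemma:orthogonality}) to turn sums of squared Fourier coefficients into quantities we can control, together with the explicit structure of the boundary terms $\lbraket{s_0}{0^n}$ and the per-gate moments from Lemma~\ref{lemma:haarrandomgate} (or more precisely the gate-set orthogonality of Definition~\ref{def:architecture}).

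\textbf{Statement 1: $W_0 = 1$.} There is exactly one Pauli path of Hamming weight $0$, namely $s = (I^{\otimes n}/\sqrt{2^n}, \dots, I^{\otimes n}/\sqrt{2^n})$. Along this path every transition amplitude $\lmel{I/\sqrt{2^n}}{\mc U_i}{I/\sqrt{2^n}}$ equals $1$ (a unitary channel fixes the normalized identity), and the boundary terms are $\lbraket{I^{\otimes n}/\sqrt 2^n}{0^n} = \Tr(\ketbra{0^n}{0^n} I/\sqrt{2^n}) = 1/\sqrt{2^n}$ at each end, so $f(C,s,0^n) = 1/2^n$ deterministically. Hence $W_0 = 2^{2n}\cdot (1/2^n)^2 = 1$.

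\textbf{Statement 2: $W_k = 0$ for $0 < k \le d$.} First I would observe that by orthogonality (Lemma~\ref{lemma:orthogonality}), $\E_C\sum_{|s|=k} f(C,s,0^n)^2 = \E_C (\sum_{|s|=k} \pm f(C,s,0^n))^2$ is not quite what we want; instead the cleaner route is to note that each $\E_C f(C,s,0^n)^2$ factorizes over the independent gates, and a single-gate factor $\E_U \lmel{q}{\mc U}{p}^2$ vanishes whenever exactly one of $p,q$ is the (two-qubit) identity (second and third cases of Eq.~\eqref{eq:2qubitgate}, which follow from gate-set orthogonality). This forces a ``legality'' constraint: if $f(C,s,0^n)^2$ has nonzero expectation, then each layer $U_i$ maps identity-on-both-qubits to identity-on-both-qubits and non-identity to non-identity. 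Consequently the set of qubits supporting $s_i$ can only grow or shrink through the light-cone structure, and — this is the key combinatorial point — a path with $0 < |s| \le d$ must have some layer $i$ where $s_{i-1}$ and $s_i$ disagree on exactly one qubit of some gate in a way that produces an identity/non-identity mismatch, unless the path is supported everywhere or nowhere. More carefully: if $s_j = I^{\otimes n}$ for some intermediate $j$ but $s \ne \mathbf{0}$, there is a layer adjacent to the ``block'' of non-identity $s_i$'s across which some gate sees identity on input and non-identity on output (or vice versa), killing the expectation; and if no intermediate $s_j$ is all-identity then $|s| \ge d$ because each of the $d+1$ coordinates contributes at least... actually each of $s_0,\dots,s_d$ must be non-identity, giving $|s| \ge d+1 > d$. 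The boundary terms at $s_0$ and $s_d$ need the left/right Pauli invariance of the first/last layer to handle the edge cases. I expect \emph{this} step — carefully formalizing which low-weight paths survive the expectation and showing the survivors all have weight $\ge d+1$ — to be the main obstacle, though it is essentially a light-cone/support-growth argument.

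\textbf{Statement 3: $\sum_{k \ge d+1} W_k = O(1)$.} Here I would combine orthogonality with anti-concentration directly. By Lemma~\ref{lemma:orthogonality}, for distinct Pauli paths the cross terms vanish in expectation, so
\begin{equation}
\E_C \sum_{x} p(C,x)^2 = \E_C \sum_x \Bigl(\sum_s f(C,s,x)\Bigr)^2 = \E_C \sum_x \sum_s f(C,s,x)^2 = \sum_k 2^n \cdot 2^{-2n} W_k \cdot (\text{?}),
\end{equation}
where I use Eq.~\eqref{eq:boundarysign} to replace $f(C,s,x)^2$ by $f(C,s,0^n)^2$ and then sum over the $2^n$ values of $x$. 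Being careful with the normalization: $\sum_x f(C,s,x)^2 = 2^n f(C,s,0^n)^2$ only if all $2^n$ signs are ``active,'' but in fact $\sum_x \lbraket{x}{s_d}^2 = \Tr(s_d^2) = 1$ summed appropriately, so $\sum_x f(C,s,x)^2 = f(C,s,0^n)^2 \cdot 2^n$ needs the identity $\sum_x \Tr(\ketbra{x}{x} s_d)^2 = 1$; tracking constants gives $\E_C \sum_x p(C,x)^2 = 2^{-n}\sum_k W_k$. Then Definition~\ref{def:anticoncentration} says the left side is $2^{-n}\cdot O(1)$, hence $\sum_k W_k = O(1)$, and in particular $\sum_{k \ge d+1} W_k = O(1)$. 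Combined with Statements 1 and 2 this also pins down $\sum_{k\ge d+1} W_k = O(1) - 1 = O(1)$. The only care needed is the bookkeeping of the $2^{2n}$, $2^n$, and the $\lbraket{x}{s}$ normalization factors from Eq.~\eqref{eq:boundary}.
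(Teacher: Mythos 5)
Your proposal is correct and follows essentially the same route as the paper: the unique all-identity path for $W_0$, an identity/non-identity mismatch at some gate for $0<k\le d$ (since a path of weight at most $d$ must have an all-identity layer adjacent to a non-identity layer, while avoiding this forces weight at least $d+1$), and orthogonality plus Eq.~\eqref{eq:boundarysign} to rewrite anti-concentration as $1+\sum_{k\ge d+1}W_k=O(1)$. The one simplification you missed in the step you flagged as the main obstacle is that $\lmel{q}{\mc U}{p}=\Tr(q U p U^\dag)=0$ holds deterministically for \emph{every} unitary (by trace preservation and unitality) whenever exactly one of $p,q$ is the identity, so each such $f(C,s,0^n)$ vanishes pointwise and no Haar averaging, light-cone tracking, or boundary-layer Pauli invariance is needed for item 2.
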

\begin{proof}
    $W_0=1$ corresponds to the unique all-identity path. Let $s$ be a Pauli path of Hamming weight $k=|s|\in (0,d]$. Then there exists a 2-qubit gate $U$ that contributes a transition amplitude $\lmel{q}{\mc U}{p}$ to $f(C,s,0^n)$, where either $p$ is identity and $q$ is non-identity, or vice versa. In either case we have $\lmel{q}{\mc U}{p}=0$. This implies that $W_k=0$.
    
    To bound the total weight, we start with anti-concentration.
    \begin{equation}
        \begin{aligned}
            O(1)&=\E_{C\sim\mc D}2^n\sum_{x\in\{0,1\}^n}p(C,x)^2\\
            &=\E_{C\sim\mc D}2^n\sum_{x\in\{0,1\}^n}\left(\sum_{s\in\mathsf{P}_n^{d+1}}f(C,s,x)\right)^2\\
            &=\E_{C\sim\mc D}2^n\sum_{x\in\{0,1\}^n}\sum_{s,s'\in\mathsf{P}_n^{d+1}}f(C,s,x)f(C,s',x)\\
            &=\E_{C\sim\mc D}2^n\sum_{x\in\{0,1\}^n}\sum_{s\in\mathsf{P}_n^{d+1}}f(C,s,x)^2\\
            &=2^{2n}\E_{C\sim\mc D}\sum_{s\in\mathsf{P}_n^{d+1}}f(C,s,0^n)^2\\
            &=2^{2n}\E_{C\sim\mc D}\sum_{k\geq 0}\sum_{s\in\mathsf{P}_n^{d+1}:|s|=k}f(C,s,0^n)^2\\
            &=1+\sum_{k\geq d+1} W_k.
        \end{aligned}
    \end{equation}
Here, the first line follows from anti-concentration; the second line follows from the Pauli path integral; the fourth line follows from orthogonality (Lemma~\ref{lemma:orthogonality}); the fifth line follows from Eq.~\eqref{eq:boundarysign}.
\end{proof}

Finally we give a detailed clarification regarding the assumptions we make about the architecture and gate set for our main result.
\begin{remark}\label{remark:assumption}
For our main result Theorem~\ref{thm:mainresult}, we assume Definition~\ref{def:architecture} and anti-concentration as defined in Definition~\ref{def:anticoncentration}.
\begin{itemize}
    \item If the gate set is Haar random 2-qubit gates, no further assumption is needed.
    \item If not, then we further assume that the circuit depth is at least $\Omega(\log n)$. This is because our algorithm requires $\Omega(\log n)$ depth to be efficient, and we cannot rule out the possibility that there is an ensemble of random circuits below log depth that satisfies both Definition~\ref{def:architecture} and \ref{def:anticoncentration}.
\end{itemize}
\end{remark}

\section{Simulating noisy random circuit sampling}
\label{sec:mainresult}
Given a random circuit $C$ and an output $x$, let $p(C,x)=\left|\mel{x}{C}{0^n}\right|^2$ be the ideal output distribution and let $\tilde{p}(C,x)$ be the output distribution of the noisy circuit where $C$ is subject to local depolarizing noise of rate $\gamma$. This section shows the following:

\begin{theorem}[Restatement of Theorem~\ref{thm:mainresult}]
Let $\mc D$ be a distribution over quantum circuits that satisfies anti-concentration and Definition~\ref{def:architecture} (also see Remark~\ref{remark:assumption}). There is a classical algorithm that, on input $C\sim \mc D$, outputs a sample from a distribution that is $\varepsilon$-close to $\tilde{p}(C,x)$ in total variation distance with success probability at least $1-\delta$ over the choice of $C$, in time $\poly(n,1/\varepsilon,1/\delta)$.
\end{theorem}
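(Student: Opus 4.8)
The plan is to analyze Algorithm~\ref{alg:mainsimulation}: it evaluates the weight-$\le\ell$ truncation $q(C,x):=\sum_{s:\,|s|\le\ell}(1-\gamma)^{|s|}f(C,s,x)$ of the noisy Pauli path integral for a suitable truncation level $\ell$, and a sampler is then built on top of this evaluation routine. Three ingredients are needed: (a) for the right $\ell$, $q(C,\cdot)$ is $\ell_1$-close to $\tilde p(C,\cdot)$ in expectation over $C$; (b) $q(C,\cdot)$, and more generally its coordinate marginals, are computable in time $\poly(n,1/\varepsilon,1/\delta)$; (c) $\ell_1$-closeness upgrades to a sampler that is $\varepsilon$-close to $\tilde p(C,\cdot)$ with probability $\ge1-\delta$ over $C$. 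I expect (b) --- the efficient enumeration of low-weight Pauli paths --- to be the crux.

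For (a): since $\tilde p(C,x)-q(C,x)=\sum_{s:\,|s|>\ell}(1-\gamma)^{|s|}f(C,s,x)$, Cauchy--Schwarz over the $2^n$ outcomes and then Jensen over $C$ give
\begin{equation}
\E_C\sum_x\bigl|\tilde p(C,x)-q(C,x)\bigr|\;\le\;\sqrt{2^n\,\E_C\sum_x\Bigl(\sum_{s:\,|s|>\ell}(1-\gamma)^{|s|}f(C,s,x)\Bigr)^{2}}.
\end{equation}
Expanding the square, orthogonality (Lemma~\ref{lemma:orthogonality}) kills all cross terms, and Eq.~\eqref{eq:boundarysign} together with the definition of $W_k$ collapses the bound to $\sqrt{\sum_{k>\ell}(1-\gamma)^{2k}W_k}$. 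By Lemma~\ref{lemma:totalfourierweight}, $W_k=0$ for $0<k\le d$ and $\sum_{k\ge d+1}W_k=O(1)$, so for any $\ell\ge d$ this is at most $(1-\gamma)^{\ell}\cdot O(1)$. Taking $\ell=\Theta\!\left(d+\tfrac1\gamma\log\tfrac1{\varepsilon\delta}\right)$ drives the expectation below $\varepsilon\delta/4$, so by Markov's inequality $\|\tilde p(C,\cdot)-q(C,\cdot)\|_1\le\varepsilon/4$ except with probability $\delta$ over $C$.

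For (b): a path with $f(C,s,x)\neq0$ must be \emph{legal}: $s_0$ and $s_d$ are products of $I$ and $Z$, and at every gate the two-qubit Pauli on the input side is $I\otimes I$ exactly when it is on the output side. This connectivity constraint has two consequences. First, a nonzero legal path is non-identity in every one of the $d+1$ layers, so $|s|\ge d+1$; hence only $d+1\le|s|\le\ell$ contribute, and the excess weight $|s|-(d+1)$ is only $O(\tfrac1\gamma\log\tfrac1{\varepsilon\delta})$. Second, viewing the support in spacetime, a minimum-weight ($|s|=d+1$) path is a single-qubit ``worldline'' whose qubit at each layer is one of the $\le2$ qubits of the next gate acting on its current qubit in the known circuit --- so there are only $n\cdot2^{O(d)}$ such paths --- and by a standard bounded-degree connectivity/subtree count the total number of legal paths of weight $\le\ell$ is $\poly(n)\cdot2^{O(\ell)}$. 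Crucially, these paths can be produced directly by a depth-first branching search that follows the circuit layer by layer, expands only legal extensions, and prunes once the weight budget is exceeded; evaluating $f(C,s,x)$ for each path found costs $O(nd)$ (a product of $2$-qubit transition amplitudes). With $\ell$ as above this is $\poly(n,1/\varepsilon,1/\delta)$, albeit with a $1/\gamma$ factor in the exponent. Converting the structural description of legal paths into both the $\poly(n)\cdot2^{O(\ell)}$ bound and a search that never touches a non-contributing path is the main work.

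For (c): the marginals $q(C,x_1\cdots x_k):=\sum_{x_{k+1},\dots,x_n}q(C,x_1\cdots x_n)$ are again truncated Pauli path sums --- tracing out the last $n-k$ qubits simply forces $s_d$ to be identity on them and rescales the boundary value --- hence computable within the same time bound. Feeding these into the sampling-to-computing reduction of~\cite{Bremner2017achievingquantum} (which draws the output bit by bit from approximate conditional marginals, clamping the small out-of-range values arising because $q$ need not be a probability distribution) produces a sampler whose output is within $O(\|\tilde p(C,\cdot)-q(C,\cdot)\|_1)$ of $\tilde p(C,\cdot)$ in total variation distance. Combining with (a) and absorbing constants into $\ell$, the output is $\varepsilon$-close to $\tilde p(C,\cdot)$ except with probability $\delta$ over $C$; the cost is $O(n)$ marginal evaluations, each $\poly(n,1/\varepsilon,1/\delta)$, hence $\poly(n,1/\varepsilon,1/\delta)$ in total.
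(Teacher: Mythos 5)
Your steps (a) and (c) match the paper's argument essentially verbatim (the paper bounds $\E_C[\Delta^2]$ and applies Markov, which is the same as your Cauchy--Schwarz/Jensen route, and it uses the same Bremner et al.\ reduction with the same marginal trick). The gap is in (b), which you yourself flag as the crux but whose sketch would not deliver the claimed runtime. A layer-by-layer depth-first search anchored at layer $0$ that ``prunes once the weight budget is exceeded'' must iterate over all candidate initial layers $s_0$, i.e.\ all $I/Z$ patterns of weight up to $\ell-d$; there are $\binom{n}{\ell-d}=n^{\Theta(\ell-d)}$ of these that pass a weight-budget prune, and with $\ell-d=\Theta(\gamma^{-1}\log(1/\varepsilon))$ this is $n^{\Theta(\log(1/\varepsilon))}$ --- quasi-polynomial in $1/\varepsilon$ (and quasi-polynomial in $n$ once $\varepsilon=1/\poly(n)$, the regime needed for Corollary~\ref{cor:indistinguishability}). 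That is exactly the Gao--Duan-type $n^{O(\log 1/\varepsilon)}$ barrier the theorem is supposed to beat; budget pruning alone does not fix it, because the first layer of a cheap path can legitimately carry weight $\Theta(\ell-d)\gg\ell/d$. The paper's enumeration (Lemma~\ref{lemma:legalpaths}) avoids this by first guessing the per-layer weight profile ($\le 2^{\ell}$ choices), anchoring at the layer of \emph{minimum} weight $w_t\le \ell/d$ (only $\binom{n}{w_t}\le n^{\ell/d}$ configurations), and then propagating both forwards and backwards with at most $3^{w}$ extensions per layer; the factor $n^{\ell/d}$ is then absorbed into $2^{O(\ell)}$ precisely because anti-concentration forces $d=\Omega(\log n)$.

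Relatedly, your counting claim that a ``standard bounded-degree connectivity/subtree count'' gives $\poly(n)\cdot 2^{O(\ell)}$ legal paths glosses over the fact that the support of a legal path need not be connected: it can consist of up to $\lfloor \ell/(d+1)\rfloor$ disjoint clusters, each of which must span all $d+1$ layers, and each cluster needs its own anchor location, contributing an $n^{\ell/d}$ factor (this is the paper's Lemma~\ref{lemma:pathsfromanticoncentration}/\ref{lemma:legalpaths} bound, not $\poly(n)\cdot 2^{O(\ell)}$ in general --- it is genuinely larger at sublogarithmic depth). A cluster-by-cluster anchored growth of this kind could be turned into a correct enumeration algorithm, but the forward DFS you describe is not it, and the dependence on $d=\Omega(\log n)$ must be invoked explicitly to land at $\poly(n,1/\varepsilon,1/\delta)$.
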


Our goal is to compute a function $\bar{q}(C,x)$ that achieves small $L_1$ distance
\begin{equation}
    \Delta:=\left\|\tilde{p}-\bar{q}\right\|_1:=\sum_{x\in\{0,1\}^n}\left|\tilde{p}(C,x)-\bar{q}(C,x)\right|
\end{equation}
with high probability. Here $\{\bar{q}(C,x)\}_x$ is not necessarily a distribution, and $\bar{q}(C,x)$ is not necessarily positive (the bar notation indicates that $\bar{q}$ is a quasi-probability distribution). The main result is derived in three steps:
\begin{enumerate}
    \item We use a general sampling-to-computing reduction shown by~\cite{Bremner2017achievingquantum} which says that given the ability to compute $\bar{q}(C,x)$ as well as its marginals, we can sample from a distribution that is $O(\Delta)$-close to $\tilde{p}(C,x)$ with a polynomial overhead. This is discussed in Section~\ref{sec:samplingtocomputing}. It remains to develop an efficient algorithm to compute $\bar{q}(C,x)$ and its marginals.
    \item The algorithm is to approximate $\tilde{p}(C,x)$ by summing its low-degree Fourier coefficients, defined as
    \begin{equation}
        \bar{q}(C,x):=\sum_{s:|s|\leq\ell}\tilde{f}(C,s,x)=\sum_{s:|s|\leq\ell}(1-\gamma)^{|s|}f(C,s,x),
    \end{equation}
    where $\ell$ is to be determined. In Section~\ref{sec:boundtvd} we upper bound the total variation distance $\Delta$ achieved by this approximation. It shows that choosing $\ell=O(\log 1/\varepsilon)$ suffices to achieve $\varepsilon$ total variation distance.
    \item It remains to bound the running time of the algorithm. In Section~\ref{sec:counting} which is the main technical part, we show that each $\bar{q}(C,x)$ can be computed in time $2^{O(\ell)}$. This completes the argument.
\end{enumerate}

\subsection{Bounds for the total variation distance}
\label{sec:boundtvd}
We show that the expected total variation distance square is upper bounded by an exponential decay of the Fourier weights.

\begin{equation}\label{eq:tvdbound}
    \begin{aligned}
        \E_C \left[\Delta^2\right]&\leq 2^n\E_C\sum_{x\in\{0,1\}^n}\left(\tilde p(C,x)-\bar{q}(C,x)\right)^2\\
        &=2^n\E_C\sum_{x\in\{0,1\}^n}\left(\sum_{s:|s|>\ell}(1-\gamma)^{|s|}f(C,s,x)\right)^2\\
        &=2^n\E_C\sum_{x\in\{0,1\}^n}\sum_{s:|s|>\ell}(1-\gamma)^{2|s|}f(C,s,x)^2\\
        &=2^{2n}\E_C\sum_{s:|s|>\ell}(1-\gamma)^{2|s|}f(C,s,0^n)^2\\
        &=\sum_{k>\ell}(1-\gamma)^{2k}W_k.
    \end{aligned}
\end{equation}
Here, the first line follows from Cauchy–Schwarz; the second line is by definition of $\bar{q}$; the third line follows from orthogonality (Lemma~\ref{lemma:orthogonality}); the fourth line follows from Eq.~\eqref{eq:boundarysign}; the fifth line is by definition of Fourier weight.

A simple upper bound can be derived assuming anti-concentration (item 3 from Lemma~\ref{lemma:totalfourierweight}),
\begin{equation}
    \E_C \left[\Delta^2\right]\leq\sum_{k>\ell}(1-\gamma)^{2k}W_k\leq \sum_{k>\ell}(1-\gamma)^{2\ell}W_k\leq O(1)\cdot e^{-2\gamma\ell}.
\end{equation}
By choosing $\ell=O(\log 1/\varepsilon)$ (roughly $\ell\approx\frac{1}{\gamma}\cdot\log 1/\varepsilon$) we can guarantee that $\Delta\leq\varepsilon$ with high probability.

\subsection{Counting and enumerating legal Pauli paths}
\label{sec:counting}
For a given truncation parameter $\ell$, the running time of the algorithm depends on the number of Pauli paths with Hamming weight at most $\ell$, as well as the efficiency for finding and enumerating these paths. A simple argument for bounding the number of paths is as follows. There are $n(d+1)$ locations in the circuit to insert Pauli paths. The total number of ways to insert $\ell$ non-identity Pauli into the Pauli path is at most $\binom{n(d+1)}{\ell}$, and the choice of $X,Y,Z$ for each non-identity gives a $3^\ell$ factor. Therefore the total number of paths with Hamming weight at most $\ell$ is at most
\begin{equation}
    \ell\cdot \binom{n(d+1)}{\ell}\cdot 3^\ell\leq (n d)^{O(\ell)}.
\end{equation}
In this section we show that this bound is a significant overestimate and can be improved to $2^{O(\ell)}$. The key point here is that only the ``legal'' paths matters, and therefore we design an algorithm that only counts and enumerates legal paths.

\begin{definition}[Legal Pauli path]\label{def:legalpath}
For a given circuit architecture, a Pauli path $s=(s_0,s_1,\dots,s_d)$ is legal if the following two conditions are satisfied:
\begin{enumerate}
    \item For all 2-qubit gates in the circuit, its input and output Paulis are either both \texttt{II}, or both not \texttt{II}.
    \item $s_0$ and $s_d$ contains only \texttt{I} and \texttt{Z}.
\end{enumerate}
\end{definition}
The reason for considering legal Pauli paths is that the illegal ones are irrelevant, as they contribute 0 to the Pauli path integral.
\begin{lemma}
    Any illegal Pauli path $s$ gives $f(C,s,x)=0$ for any $C$ and $x$.
\end{lemma}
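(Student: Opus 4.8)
The plan is to observe that $f(C,s,x)$ is literally a product of scalar factors --- one per $2$-qubit gate in the circuit, plus the two boundary overlaps $\lbraket{s_0}{0^n}$ and $\lbraket{x}{s_d}$ --- and then to exhibit a vanishing factor for each of the two ways a path can fail to be legal. The factorization step is routine: each layer unitary $U_i$ is a tensor product of $2$-qubit gates on disjoint qubit pairs, so writing $p_g,q_g\in\mathsf{P}_2$ for the restrictions of $s_{i-1},s_i$ to the qubits acted on by a gate $g$, we get $\lmel{s_i}{\mc U_i}{s_{i-1}}=\prod_{g\in\text{layer }i}\lmel{q_g}{\mc U_g}{p_g}$, and hence $f(C,s,x)=\lbraket{x}{s_d}\cdot\big(\prod_{\text{all gates }g}\lmel{q_g}{\mc U_g}{p_g}\big)\cdot\lbraket{s_0}{0^n}$.

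First I would handle a violation of condition~1 of Definition~\ref{def:legalpath}: some gate $g$ has input Pauli $p_g$ equal to the normalized identity $I^{\otimes 2}/2$ while $q_g\neq I^{\otimes 2}/2$, or vice versa. In the first subcase $\lmel{q_g}{\mc U_g}{p_g}=\Tr(q_g\,U_g(I/2)U_g^\dagger)=\tfrac12\Tr(q_g)=0$ because $q_g$ is a non-identity (normalized) $2$-qubit Pauli and thus traceless; in the second subcase $\lmel{q_g}{\mc U_g}{p_g}=\tfrac12\Tr(U_g p_g U_g^\dagger)=\tfrac12\Tr(p_g)=0$ for the same reason. Either way one factor of the product is $0$, so $f(C,s,x)=0$. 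Next I would handle a violation of condition~2: $s_0$ (say) contains an $X$ or $Y$ on some qubit $j$. Writing $s_0=\bigotimes_k P_k/\sqrt2$ with $P_k\in\{I,X,Y,Z\}$, we have $\lbraket{s_0}{0^n}=\Tr(s_0\ketbra{0^n})=2^{-n/2}\prod_k\bra{0}P_k\ket{0}$, and since $\bra{0}X\ket{0}=\bra{0}Y\ket{0}=0$ this boundary factor vanishes. The identical computation with $\bra{x}$ in place of $\bra{0^n}$ --- using $\bra{b}X\ket{b}=\bra{b}Y\ket{b}=0$ for $b\in\{0,1\}$ --- kills the factor $\lbraket{x}{s_d}$ when $s_d$ contains an $X$ or $Y$. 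This is exactly the ``when does Eq.~\eqref{eq:boundary} vanish'' statement, so it could alternatively be cited from there. Since every illegal path falls into at least one of these two cases, $f(C,s,x)=0$ for all $C$ and $x$, as claimed.

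There is no substantive obstacle: the only thing requiring care is the bookkeeping in the factorization, namely confirming that $f(C,s,x)$ really does split into per-gate transition amplitudes and the two boundary overlaps with no cross-terms, after which each case is a one-line trace evaluation using that non-identity Paulis are traceless and that $X,Y$ have zero diagonal in the computational basis. (It is also worth recording, for consistency with Lemma~\ref{lemma:totalfourierweight}, that a gate with $p_g=q_g=I^{\otimes 2}/2$ contributes the harmless factor $\lmel{I^{\otimes2}/2}{\mc U_g}{I^{\otimes2}/2}=\Tr\!\big((I/2)(I/2)\big)=1$, which is why the ``both \texttt{II}'' option in condition~1 is allowed.)
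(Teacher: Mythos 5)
Your proof is correct and follows essentially the same route as the paper: a violation of condition~2 kills a boundary overlap (the paper's Eq.~\eqref{eq:boundary}), and a violation of condition~1 kills a per-gate transition amplitude via trace preservation, i.e.\ tracelessness of non-identity Paulis. The extra care you take with the per-gate factorization and normalizations is sound but not a different argument.
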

\begin{proof}
Let $s$ be an illegal Pauli path. Then there are two cases: either the first or the second condition of Definition~\ref{def:legalpath} is violated. If the second condition is violated, then $f(C,s,x)=0$ because the inner product between computational basis states with $s_0$ or $s_d$ equals 0, due to the fact that
\begin{equation}
    \lbraket{x}{s}=\Tr(\ketbra{x}\cdot s)=0,\quad \forall x\in\{0,1\}^n,s\notin\{I/\sqrt{2},Z/\sqrt{2}\}^{\otimes n}.
\end{equation}
If the first condition is violated, then there is a 2-qubit gate $U$ whose input Pauli is \texttt{II} and the output is not \texttt{II}, or vice versa. Then $f(C,s,x)=0$ because the transition amplitude contributed by $U$ equals 0 due to the fact that unitary channel is trace preserving, i.e.
\begin{equation}
\begin{aligned}
    \lmel{p}{\mc U}{q}=\Tr(p U q U^\dag)=0\quad&\text{if }p=I\otimes I/2,q\neq I\otimes I/2,\\
    &\text{or }p\neq I\otimes I/2,q= I\otimes I/2.
\end{aligned}
\end{equation}
\end{proof}

Next we develop arguments to count legal paths. The number of legal Pauli paths up to a given Hamming weight is a combinatorial property that only depends on the circuit architecture, independent of the gate set.

We first give a simple example that counts the number of legal paths with weight $d+1$. Lemma~\ref{lemma:totalfourierweight} says that $d+1$ is the smallest non-zero Hamming weight with legal paths. The result below is interesting by itself, as we will show later that this result gives a simple lower bound on the depth for anti-concentration (Corollary~\ref{cor:aclowerbound}).

\begin{lemma}\label{lemma:smallestweightpaths}
    The number of legal Pauli paths with Hamming weight $d+1$ equals $n\cdot 2^d\cdot 3^{d-1}$. 
\end{lemma}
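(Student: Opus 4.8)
The plan is to count legal Pauli paths $s=(s_0,\dots,s_d)$ of the minimum nonzero Hamming weight $d+1$ by establishing a rigidity statement: such a path must consist of exactly one nonidentity single-qubit Pauli at each of the $d+1$ time slices, and the "support" of these Paulis traces out a single connected trajectory through the circuit that is forced by the legality conditions. Recall from Lemma~\ref{lemma:totalfourierweight} (via the legality analysis) that every 2-qubit gate has input and output Paulis either both \texttt{II} or both not \texttt{II}. I would first argue that a legal path of weight $d+1$ has $|s_i|\ge 1$ for every $i\in\{0,\dots,d\}$: if some $s_i$ were the all-identity Pauli, then tracing through the gates in layer $i$ (and layer $i+1$) forces the neighboring slices to be all-identity too, and by induction the whole path is trivial, contradicting weight $d+1>0$. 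Since the total weight is $d+1$ and each of the $d+1$ slices contributes at least $1$, each slice contributes exactly $1$; i.e. every $s_i$ is a weight-one Pauli supported on a single qubit $v_i$.

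Next I would show the trajectory $(v_0,v_1,\dots,v_d)$ of occupied qubits is "continuous" through the gate structure: the 2-qubit gate in layer $i+1$ acting on the pair containing $v_i$ must have non-\texttt{II} input (because $s_i$ restricted to that pair is non-identity), hence non-\texttt{II} output, so $v_{i+1}$ lies in that same pair — and conversely $v_{i+1}$ cannot lie outside it, else that gate would have \texttt{II} input but non-\texttt{II} output. So at each step the occupied qubit either stays put or hops to its gate-partner, and in either case the new single-qubit Pauli $s_{i+1}$ on $v_{i+1}$ can be any of $X,Y,Z$ (the $2$-qubit gate is a generic unitary, but we are only counting combinatorially, so all $3$ choices are "legal" in the sense of Definition~\ref{def:legalpath}). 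This gives the counting: choose the starting qubit $v_0$ in $n$ ways; $s_0$ must be \texttt{Z} (legality condition 2), so no freedom there; for each of the $d$ subsequent layers, the gate acting on $v_i$ is determined by the architecture, $v_{i+1}$ is one of the $2$ qubits in that gate, giving a factor $2$ per layer, so $2^d$ total; and $s_i$ for $i=1,\dots,d-1$ can be any of $X,Y,Z$ (factor $3^{d-1}$), while $s_d$ must again be \texttt{Z} by legality condition 2 (no freedom). Multiplying, $n\cdot 2^d\cdot 3^{d-1}$.

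I would need to be slightly careful about two boundary subtleties. First, when $v_i=v_{i+1}$ (the Pauli stays on the same qubit rather than hopping to the partner), I should confirm this is still counted by the factor $2$ — it is, since "which of the two qubits in the gate" includes the option "the same one"; the gate's input on that pair is (single-qubit Pauli)$\otimes$\texttt{I}, which is non-\texttt{II}, and its output can be any non-\texttt{II} two-qubit Pauli, in particular \texttt{I}$\otimes$(single-qubit Pauli) or (single-qubit Pauli)$\otimes$\texttt{I}; but wait — for the path to have weight exactly $1$ at slice $i+1$ the output must be supported on a single qubit, which is fine, both \texttt{P}\texttt{I} and \texttt{I}\texttt{P} qualify, and there are $3$ Pauli choices regardless of which qubit, matching the $3^{d-1}$ count for interior slices. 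Second, I must make sure I am not overcounting or undercounting the endpoint constraints $s_0,s_d\in\{\texttt{I},\texttt{Z}\}^{\otimes n}$: combined with weight exactly $1$ these force $s_0$ and $s_d$ to be exactly \texttt{Z} on their single occupied qubit, so they contribute no factor of $3$, which is why the exponent on $3$ is $d-1$ and not $d+1$. The main obstacle is making the "continuity of the trajectory" argument fully rigorous — precisely, ruling out that a weight-$(d+1)$ legal path could have its weight distributed unevenly across slices (some slice empty, another with weight $\ge 2$) — but the forward-and-backward propagation argument sketched above (an all-\texttt{II} slice forces triviality in both directions) handles exactly this, so the argument is clean once that lemma is stated carefully.
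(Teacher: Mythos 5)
Your proposal is correct and follows essentially the same route as the paper's proof: force $|s_i|=1$ at every slice, track the single non-identity Pauli through the gate it enters (factor $2$ per layer for which output qubit it lands on), and use the boundary condition to fix $Z$ at slices $0$ and $d$, leaving $3^{d-1}$ choices for the interior slices. Your explicit propagation argument that an all-identity slice would force the whole path to be trivial is the (unstated) justification behind the paper's one-line claim that $|s_i|=1$ for all $i$, so it is a welcome addition rather than a deviation.
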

\begin{proof}
As the Pauli path $s=(s_0,s_1,\dots,s_d)$ has Hamming weight $d+1$, it has to be the case that $|s_i|=1$ for $i=0,\dots,d$. We first choose the location of the non-identity in the first layer $s_0$, which has $n$ choices. Suppose this non-identity Pauli is at the input of some 2-qubit gate $U$. Then the output of $U$ can be either \texttt{IR} or \texttt{RI} (We use \texttt{R} to represent a non-identity), which gives two choices. Repeating this argument for each layer, we know that the number of configurations of locations of non-identities is $n\cdot 2^d$. Finally, the $3^{d-1}$ factor comes from the fact that the non-identity Pauli at the first and last layer has to be $Z$, while each of the other $d-1$ layers has three choices among $X,Y,Z$.
\end{proof}

Next we show that anti-concentration implies the desired $2^{O(\ell)}$ upper bound for the number of legal paths. This bound is clearly tight up to the constant in the exponent, as even the choice of $X,Y,Z$ for a single path of weight $\ell$ gives a $3^\ell$ factor. The problem with the result below is that it does not give an algorithm to find and enumerate the legal paths. This is addressed later.

\begin{lemma}\label{lemma:pathsfromanticoncentration}
    Consider any circuit architecture which satisfies anti-concentration with Haar random 2-qubit gates. For any $\ell\geq d+1$, the total number of legal Pauli paths with Hamming weight at most $\ell$ is upper bounded by $2^{O(\ell)}$.
\end{lemma}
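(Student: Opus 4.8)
The plan is to connect the number of legal Pauli paths of weight at most $\ell$ directly to the Fourier weights $W_k$, which we already know (by Lemma~\ref{lemma:totalfourierweight}) sum to $O(1)$ under anti-concentration. The obstacle is that $W_k$ is a weighted count --- it weights each path $s$ by $2^{2n}f(C,s,0^n)^2$ averaged over $C$ --- whereas we want an \emph{unweighted} count of legal paths. So the heart of the argument is a lower bound: each legal path contributes at least some fixed constant (independent of $n$ and $d$) to its corresponding $W_k$.

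First I would show that for Haar random 2-qubit gates, every legal Pauli path $s$ satisfies $2^{2n}\E_C[f(C,s,0^n)^2]\geq c^{|s|}$ for some universal constant $c>0$. This follows by expanding $f(C,s,0^n)^2$ as a product of squared transition amplitudes over the $n(d+1)$ gate locations (using independence of the gates and Eq.~\eqref{eq:boundary} for the boundary factors), and applying Eq.~\eqref{eq:2qubitgate}: at each gate where the input and output Paulis are both non-\texttt{II}, the expected squared amplitude is exactly $\tfrac{1}{15}$, and at each gate where both are \texttt{II} it is $1$. Since $s$ is legal, only these two cases occur, so the product is $15^{-(\text{number of non-trivial gates touched by }s)}$. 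The number of non-trivial gates is at most $|s|$ (each non-identity $s_i$ can be the input to at most $n/2$ gates, but across the whole path the total number of distinct nontrivial gates is $O(|s|)$ --- more carefully, it is at most $|s|$ since each nontrivial gate has a nontrivial input layer Pauli, and $\sum_i |s_i| = |s|$). Hence $2^{2n}\E_C[f(C,s,0^n)^2]\geq 15^{-|s|}$.

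Next I would sum this lower bound over all legal paths of weight exactly $k$, giving
\begin{equation}
    W_k = 2^{2n}\E_C\sum_{|s|=k} f(C,s,0^n)^2 \geq N_k\cdot 15^{-k},
\end{equation}
where $N_k$ is the number of legal paths of Hamming weight exactly $k$. Therefore $N_k\leq 15^k\cdot W_k$, and summing over $d+1\leq k\leq \ell$,
\begin{equation}
    \sum_{k=d+1}^{\ell} N_k \leq 15^\ell \sum_{k\geq d+1} W_k = 15^\ell\cdot O(1) = 2^{O(\ell)},
\end{equation}
using item 3 of Lemma~\ref{lemma:totalfourierweight}. Since $N_k=0$ for $0<k\leq d$ (item 2) and $N_0=1$, this bounds the total number of legal paths of weight at most $\ell$ by $2^{O(\ell)}$.

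The main obstacle is getting the per-path lower bound right: one must be careful that a legal path never forces a gate into one of the ``mixed'' cases of Eq.~\eqref{eq:2qubitgate} (which would give expectation $0$ and destroy the bound), and one must correctly count how many gates a weight-$k$ path can nontrivially touch so that the base of the exponent stays an absolute constant. Both follow cleanly from the definition of legal path and the structure of the layered architecture, but they are the places where the argument could go wrong if stated carelessly. Note also that this lemma is stated for Haar random 2-qubit gates specifically; the unweighted count is a purely combinatorial quantity independent of the gate set, so the bound transfers to any architecture admitting \emph{some} gate set satisfying anti-concentration, which is why restricting to the Haar case here loses nothing.
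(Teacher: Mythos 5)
Your proposal is correct and follows essentially the same route as the paper: lower-bound each legal path's contribution to the Fourier weight by $15^{-|s|}$ via Lemma~\ref{lemma:haarrandomgate} (the paper writes this as $(1/15)^{G(s)}$ with $G(s)\leq |s|$), then use $\sum_{k\geq d+1}W_k=O(1)$ from Lemma~\ref{lemma:totalfourierweight} to conclude the count is $O(1)\cdot 15^{\ell}=2^{O(\ell)}$. Your closing observation that the count is gate-set independent matches the remark the paper makes right after the lemma.
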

\begin{proof}
We have shown in Lemma~\ref{lemma:totalfourierweight} that anti-concentration implies that $\sum_{k\geq d+1}W_k = O(1)$. Below we give a lower bound on the Fourier weight up to degree $\ell$. Consider any legal Pauli path $s$ with Hamming weight at most $\ell$. We will calculate its contribution to the Fourier weight $2^{2n}\E_C f(C,s,0^n)^2$ as follows.
\begin{equation}\label{eq:pathcontribution}
    \begin{aligned}
        2^{2n}\E_C\left[ f(C,s,0^n)^2\right]&=2^{2n}\E_C\left[\left(\lbraket{x}{s_d}\lmel{s_d}{\mc U_d}{s_{d-1}}\cdots \lmel{s_1}{\mc U_1}{s_0}\lbraket{s_0}{0^n}\right)^2\right]\\
        &=\E_C\left[\lmel{s_d}{\mc U_d}{s_{d-1}}^2\cdots \lmel{s_1}{\mc U_1}{s_0}^2\right]\\
        &=\E_{\mc U_d}\left[\lmel{s_d}{\mc U_d}{s_{d-1}}^2\right]\cdots\E_{\mc U_1} \left[\lmel{s_1}{\mc U_1}{s_0}^2\right]\\
        &=\left(\frac{1}{15}\right)^{G(s)}
    \end{aligned}
\end{equation}
Here the second line follows from the fact that $|\lbraket{x}{s_d}|=|\lbraket{s_0}{0^n}|=\frac{1}{\sqrt{2^n}}$, the third line is due to the independence between different random gates, and the fourth line is due to Lemma~\ref{lemma:haarrandomgate}, where we define
\begin{equation}
    G(s):=\text{the number of 2-qubit gates whose input and output are not }\texttt{II}\text{ in }s.
\end{equation}
The above calculation says that any 2-qubit gate whose input and output are not \texttt{II} contributes a $\frac{1}{15}$ factor to the Fourier weight. A simple bound on $G(s)$ is
\begin{equation}
    \frac{|s|}{4}\leq G(s)\leq |s|,
\end{equation}
where LHS is because each gate corresponds to at most 4 non-identity Paulis, and RHS is because each gate has at least 1 input non-identity Pauli. This implies that
\begin{equation}
    2^{2n}\E_C\left[ f(C,s,0^n)^2\right]\geq \left(\frac{1}{15}\right)^{|s|}.
\end{equation}
Using this we have
\begin{equation}
    \begin{aligned}
        O(1)&=\sum_{k=d+1}^\ell W_k\\
        &=\sum_{k=d+1}^\ell 2^{2n}\E_C\sum_{s\in\mathsf{P}_n^{d+1}:|s|=k}f(C,s,0^n)^2\\
        &\geq\sum_{k=d+1}^\ell \sum_{s\in\mathsf{P}_n^{d+1}:|s|=k}\left(\frac{1}{15}\right)^{|s|}1[s\text{ is legal}]\\
        &\geq \left(\frac{1}{15}\right)^{\ell}\left(\text{Number of legal paths of weight at most }\ell\right),
    \end{aligned}
\end{equation}
which means that the number of legal paths of weight at most $\ell$ is at most $O(1)\cdot 15^\ell.$
\end{proof}

We have remarked earlier that the number of legal paths is a combinatorial property that only depends on the circuit architecture, independent of the gate set. We introduce Haar random 2-qubit gates in Lemma~\ref{lemma:pathsfromanticoncentration} as a proof technique for bounding the Fourier weights. We further show that the above results imply a lower bound on the depth for anti-concentration, which has been shown by~\cite{Dalzell2022random,Deshpande2021tight} using different techniques.

\begin{corollary}\label{cor:aclowerbound}
Consider any circuit architecture which satisfies anti-concentration with Haar random 2-qubit gates, then the circuit depth satisfies $d=\Omega(\log n)$.
\end{corollary}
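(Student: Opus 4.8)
The plan is to obtain the lower bound on $d$ by computing the Fourier weight $W_{d+1}$ at the minimal nonzero degree exactly, and comparing it against the constant upper bound furnished by anti-concentration. By Lemma~\ref{lemma:totalfourierweight}, anti-concentration gives $W_k = 0$ for $0 < k \le d$ and $\sum_{k \ge d+1} W_k = O(1)$, so in particular $W_{d+1} = O(1)$. Hence it suffices to show $W_{d+1} = \Omega\!\left(n \cdot (2/5)^d\right)$, which then forces $(5/2)^d = \Omega(n)$, i.e.\ $d = \Omega(\log n)$.

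To compute $W_{d+1}$, first I would observe that every legal Pauli path $s = (s_0, \dots, s_d)$ of Hamming weight exactly $d+1$ has $|s_i| = 1$ for all $i$. Indeed, for a nonzero legal path $|s_i| \ge 1$ for every $i$: if some $s_i$ were all-identity then, propagating forward and backward through the layers (each gate of a legal path maps \texttt{II} to \texttt{II} and non-\texttt{II} to non-\texttt{II}), the whole path would be all-identity. Since the $d+1$ weights $|s_0|,\dots,|s_d|$ sum to $d+1$, equality forces $|s_i| = 1$ for every $i$. Consequently, in each of the $d$ gate layers exactly one 2-qubit gate has non-\texttt{II} input and output — a second such gate in a layer would force $|s_{i-1}| \ge 2$ — so the quantity $G(s)$ from the proof of Lemma~\ref{lemma:pathsfromanticoncentration} equals $d$ for every such path. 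By Eq.~\eqref{eq:pathcontribution}, each weight-$(d+1)$ legal path therefore contributes exactly $2^{2n}\E_C f(C,s,0^n)^2 = (1/15)^d$ to the Fourier weight.

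Combining this with Lemma~\ref{lemma:smallestweightpaths}, which counts $n \cdot 2^d \cdot 3^{d-1}$ legal paths of weight $d+1$, gives
\begin{equation}
    W_{d+1} = n \cdot 2^d \cdot 3^{d-1} \cdot \left(\frac{1}{15}\right)^d = \frac{n}{3}\left(\frac{2}{5}\right)^d .
\end{equation}
Since anti-concentration gives $W_{d+1} \le \sum_{k \ge d+1} W_k = O(1)$, we conclude $\frac{n}{3}(2/5)^d = O(1)$, hence $d \ge \log_{5/2} n - O(1) = \Omega(\log n)$.

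I do not expect a serious obstacle: the argument is essentially bookkeeping on top of the already-established Lemmas~\ref{lemma:totalfourierweight}, \ref{lemma:smallestweightpaths} and the Fourier-weight computation inside Lemma~\ref{lemma:pathsfromanticoncentration}. The one point that needs a little care is the structural claim that a weight-$(d+1)$ legal path consists of a single non-identity Pauli tracing a ``worldline'' through exactly one gate per layer, so that $G(s) = d$ \emph{uniformly} over these paths; once that is in place the rest is arithmetic.
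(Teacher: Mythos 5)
Your proof is correct and takes essentially the same route as the paper: anti-concentration bounds the total Fourier weight (Lemma~\ref{lemma:totalfourierweight}), each minimal-weight legal path contributes $\Theta(15^{-d})$ via Eq.~\eqref{eq:pathcontribution}, and Lemma~\ref{lemma:smallestweightpaths} counts $n\cdot 2^d\cdot 3^{d-1}$ of them, forcing $n\,(2/5)^d = O(1)$. The only difference is bookkeeping: the paper routes through Lemma~\ref{lemma:pathsfromanticoncentration}'s bound of $O(1)\cdot 15^{d+1}$ on the number of legal paths, whereas you evaluate $W_{d+1}$ exactly (using $G(s)=d$ uniformly), which is a slight refinement of the same argument.
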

\begin{proof}
Consider $\ell=d+1$, using Lemma~\ref{lemma:smallestweightpaths} and Lemma~\ref{lemma:pathsfromanticoncentration} we have
\begin{equation}
    n\cdot 2^d\cdot 3^{d-1}\leq O(1)\cdot 15^{d+1},
\end{equation}
which implies that $d=\Omega(\log n)$.
\end{proof}

Next we present the main result of this section, an algorithm for efficiently enumerating low-weight legal Pauli paths.

\begin{lemma}\label{lemma:legalpaths}
    For any $\ell\geq d+1$, the number of legal Pauli paths with Hamming weight at most $\ell$ is at most $n^{\ell/d}\cdot 2^{O(\ell)}$ (the circuit architecture does not need to satisfy anti-concentration). Furthermore there is an efficient algorithm to enumerate the legal paths in time $n^{\ell/d}\cdot 2^{O(\ell)}$ and memory $\tilde{O}(n d)$.
\end{lemma}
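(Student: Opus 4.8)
The plan is to exploit the rigidity imposed by legality condition~1. Fix a layer index $j$; the gates of $U_{j+1}$ partition the $n$ qubits into pairs, and call a gate \emph{active} if the restriction of $s_j$ to its pair is not \texttt{II}. By condition~1 this is equivalent to the restriction of $s_{j+1}$ being non-\texttt{II}, so the set of active gates is completely determined by $\mathrm{supp}(s_j)$, and $s_{j+1}$ is obtained from $s_j$ by choosing, for each active gate, one of the $15$ non-\texttt{II} two-qubit Paulis as its output and \texttt{II} on every inactive gate. The number $a_j$ of active gates satisfies $a_j \le |s_j|$, since distinct active gates occupy disjoint pairs and each carries at least one non-identity single-qubit Pauli of $s_j$ on its input. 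The identical statement holds in the backward direction through $U_j$, and at the two ends the only extra requirement is $s_0, s_d \in \{I,Z\}^{\otimes n}$, which merely replaces the $15$ by $3$. Thus a legal path is pinned down by any single one of its layers together with the propagation choices.

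For the counting bound, given a legal path $s$ with $|s| = \sum_{i=0}^d |s_i| \le \ell$, pigeonhole over the $d+1$ layers produces an index $i^*$ with $|s_{i^*}| \le \lfloor \ell/(d+1) \rfloor$. The number of candidate middle layers is $\sum_{k \le \ell/(d+1)} \binom{n}{k} 3^k \le \poly(\ell)\cdot n^{\ell/(d+1)} \cdot 3^{\ell} \le n^{\ell/d}\cdot 2^{O(\ell)}$. Given $s_{i^*}$, the layers to its right are produced by forward propagation, contributing a factor $\prod_{j=i^*}^{d-1} 15^{a_j} \le 15^{\sum_j |s_j|} \le 15^{\ell}$, and likewise the layers to its left contribute at most $15^{\ell}$; the products telescope because the exponents are weights of distinct layers and hence sum to at most $|s| \le \ell$. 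Summing over the $d+1 \le \ell+1$ choices of $i^*$ yields at most $n^{\ell/d}\cdot 2^{O(\ell)}$ legal paths of weight $\le \ell$, and this argument uses only the combinatorics of the architecture, not anti-concentration.

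The enumeration algorithm is the corresponding depth-first search. For each $i^* \in \{0,\dots,d\}$ it iterates over all middle layers $s_{i^*}$ of weight $\le \lfloor \ell/(d+1)\rfloor$ — generated as size-$\le \ell/(d+1)$ subsets of $[n]$ in lexicographic order, each decorated with an assignment in $\{X,Y,Z\}$ — and then recursively extends rightward to $s_d$ and leftward to $s_0$, branching over the (at most $15^{a_j}$, or $3^{a_j}$ at a boundary) legal propagation choices at each step and backtracking the moment the accumulated weight exceeds $\ell$. A completed path is output iff its total weight is $\le \ell$ and $i^*$ is the least index attaining $\min_i |s_i|$; this canonicalization makes each legal path appear exactly once, which is needed because the $\tilde O(nd)$ memory budget (the DFS stores only the current partial path and an $O(d)$-deep stack) rules out deduplication by storing outputs. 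The running time is proportional to the number of partial assignments explored, which the counting argument bounds by $n^{\ell/d}\cdot 2^{O(\ell)}$, times $\poly(n,d)$ per node for reading the pairing and updating the support; in the regime $d = \Omega(\log n)$ this overhead is absorbed into $2^{O(\ell)}$, giving running time $n^{\ell/d}\cdot 2^{O(\ell)}$.

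The step I expect to be the real obstacle is controlling the growth of $\mathrm{supp}(s_j)$ under propagation: a priori the support can double from each layer to the next, so the naive branching of the forward recursion looks like $15^{|s_d|}$ with $|s_d|$ potentially exponential in $d$. The resolution is the pair of facts above — the DFS never exceeds total weight $\ell$, and the branching at the transition out of layer $j$ is $15^{a_j}$ with $a_j \le |s_j|$ — which together make the product of branching factors along any root-to-leaf path equal $15^{\sum_j |s_j|} \le 15^{\ell}$, independent of how the weight budget is distributed across depth. The remaining work is routine bookkeeping: handling the $\{I,Z\}$ boundary constraint (which only helps), the canonical-$i^*$ device for multiplicity-free enumeration under the memory bound, and verifying that the $\poly(n,d)$ per-node cost is dominated by $2^{O(\ell)}$.
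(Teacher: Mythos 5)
Your proof is correct and follows essentially the same route as the paper's: anchor the enumeration at a minimum-weight layer (which has weight at most $\ell/(d+1)$, giving the $n^{\ell/d}$ factor), then propagate forwards and backwards with branching bounded by a constant per active gate, so that the product of branching factors along any path is $2^{O(\ell)}$. The only difference is bookkeeping — the paper first enumerates the per-layer weight profile $(w_0,\dots,w_d)$ (a $2^{k-1}$ factor) and rejects propagation choices inconsistent with it, whereas you prune on total weight and deduplicate via the canonical choice of $i^*$ — and both handle the $\poly(n,d)$ per-node overhead identically in the $d=\Omega(\log n)$ regime where the lemma is applied.
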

The proof of Lemma~\ref{lemma:legalpaths} is deferred to the end of this section. Next we discuss its relationship with the above results.

First, it appears that Lemma~\ref{lemma:legalpaths} is not tight as it has an additional $n^{\ell/d}$ factor compared with Lemma~\ref{lemma:pathsfromanticoncentration}. In fact this is not the case, due to the fact that Lemma~\ref{lemma:pathsfromanticoncentration} assumes anti-concentration, which by Corollary~\ref{cor:aclowerbound} means that Lemma~\ref{lemma:pathsfromanticoncentration} only holds when $d=\Omega(\log n)$. Note that in this case
\begin{equation}
    n^{\ell/d}=e^{\frac{\ell}{d}\cdot \log n}=2^{O(\ell)},
\end{equation}
so in the anti-concentration regime Lemma~\ref{lemma:legalpaths} gives the same asymptotic result as Lemma~\ref{lemma:pathsfromanticoncentration}, which is tight up to the constant in the exponent. 

Second, when $\ell=O(d)$, Lemma~\ref{lemma:legalpaths} gives $\poly(n)\cdot 2^{O(d)}$. 
Therefore compared with Lemma~\ref{lemma:smallestweightpaths} we conclude that Lemma~\ref{lemma:legalpaths} with $\ell=O(d)$ is tight up to the constant in the exponent, regardless of whether anti-concentration holds.\\

\noindent\textbf{Proof of Lemma~\ref{lemma:legalpaths}.} We prove Lemma~\ref{lemma:legalpaths} in the rest of this section. We will enumerate legal Pauli paths $s=(s_0,s_1,\dots,s_d)$ using the following method.
\begin{enumerate}
    \item For each $d+1\leq k\leq\ell$, choose the Hamming weight $w_0,\dots,w_d$ for each layer, such that $w_0+\cdots +w_d=k$.
    \item Choose the configuration (positions of identities and non-identities) for each layer.
    \item Choose $X/Y/Z$ for each non-identity.
\end{enumerate}

The following is a detailed counting argument and enumeration method for the legal Pauli paths. Consider a fixed total Hamming weight $d+1\leq k\leq\ell$.
\begin{enumerate}
    \item Choose the Hamming weight $w_0,\dots,w_{d}$ for each layer, such that the total weight is $k$. The number of choices equals the number of solutions to the equation $w_0+w_2+\cdots +w_d=k$ ($w_i\geq 1$), which equals to $\binom{k-1}{d}\leq 2^{k-1}$. The enumeration of such solutions can be achieved using a combinations enumerator which efficiently enumerates all combinations of choosing $d$ objects from $k-1$ objects, with memory cost $\tilde{O}(d)$. Note that not all solutions correspond to legal Pauli paths; the illegal ones will be rejected later.
    
    \item For each Hamming weight configuration $w_0,\dots,w_d$, let $t$ be the index of the layer with smallest Hamming weight (if there are tiebreaks, choose the smallest $t$). As the total weight is $k$, we know that $w_t\leq k/d$. Next we enumerate the configuration (locations of non-identities) of this layer. The number of choices is $\binom{n}{w_t}\leq n^{k/d}$ and can be enumerated using a combinations enumerator. We can store a configuration of a layer using $n$ bits.
    
    \item We choose the configurations for the other layers in a way that evolves the $t$-th layer both forwards and backwards. For example, consider choosing the configuration for the $t+1$-th layer, conditioned on a given configuration for the $t$-th layer. Consider the layer of 2-qubit gates that connects the $t$-th layer of the Pauli path with the $t+1$-th layer of the Pauli path. Those 2-qubit gates that have input \texttt{II} have to have output \texttt{II}. The number of 2-qubit gates whose input is not \texttt{II} is at most $w_t$. For each of these gates, its output can be \texttt{IR}, \texttt{RI}, or \texttt{RR} (We use \texttt{R} to represent a non-identity). So there are at most $3^{w_t}$ configurations for the $t+1$-th layer. Not all of these configurations satisfy the constraint that the $t+1$-th layer has Hamming weight $w_{t+1}$. So within these (at most) $3^{w_t}$ configurations, we reject those that do not have weight $w_{t+1}$. Repeating this procedure for the next layer, we have that the number of configurations for the $t+2$-th layer is at most $3^{w_{t+1}}$, conditioned on a given configuration for the $t+1$-th layer. Using the same argument but evolve backward from the $t$-th layer, the number of configurations for the $t-1$-th layer is at most $3^{w_{t}}$, and the number of configurations for the $t-2$-th layer is at most $3^{w_{t-1}}$ and so on.
    \item Repeat the above argument for $t+1,t+2,\dots,d$ as well as $t-1,t-2,\dots,0$. The total number of configurations for the entire Pauli path (conditioned on a given partition $w_0,\dots,w_d$ and a given configuration for the $t$-th layer) is at most $3^{\sum_i w_i}=3^{k}$. The memory cost for enumerating a configuration for the entire circuit is at most $\tilde{O}(n d)$.
    \item Replace each \texttt{R} with $X,Y,Z$ (except for the first and last layer, where \texttt{R} is only replaced with $Z$), giving another $3^{k}$ factor.
\end{enumerate}
Taking into account all factors in the above steps, the total number of legal paths of Hamming weight at most $\ell$ (and the total running time of the enumeration algorithm) is at most
    \begin{equation}
        \sum_{k=d+1}^\ell 2^{k-1}\cdot n^{k/d}\cdot 3^k\cdot 3^k\leq \ell\cdot n^{\ell/d}\cdot 18^\ell=n^{\ell/d}\cdot 2^{O(\ell)}.
    \end{equation}

\subsection{Putting everything together}
\label{sec:samplingtocomputing}
Summarizing the main results of the previous section, we have the following.
\begin{lemma}\label{lemma:marginal}
    Consider the same assumptions as our main result (Remark~\ref{remark:assumption}) and fix a truncation parameter $\ell$. There is an algorithm that computes the function $\bar{q}(C,x)=\sum_{s:|s|\leq\ell}(1-\gamma)^{|s|}f(C,s,x)$ and its marginals in time $n d\cdot 2^{O(\ell)}$. Here by marginal we mean $\sum_{i\in T}\sum_{x_i\in\{0,1\}}\bar{q}(C,x_1,\dots,x_n)$ for any $T\subseteq [n]$.
\end{lemma}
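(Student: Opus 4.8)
The plan is to run Algorithm~\ref{alg:mainsimulation} essentially verbatim and bound its cost using the enumeration guarantee of Lemma~\ref{lemma:legalpaths}. Two earlier facts set things up: an illegal Pauli path has $f(C,s,x)=0$, so it suffices to sum over legal paths with $|s|\le\ell$; and for a fixed path $s$ the coefficient $f(C,s,x)=\lbraket{x}{s_d}\lmel{s_d}{\mc U_d}{s_{d-1}}\cdots\lmel{s_1}{\mc U_1}{s_0}\lbraket{s_0}{0^n}$ is a product of $d$ single-gate transition amplitudes $\lmel{q}{\mc U}{p}=\Tr(qUpU^\dag)$ (each a constant-size calculation, since only the two qubits in a gate's support matter) together with two boundary inner products, hence computable in time $O(nd)$. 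So $\bar q(C,x)$ is obtained by enumerating the legal paths of weight at most $\ell$, evaluating each $f(C,s,x)$, multiplying by $(1-\gamma)^{|s|}$, and accumulating.

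For the running time I invoke Lemma~\ref{lemma:legalpaths}: the legal paths of weight at most $\ell$ are enumerable in time $n^{\ell/d}\cdot 2^{O(\ell)}$ and memory $\tilde O(nd)$. Under the hypotheses of the main result (Remark~\ref{remark:assumption}) we have $d=\Omega(\log n)$ --- assumed outright for general gate sets, and, for Haar random $2$-qubit gates, forced by anti-concentration via Corollary~\ref{cor:aclowerbound} --- so $n^{\ell/d}=e^{(\ell/d)\log n}=2^{O(\ell)}$, and the enumeration produces $2^{O(\ell)}$ paths in $2^{O(\ell)}$ time. Charging the $O(nd)$ per-path coefficient cost gives total running time $nd\cdot 2^{O(\ell)}$, as claimed.

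For the marginals, the one thing to exploit is that $x$ enters $f(C,s,x)$ only through the factor $\lbraket{x}{s_d}$, and for a legal path $s_d\in\{I/\sqrt 2,Z/\sqrt 2\}^{\otimes n}$. Writing $z(s)\in\{0,1\}^n$ for the indicator of the $Z$-positions of $s_d$, Eq.~\eqref{eq:boundary} gives $\lbraket{x}{s_d}=2^{-n/2}(-1)^{x\cdot z(s)}$ (mod-$2$ inner product), hence $f(C,s,x)=(-1)^{x\cdot z(s)}f(C,s,0^n)$. Summing $\bar q(C,x)$ over the bits $x_i$ with $i\in T$, the sum $\sum_{x_i\in\{0,1\},\,i\in T}(-1)^{x\cdot z(s)}$ vanishes unless $z(s)$ is zero on $T$, in which case it equals $2^{|T|}(-1)^{x'\cdot z(s)}$ with $x'$ the fixed restriction of $x$ to $[n]\setminus T$. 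So the marginal equals $2^{|T|}$ times a sum of exactly the same shape as $\bar q$, now restricted to legal paths of weight at most $\ell$ whose last Pauli $s_d$ is the identity on all of $T$. That restriction is a single per-path rejection test which the layer-by-layer enumerator of Lemma~\ref{lemma:legalpaths} applies for free when it builds the last layer, so the time bound $nd\cdot 2^{O(\ell)}$ is unchanged; taking $T=\emptyset$ recovers $\bar q(C,x)$ itself.

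The substantive work has already been done --- it lives in Lemma~\ref{lemma:legalpaths}, the counting-and-enumeration argument. What remains is routine: the $O(nd)$ per-coefficient bound, the sign identity $f(C,s,x)=(-1)^{x\cdot z(s)}f(C,s,0^n)$ that makes marginalization collapse to a constrained legal-path sum, and the observation that the $n^{\ell/d}$ factor is absorbed into $2^{O(\ell)}$ precisely because $d=\Omega(\log n)$. I therefore expect no genuine obstacle; the only point to handle with care is ensuring that the enumerator's $\tilde O(nd)$ bookkeeping and the extra marginal constraint do not inflate the stated running time.
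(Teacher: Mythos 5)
Your proposal is correct and follows essentially the same route as the paper: enumerate the legal paths of weight at most $\ell$ via Lemma~\ref{lemma:legalpaths} (absorbing $n^{\ell/d}$ into $2^{O(\ell)}$ because $d=\Omega(\log n)$), spend $O(nd)$ per path, and compute marginals by exchanging the order of summation so that each path's marginal is evaluated in closed form. Your sign identity $f(C,s,x)=(-1)^{x\cdot z(s)}f(C,s,0^n)$ with the vanishing/$2^{|T|}$ dichotomy is just an explicit rewriting of the paper's observation that $\sum_{i\in T}\sum_{x_i}\lbraket{x}{s_d}=\Tr\bigl(s_d\cdot\bigotimes_{j\notin T}\ketbra{x_j}\bigotimes_{i\in T}I_i\bigr)$, so the two arguments coincide.
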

\begin{proof}
As circuit depth $d=\Omega(\log n)$, Lemma~\ref{lemma:legalpaths} says that for any $x\in\{0,1\}^n$, $\bar{q}(C,x)$ can be computed in time $n d\cdot 2^{O(\ell)}$ using the enumeration algorithm, as there are $2^{O(\ell)}$ paths and each path takes $O(n d)$ time to compute. To compute a certain marginal $\sum_{i\in T}\sum_{x_i\in\{0,1\}}\bar{q}(C,x_1,\dots,x_n)$, note that we cannot straightforwardly compute each $\bar{q}(C,x_1,\dots,x_n)$ and sum them up because it has an additional factor $2^{|T|}$. However, the marginal can be easily computed by exchanging the summation order,
\begin{equation}
\begin{aligned}
    \sum_{i\in T}\sum_{x_i\in\{0,1\}}\bar{q}(C,x_1,\dots,x_n)&=\sum_{i\in T}\sum_{x_i\in\{0,1\}}\sum_{s:|s|\leq\ell}(1-\gamma)^{|s|}f(C,s,x_1,\dots,x_n)\\
    &=\sum_{s:|s|\leq\ell}(1-\gamma)^{|s|}\left(\sum_{i\in T}\sum_{x_i\in\{0,1\}}f(C,s,x_1,\dots,x_n)\right).
\end{aligned}
\end{equation}
The statement follows from the fact that the summation in the bracket can be computed in time $O(n d)$. This is because
\begin{equation}
\begin{aligned}
    \sum_{i\in T}\sum_{x_i\in\{0,1\}}f(C,s,x_1,\dots,x_n)&=\sum_{i\in T}\sum_{x_i\in\{0,1\}}\lbraket{x}{s_d}\lmel{s_d}{\mc U_d}{s_{d-1}}\cdots \lmel{s_1}{\mc U_1}{s_0}\lbraket{s_0}{0^n}\\
    &=\lbraket{x'}{s_d}\lmel{s_d}{\mc U_d}{s_{d-1}}\cdots \lmel{s_1}{\mc U_1}{s_0}\lbraket{s_0}{0^n},
\end{aligned}
\end{equation}
where
\begin{equation}
    \lbraket{x'}{s_d}=\Tr(s_d\cdot \bigotimes_{j\notin T}\ketbra{x_j}\bigotimes_{i\in T}I_i).
\end{equation}
\end{proof}

Lemma~\ref{lemma:marginal} allows us to use the standard reduction of sampling from a probability distribution via computing its marginals. An issue here is that $\bar{q}(C,x)$ is not necessarily a distribution; it is only guaranteed to be close to $\tilde{p}(C,x)$ in $L_1$ norm. We use the following result of \cite{Bremner2017achievingquantum} which allows us to sample from a distribution that is close to $\tilde{p}(C,x)$.

\begin{lemma}[Lemma 10 in \cite{Bremner2017achievingquantum}]
\label{lemma:samplingtocomputing}
    Let $p$ be a probability distribution on $\{0,1\}^n$. Assume there is an oracle that computes a function $\bar{q}:\{0,1\}^n\to\mathbb{R}$ as well as its marginals, such that $\left\|p-\bar{q}\right\|_1\leq \delta$. Then there is an algorithm that samples from a probability distribution $q$ using $O(n)$ calls to the oracle, such that $\left\|p-q\right\|_1\leq 4\delta/(1-\delta)$.
\end{lemma}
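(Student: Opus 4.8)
The plan is to implement the standard sequential (bit-by-bit) sampler, using the marginal oracle, and then to control the error it incurs by taking positive parts. Concretely: to draw $x=x_1x_2\cdots x_n$, having fixed a prefix $y=x_1\cdots x_{k-1}$, query the two marginals $a_b:=\bar q(y b)=\sum_{z\in\{0,1\}^{n-k}}\bar q(y b z)$ for $b\in\{0,1\}$ (one oracle call each), set $a_b^+:=\max(a_b,0)$, and if $a_0^+ + a_1^+>0$ choose $x_k=b$ with probability $a_b^+/(a_0^+ + a_1^+)$; if $a_0^+ + a_1^+=0$ output an arbitrary completion (say all zeros). This uses $2n=O(n)$ oracle calls and defines an honest distribution $q$ on $\{0,1\}^n$. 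We may assume $\delta<1$, since otherwise the claimed bound is vacuous (the $L_1$ distance is always at most $2$).

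For the analysis, first record the ``ideal'' comparison point: let $\bar q^+(x):=\max(\bar q(x),0)$ and $Z:=\sum_x\bar q^+(x)$. Since $p(x)\geq 0$ we have $|p(x)-\bar q^+(x)|\leq |p(x)-\bar q(x)|$ pointwise, hence $\|p-\bar q^+\|_1\leq\delta$, so $|Z-1|\leq\delta$ and the renormalized $q^\ast:=\bar q^+/Z$ is a distribution with $\|p-q^\ast\|_1\leq 2\delta$. Next call a string $x$ \emph{good} if along its path every queried marginal is nonnegative, i.e. $\bar q(x_1\cdots x_j)\geq 0$ \emph{and} $\bar q(x_1\cdots x_{j-1}\overline{x_j})\geq 0$ for all $j\le n$; let $G$ be the good strings and $\bar G$ its complement. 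For $x\in G$ the conditionals telescope: at each step $a_0^++a_1^+=\bar q(x_1\cdots x_{j-1})$ and $\hat q(x_j\mid x_1\cdots x_{j-1})=\bar q(x_1\cdots x_j)/\bar q(x_1\cdots x_{j-1})$, so $q(x)=\bar q(x)/Z'$ with $Z':=\sum_x\bar q(x)=\bar q(\emptyset)\in[1-\delta,1+\delta]$. Therefore
\begin{equation}
\sum_{x\in G}\bigl|p(x)-q(x)\bigr|\;\leq\;\sum_{x\in G}\bigl|p(x)-\bar q(x)\bigr|+Z\,\Bigl|1-\tfrac1{Z'}\Bigr|\;\leq\;\delta+(1+\delta)\tfrac{\delta}{1-\delta}\;=\;\tfrac{2\delta}{1-\delta}.
\end{equation}
It remains to bound $\sum_{x\in\bar G}|p(x)-q(x)|$, and this is the crux.

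For the $\bar G$ term the key point — and the main obstacle — is that one must \emph{not} argue level-by-level (a hybrid argument that swaps the algorithm's conditional for $p$'s one coordinate at a time loses a factor of $n$, because a single negative value $\bar q(x^\ast)$ contributes to the sign cancellation seen at all $n$ prefixes of $x^\ast$ and gets re-counted at every level). Instead one charges each unit of negativity of $\bar q$ exactly once. Consider the first prefix $y=x_1\cdots x_{k-1}$ along $x$'s path at which a queried marginal is negative, say $\bar q(yb)<0$; then the algorithm sets $\hat q(b\mid y)=0$, and the error it introduces at this step is at most the true conditional $p(b\mid y)=p(yb)/p(y)$, which after weighting by the algorithm's probability of reaching $y$ (comparable to $\bar q(y)/Z'\le p(y)(1+o(1))$) is bounded by roughly $p(yb)$, and $p(yb)\le |p(yb)-\bar q(yb)|\le\sum_{x'\succeq yb}|p(x')-\bar q(x')|$ since $\bar q(yb)<0\le p(yb)$. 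The subtrees $\{x'\succeq yb\}$ arising from these ``first-negative-encounter'' events (including the case where both children are negative, handled the same way) are pairwise disjoint, so summing gives a total of $O(\delta)$; combined with the normalization slack $|Z'-1|\le\delta$ this yields $\sum_{x\in\bar G}|p(x)-q(x)|=O(\delta/(1-\delta))$. Adding the two contributions and tracking constants gives $\|p-q\|_1\leq 4\delta/(1-\delta)$. (This is precisely the content of Lemma~10 of \cite{Bremner2017achievingquantum}; the delicate part is the disjointness/charging bookkeeping sketched above, which prevents the loss of the factor $n$.)
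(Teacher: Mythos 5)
The paper itself gives no proof of this statement: it is imported verbatim as Lemma~10 of \cite{Bremner2017achievingquantum} and used as a black box, so your attempt can only be judged on its own terms. Your algorithm (bit-by-bit sampling from clipped prefix marginals, $O(n)$ oracle calls) is the right one, and your treatment of the good strings is essentially correct: on paths where every queried marginal is nonnegative the conditionals telescope to $\bar q(x)/Z'$ with $Z'=\sum_x \bar q(x)\in[1-\delta,1+\delta]$, giving the $2\delta/(1-\delta)$ bound you state (the zero-denominator edge case you gloss over indeed carries zero probability). The observation that the subtrees below ``minimal negative'' nodes are pairwise disjoint is also fine.

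The gap is in the bad-string part, which is the heart of the lemma. First, the inequality you lean on, $\bar q(y)/Z'\le p(y)(1+o(1))$, is false in general: $\bar q(y)$ can exceed $p(y)$ by an amount of order $\delta$ while $p(y)$ is arbitrarily small or zero, and there is no asymptotic parameter that gives ``$o(1)$'' a meaning here. Second, and more fundamentally, your charging scheme only accounts for the mass misallocated at the \emph{first} step where a negative marginal is met (the mass not sent down $yb$); but $\sum_{x\in\bar G}|p(x)-q(x)|$ also contains all downstream discrepancy: after zeroing the negative child, the sampler continues inside the sibling subtree $y\bar b$ using $\bar q$-based conditionals with a distorted normalization, may encounter further negative marginals there, and its conditional distribution on that subtree need not track $p$'s. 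Note that a single tiny negative value $\bar q(yb)<0$ makes \emph{every} string under a possibly heavy sibling subtree ``bad'', so $\sum_{x\in\bar G}p(x)$ and $\sum_{x\in\bar G}q(x)$ can each be $\Theta(1)$ even for tiny $\delta$; what must be small is their difference, and controlling it requires a recursive/inductive analysis over the tree (this is exactly what the actual proof in \cite{Bremner2017achievingquantum} supplies), not a one-shot charge of $O(\delta)$ to the negativity. Finally, the stated constant is never derived: your good-part bound already spends $2\delta/(1-\delta)$, so the bad part must be shown to be at most $2\delta/(1-\delta)$, whereas you assert only ``$O(\delta/(1-\delta))$'' and then claim the constant $4$ by fiat. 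As written, the proof is incomplete at its crux.
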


\noindent\textbf{Proof of Main result.} In Section~\ref{sec:boundtvd} we have shown that $\E_C\left[\Delta^2\right]\leq O(1)\cdot e^{-2\gamma\ell}$. By Markov's inequality,
\begin{equation}
    \Pr[\Delta\geq \frac{1}{\sqrt{\delta}}\sqrt{\E\left[ \Delta^2\right]}]=\Pr[\Delta^2\geq \frac{1}{\delta}\E \left[\Delta^2\right]]\leq \delta.
\end{equation}
Therefore, with probability at least $1-\delta$ over random circuit $C$, we have
\begin{equation}\label{eq:tvdboundwhp}
    \Delta\leq \frac{1}{\sqrt{\delta}}\sqrt{\E\left[ \Delta^2\right]}\leq \frac{O(1)}{\sqrt{\delta}}e^{-\gamma\ell}.
\end{equation}

Using Lemma~\ref{lemma:marginal} and Lemma~\ref{lemma:samplingtocomputing}, for those circuits that satisfy Eq.~\eqref{eq:tvdboundwhp} we can sample from a probability distribution that is $O(1)\cdot \Delta$-close to $\tilde{p}(C,x)$ in total variation distance. Let $\varepsilon$ be the desired total variation distance, then \begin{equation}
    \frac{O(1)}{\sqrt{\delta}}e^{-\gamma\ell}\leq\varepsilon\text{ is satisfied when }\ell\geq \frac{1}{\gamma}\log\frac{O(1)}{\varepsilon\cdot \sqrt{\delta}}.
\end{equation}
Obtaining one sample requires $O(n)$ calls to the algorithm in Lemma~\ref{lemma:marginal}. Assuming circuit depth is $d\leq\poly(n)$, the total running time for obtaining one sample is $n\cdot n d\cdot 2^{O(\ell)}=\poly(n)\cdot \left(O(1)/(\varepsilon\cdot\sqrt{\delta})\right)^{O(1/\gamma)}=\poly(n,1/\varepsilon,1/\delta)$.

\subsection{Statistical indistinguishability}
\label{sec:statisticalindistinguishability}

Next we show that our main result implies statistical indistinguishability. We first recall the basic notions and then give a proof of Corollary~\ref{cor:indistinguishability}.

Given two known probability distributions $p,q$ over the same finite alphabet ($\{0,1\}^n$ in our case), and given $M$ samples from either $p$ or $q$, we would like to tell which is the case with high success probability. That is, two known distributions $p$ and $q$ are statistically distinguishable if there is an algorithm $\mc A$ (with unbounded running time) that, on input $x_1,\dots,x_M\sim \mc D$,
\begin{itemize}
    \item if $\mc D=p$, $\mc A$ returns ``$\mc D=p$'' with probability at least $\frac{2}{3}$;
    \item if $\mc D=q$, $\mc A$ returns ``$\mc D=q$'' with probability at least $\frac{2}{3}$.
\end{itemize}
Two known distributions $p$ and $q$ are statistically indistinguishable with $M$ samples if there is no algorithm $\mc A$ that satisfies the above condition. We use the following well-known fact that closeness in total variation distance implies statistical indistinguishability.

\begin{lemma}
    Two known distributions $p$ and $q$ are statistically indistinguishable with $M$ samples if 
    \begin{equation}
        \frac{1}{2}\left\|p-q\right\|_1<\frac{1}{3M}.
    \end{equation}
\end{lemma}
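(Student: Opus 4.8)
The plan is to reduce statistical indistinguishability to a bound on the total variation distance via a standard data-processing argument. First I would recall that any distinguishing algorithm $\mc A$ is, without loss of generality, a (possibly randomized) function of the $M$-tuple $(x_1,\dots,x_M)$, and that applying $\mc A$ is itself a channel, so by the data-processing inequality for total variation distance the statistical distance between the induced distributions on $\mc A$'s output is at most the statistical distance between $p^{\otimes M}$ and $q^{\otimes M}$. Hence it suffices to upper bound $\tfrac12\|p^{\otimes M}-q^{\otimes M}\|_1$.

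The key quantitative step is the subadditivity (union-bound-style) inequality $\tfrac12\|p^{\otimes M}-q^{\otimes M}\|_1 \le M\cdot \tfrac12\|p-q\|_1$, which follows from a telescoping hybrid argument: interpolate between $p^{\otimes M}$ and $q^{\otimes M}$ one coordinate at a time, and bound each consecutive pair by $\tfrac12\|p-q\|_1$ using that total variation distance is non-increasing under tensoring with a fixed distribution. Combining this with the hypothesis $\tfrac12\|p-q\|_1 < \tfrac{1}{3M}$ gives $\tfrac12\|p^{\otimes M}-q^{\otimes M}\|_1 < \tfrac13$.

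Finally I would argue that this bound precludes any valid distinguisher. If $\mc A$ existed, consider the event $E$ that $\mc A$ outputs ``$\mc D=p$''. Under $p^{\otimes M}$ we would need $\Pr[E]\ge \tfrac23$, and under $q^{\otimes M}$ we would need $\Pr[E]\le \tfrac13$, so the two output distributions differ in probability on $E$ by at least $\tfrac13$, forcing $\tfrac12\|p^{\otimes M}-q^{\otimes M}\|_1 \ge \tfrac13$, a contradiction. Therefore no such $\mc A$ can satisfy the defining success conditions, i.e.\ $p$ and $q$ are statistically indistinguishable with $M$ samples.

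I do not expect any real obstacle here; the only mild subtlety is being careful that the data-processing/hybrid argument is applied to the joint distributions $p^{\otimes M}$ and $q^{\otimes M}$ rather than to single samples, and that randomness internal to $\mc A$ is handled (either by conditioning on it and averaging, or by absorbing it into the channel). The proof is essentially a textbook fact, and in the paper it can be dispatched in a few lines.
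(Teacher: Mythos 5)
Your proof is correct and is exactly the standard argument the paper has in mind: the paper states this lemma as a well-known fact without proof, and the same two ingredients you use (subadditivity of total variation distance under tensor products, i.e.\ $\frac{1}{2}\|p^{\otimes M}-q^{\otimes M}\|_1\leq M\cdot\frac{1}{2}\|p-q\|_1$, plus the observation that a distinguisher with the $2/3$ vs.\ $1/3$ success guarantees forces the product distributions to be at least $1/3$ apart) appear explicitly in the paper's proof of Corollary~\ref{cor:indistinguishability}. No gaps; your handling of data processing and the distinguisher's internal randomness is the right level of care.
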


In the context of random circuit sampling, statistical distinguishability is similarly defined with an additional averaging over the random circuit.

\begin{definition}[Statistical distinguishability]
\label{def:distinguishability}
For a random circuit $C$, let $\tilde{p}(C,x)$ be the noisy RCS output distribution and let $q(C,x)$ be a classical mock-up distribution (the output distribution of a classical simulation algorithm). $\tilde{p}(C,x)$ is statistically distinguishable from $q(C,x)$ with $M$ samples if there is an algorithm $\mc A$ with input $C$ as well as $x_1,\dots,x_M\in\{0,1\}^n$ and output one of $\{\text{noisy RCS},\text{mock-up}\}$ (with unbounded running time) such that
\begin{itemize}
    \item $\E_C \Pr_{x_1,\dots,x_M\sim \tilde{p}(C)}[\mc A(C,x_1,\dots,x_M)=\text{noisy RCS}]\geq\frac{2}{3}$,
    \item $\E_C \Pr_{x_1,\dots,x_M\sim q(C)}[\mc A(C,x_1,\dots,x_M)=\text{noisy RCS}]\leq\frac{1}{3}$.
\end{itemize}
\end{definition}

\noindent\textbf{Proof of Corollary~\ref{cor:indistinguishability}.} In order to prove statistical indistinguishability it suffices to show that
\begin{equation}
    \E_C \left[\left\|\tilde{p}(C)^{\otimes M},q(C)^{\otimes M}\right\|_1\right] < \frac{1}{3}.
\end{equation}
Our main result says that $\left\|\tilde{p}(C)-q(C)\right\|_1\leq\varepsilon$ with probability at least $1-\delta$ over $C$. Call those $C$ that satisfy $\left\|\tilde{p}(C)-q(C)\right\|_1\leq\varepsilon$ good, and the rest bad. We have
\begin{equation}
\begin{aligned}
    \E_C \left[\left\|\tilde{p}(C)^{\otimes M},q(C)^{\otimes M}\right\|_1\right]&\leq \E_C \left[\left\|\tilde{p}(C)^{\otimes M},q(C)^{\otimes M}\right\|_1|C\text{ is good}\right]+\Pr[C\text{ is bad}]\\
    &\leq \E_C \left[M\cdot \left\|\tilde{p}(C),q(C)\right\|_1|C\text{ is good}\right]+\delta\\
    &\leq M\varepsilon + \delta,
\end{aligned}
\end{equation}
where the first line follows from the law of total expectation and the second line follows from subadditivity of total variation distance with respect to tensor product. Therefore, statistical indistinguishability is guaranteed by choosing $\varepsilon=0.01/M$ and $\delta=0.01$, which gives running time $\poly(n,M)$ in our algorithm.

\section{Generalizing to an approximation of Google and USTC's gate sets}
\label{sec:gateset}
In this section we discuss the role of gate sets in our main result. Assuming anti-concentration holds and at least $\Omega(\log n)$ depth, then in fact the only place in the proof of our main result where the gate set is relevant is in the third line of Eq.~\eqref{eq:tvdbound}. It uses a property of the Pauli paths called orthogonality (Lemma~\ref{lemma:orthogonality}), which follows from a property of the gate set which we call gate-set orthogonality (Lemma~\ref{lemma:gatesetorthogonality}). Gate-set orthogonality says that in the Pauli basis, if we consider averaging over two copies of a random gate in the gate set, then it effectively forces the input Pauli to be identical across the two copies. Lemma~\ref{lemma:gatesetorthogonality} shows that this holds as long as the gate set is closed under random Pauli.

However, in Google and USTC's experiments~\cite{Arute2019,Wu2021Strong,ZHU2022Quantum} this condition is violated. They considered random circuits with fixed 2-qubit gates and random single-qubit gates, where the 2-qubit gates are called $\fsim$ and are roughly parameterized as follows,
\begin{equation}
    \fsim(\omega_1,\omega_2,\omega_3)=\begin{bmatrix}
    1 & 0 & 0 & 0\\
    0 & 0 & e^{-i\omega_1} & 0\\
    0 & e^{-i\omega_2} & 0 & 0\\
    0 & 0 & 0 & e^{-i\omega_3}
    \end{bmatrix}.
\end{equation}
These angles are site-dependent and are determined by benchmarking experiments. The single-qubit gates are chosen randomly\footnote{Google's single qubit gates $V$ are not independent across each layer; neighboring layers does not repeat. This is still covered by Lemma~\ref{lemma:googlegateset} as it holds even for any fixed $V\in\{\sqrt{X},\sqrt{Y},\sqrt{W}\}$.} from $\{\sqrt{X},\sqrt{Y},\sqrt{W}\}$, where $W=(X+Y)/\sqrt{2}$. 

Here we consider a related gate set shown in LHS of Fig.~\ref{fig:fsim} where the main difference is that we insert random $Z$ rotations. The $\fsim$ gates have a special property that allows us to borrow randomness from $R_Z(\theta_3)$, $R_Z(\theta_4)$ and create additional random gates as $R_Z(\theta_5)$, $R_Z(\theta_6)$, leading to the equivalent gate set in RHS of Fig.~\ref{fig:fsim}. This is because of the following commutation property. By definition, we can check that for any angles $\theta_1,\theta_2,\omega=(\omega_1,\omega_2,\omega_3)$,

\begin{equation}
    R_Z(\theta_1)\otimes R_Z(\theta_2) \cdot\fsim(\omega) = \fsim(\omega)\cdot R_Z(\theta_2)\otimes R_Z(\theta_1).
\end{equation}

\begin{figure}[t]
    \centering
    \includegraphics[width=\textwidth]{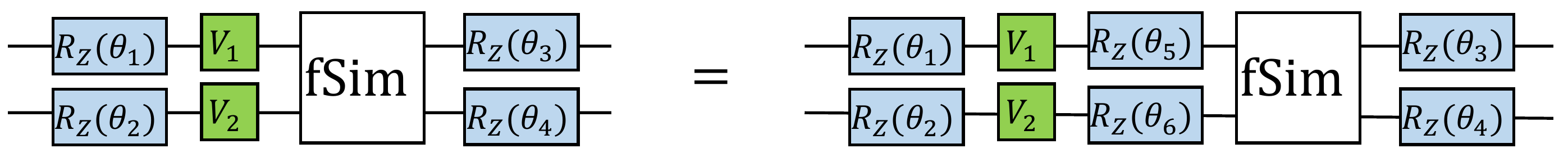}
    \caption{A gate set related to Google and USTC's experiments, for which our main result holds. LHS: the gate set consists of a fixed $\fsim$ gate surrounded by random gates from $\{\sqrt{X},\sqrt{Y},\sqrt{W}\}$ as well as random $Z$ rotations. RHS: this is equivalent to LHS due to a special property of the $\fsim$ gates.}
    \label{fig:fsim}
\end{figure}

Therefore we can consider the effective single qubit gate set $R_Z(\theta_1) V R_Z(\theta_2)$, $V\in\{\sqrt{X},\sqrt{Y},\sqrt{W}\}$. By direct calculation, we can verify that this single-qubit gate set is invariant under random Pauli and thus satisfies gate-set orthogonality.

\begin{lemma}\label{lemma:googlegateset}
    Let $\mc D$ be a distribution over single-qubit unitary defined as $R_Z(\theta_1)V R_Z(\theta_2)$ where $\theta_1,\theta_2\sim[-\pi,\pi]$ and $V\sim\{\sqrt{X},\sqrt{Y},\sqrt{W}\}$. Then for any $P,Q\in\{I,X,Y,Z\}$ such that $P\neq Q$, we have
    \begin{equation}
        \E_{U\sim \mc D}\left[U P U^\dag\otimes U Q U^\dag\right] = 0.
    \end{equation}
\end{lemma}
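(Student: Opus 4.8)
The plan is to reduce the statement to the single-qubit version of gate-set orthogonality (Lemma~\ref{lemma:gatesetorthogonality}), whose proof is identical in dimension~$2$: if a distribution $\mc D$ over $\mathbb U(2)$ is invariant under right-multiplication by a uniformly random single-qubit Pauli, then $\E_{U\sim\mc D}[UPU^\dag\otimes UQU^\dag]=0$ for $P\neq Q$, because $(U\sigma)P(U\sigma)^\dag\otimes (U\sigma)Q(U\sigma)^\dag=(U\otimes U)\big(\sigma P\sigma\otimes \sigma Q\sigma\big)(U\otimes U)^\dag$ and $\E_{\sigma\sim\{I,X,Y,Z\}}[\sigma P\sigma\otimes\sigma Q\sigma]=\E_\sigma\big[(-1)^{\langle\sigma,PQ\rangle}\big]\,P\otimes Q=0$ since $PQ$ is a non-identity Pauli and hence anticommutes with exactly half of $\{I,X,Y,Z\}$. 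Thus the whole task becomes: verify that $\mc D=\{R_Z(\theta_1)\,V\,R_Z(\theta_2):\theta_1,\theta_2\sim[-\pi,\pi],\ V\sim\{\sqrt X,\sqrt Y,\sqrt W\}\}$ is invariant under $U\mapsto U\sigma$.

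The first step I would take is to collapse the mixture over $V$. Each of $\sqrt X,\sqrt Y,\sqrt W$ is a $90^\circ$ Bloch rotation about an axis in the equatorial plane ($\hat x$, $\hat y$, and $(\hat x+\hat y)/\sqrt 2$ respectively), so they are conjugate to one another by $Z$-rotations: $\sqrt Y=R_Z(\pi/2)\sqrt X R_Z(-\pi/2)$ and $\sqrt W\propto R_Z(\pi/4)\sqrt X R_Z(-\pi/4)$, where ``$\propto$'' hides a global phase that is irrelevant since the statement only involves $U(\cdot)U^\dag$. Absorbing these conjugations into the outer/inner $R_Z$ angles, and using that the uniform law on $[-\pi,\pi]$ is the uniform law on $\mathbb R/2\pi\mathbb Z$ (hence unaffected by constant shifts or negation), each branch $V=\sqrt Y,\sqrt W$ has the same law as the branch $V=\sqrt X$. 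Therefore $\mc D$ is exactly the single distribution $\{R_Z(\theta_1)\sqrt X\,R_Z(\theta_2):\theta_1,\theta_2\text{ uniform}\}$.

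For the second step I would verify right-Pauli-invariance of this cleaner description. Using the exact relations $R_Z(\theta)X=XR_Z(-\theta)$, $R_Z(\theta)Y=YR_Z(-\theta)$, $R_Z(\theta)Z=ZR_Z(\theta)$, an inserted Pauli $\sigma$ is moved through the inner $R_Z(\theta_2)$ at the cost of possibly flipping the sign of $\theta_2$, leaving $R_Z(\theta_1)\,\sqrt X\sigma\,R_Z(\pm\theta_2)$. A one-line computation in each case shows $\sqrt X\sigma$ is again of the form (phase)$\cdot R_Z(a)\sqrt X R_Z(b)$ with the \emph{same} middle gate: $\sigma=I$ is trivial, $\sqrt X\,X\propto \sqrt X^\dag=R_Z(\pi)\sqrt X R_Z(-\pi)$, $\sqrt X\,Y\propto R_Z(\pi)\sqrt X$, and $\sqrt X\,Z\propto \sqrt X R_Z(\pi)$. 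Hence $U\sigma$ equals, up to global phase, $R_Z(\theta_1+a)\sqrt X R_Z(b\pm\theta_2)$, which has the same law as $U$. This establishes the hypothesis of the single-qubit version of Lemma~\ref{lemma:gatesetorthogonality} and finishes the proof.

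The argument is essentially all bookkeeping, and the one place I expect to need care is the precise meaning of ``invariant under random Pauli'': it need only hold for the induced measure on single-qubit \emph{channels}, so that global phases and the $SU(2)$-versus-$PU(2)$ distinction are harmless, and one must not forget that negating or constant-shifting a uniform-on-$[-\pi,\pi]$ angle leaves its law unchanged. The only genuinely ``computational'' content is the handful of identities $\sqrt X\sigma\propto R_Z(a)\sqrt X R_Z(b)$ and the $R_Z$-conjugacies among $\sqrt X,\sqrt Y,\sqrt W$, all immediate from writing each gate as a Bloch-sphere rotation.
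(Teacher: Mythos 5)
Your proof is correct and follows essentially the same route the paper intends: the paper disposes of this lemma by noting that the gate set is (projectively) invariant under right-multiplication by a random Pauli and then invoking the single-qubit analogue of Lemma~\ref{lemma:gatesetorthogonality}, which is exactly what you verify. Your preliminary collapse of the mixture over $V\in\{\sqrt{X},\sqrt{Y},\sqrt{W}\}$ via $R_Z$-conjugacy is a tidy way to organize the ``direct calculation,'' and your caveats about global phases and the uniform angles being shift/negation invariant are exactly the right points of care.
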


This implies the orthogonality condition in Lemma~\ref{lemma:orthogonality} which implies that our main result holds. An interesting open question is whether orthogonality is necessary for our main result, and whether our main result holds for the exact gate sets used in Google and USTC's experiments.

\section*{Acknowledgements}
We thank Scott Aaronson, Sergio Boixo, Adam Bouland, and Bill Fefferman for helpful discussions and feedback on the manuscript. We would also like to thank the Simons Institute for the Theory of Computing and the NSF Workshop on Quantum Advantage and Next Steps, where part of this work was done. D.A.~is supported by ISF grant number 0399494-1721/17, by Simons grant number 385590, and by Quantum ISF grant number 2137/19. X.G.~is supported by the Postdoctoral Fellowship in Quantum Science of the MPHQ, the Templeton Religion Trust Grant No. TRT 0159, and by the Army Research Office under Grant No. W911NF1910302 and MURI Grant No. W911NF2010082. Z.L., Y.L. and U.V. are supported by Vannevar Bush faculty fellowship N00014-17-1-3025, MURI Grant FA9550-18-1-0161, DOE
NQISRC Quantum Systems Accelerator grant FP00010905, and NSF QLCI Grant No.~2016245. Y.L. is also supported by NSF award DMR-1747426.

\printbibliography

\appendix
\section{Refuting XQUATH for sublinear depth random circuits}\label{sec:xquath}
Here we give a formal refutation of the XQUATH conjecture of \cite{Aaronson2020on} using the Pauli basis framework (a similar argument was first sketched in \cite{Gao2021Limitations}; here we give a more formal treatment). In this section we only consider Haar random 2-qubit gates.

The XQUATH conjecture is about the hardness of estimating the output probability $p(C,0^n)$ of an ideal random circuit $C$. It says that no efficient classical algorithm can achieve a slightly better variance compared with the trivial algorithm of outputting $\frac{1}{2^n}$.

\begin{conjecture}[XQUATH~\cite{Aaronson2020on}]
Let $\mc D$ be a distribution over quantum circuits. There is no polynomial-time classical algorithm that takes as input a quantum circuit $C\sim\mc D$ and produces a number $q(C,0^n)$ such that
\begin{equation}
    \xq:=2^{2n}\left(\E_{C\sim\mc D}\left[\left(p(C,0^n)-\frac{1}{2^n}\right)^2\right]-\E_{C\sim\mc D}\left[\left(p(C,0^n)-q(C,0^n)\right)^2\right]\right)=\Omega\left(\frac{1}{2^n}\right).
\end{equation}
\end{conjecture}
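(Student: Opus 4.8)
The statement above is a \emph{conjecture}, and the point of this section is that it is \emph{false} for Haar random circuits of sublinear depth; accordingly the plan is a disproof, i.e.\ to exhibit a polynomial-time classical algorithm whose estimate $q(C,0^n)$ of the ideal output probability $p(C,0^n)$ achieves $\xq=\Omega(1/2^n)$, contradicting the conjecture (this makes rigorous the argument sketched in~\cite{Gao2021Limitations}). The estimator is the one the Pauli path framework hands us: for a truncation parameter $\ell\ge d+1$ to be chosen, output
\begin{equation}
    q(C,0^n):=\sum_{s\in\mathsf{P}_n^{d+1}:\,|s|\le\ell}f(C,s,0^n).
\end{equation}

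First I would observe that the unique weight-$0$ path is the all-identity path and contributes exactly $1/2^n$, so $q$ refines the trivial estimator ``output $1/2^n$''. Next I would compute $\xq$ in closed form. By Lemma~\ref{lemma:totalfourierweight} the Fourier weights satisfy $W_k=0$ for $0<k\le d$, so $2^{2n}\E_C[(p(C,0^n)-1/2^n)^2]=\sum_{k\ge d+1}W_k$; using orthogonality of Fourier coefficients (Lemma~\ref{lemma:orthogonality}) to kill the cross terms and Eq.~\eqref{eq:boundarysign} to pass to $x=0^n$, likewise $2^{2n}\E_C[(p(C,0^n)-q(C,0^n))^2]=\sum_{k>\ell}W_k$. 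Subtracting yields the clean identity
\begin{equation}
    \xq=\sum_{k=d+1}^{\ell}W_k\ \ge\ W_{d+1},
\end{equation}
so it suffices to lower bound a single Fourier weight.

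For that I would invoke Lemma~\ref{lemma:smallestweightpaths}: there are exactly $n\cdot 2^{d}\cdot 3^{d-1}$ legal Pauli paths of Hamming weight $d+1$, and for each the computation in Eq.~\eqref{eq:pathcontribution} applies with $G(s)=d$ (each of the $d$ active $2$-qubit gates sees a non-\texttt{II} input and a non-\texttt{II} output), giving contribution $(1/15)^{d}$ to the Fourier weight. Hence
\begin{equation}
    \xq\ \ge\ W_{d+1}\ =\ n\cdot 2^{d}\cdot 3^{d-1}\cdot\Big(\tfrac{1}{15}\Big)^{d}\ =\ \tfrac{n}{3}\Big(\tfrac{2}{5}\Big)^{d},
\end{equation}
which is $\omega(2^{-n})$ whenever $d=o(n)$, and in particular is $n^{-O(1)}=\Omega(2^{-n})$ when $d=O(\log n)$. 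Finally I would check efficiency: with $\ell=d+1$ one only needs the $n\cdot 2^{d}\cdot 3^{d-1}$ weight-$(d+1)$ paths, which can be listed directly (choose the initial qubit, then \texttt{IR}/\texttt{RI} at each of the $d$ layer transitions, then the Pauli letters) or via the enumerator of Lemma~\ref{lemma:legalpaths}; since each $f(C,s,0^n)$ costs $O(nd)$, for $d=O(\log n)$ the estimator runs in $\poly(n)$ time, refuting XQUATH as stated, while for $\log n\ll d=o(n)$ the same estimator runs in time $2^{o(n)}$ and still achieves $\xq\gg 2^{-n}$, refuting any variant of XQUATH robust to subexponential-time algorithms.

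The main obstacle is the tension hidden in the choice of $\ell$: to make $\xq$ provably $\Omega(2^{-n})$ one wants $\sum_{d+1\le k\le\ell}W_k$ controlled from below, while to keep the algorithm polynomial-time one needs $\ell$ small enough that the legal low-weight paths can be enumerated (Lemma~\ref{lemma:legalpaths}); the clean resolution is $\ell=d+1$, resting everything on the single explicit weight $W_{d+1}$. A secondary point is that the closed form $\xq=\sum_{d+1\le k\le\ell}W_k$ uses orthogonality of Fourier coefficients, hence Haar-random $2$-qubit gates, so the refutation as presented is confined to that gate set --- consistent with the restriction announced at the start of this section. No concentration argument over $C$ is needed, since XQUATH speaks only about the expectation over $C\sim\mc D$.
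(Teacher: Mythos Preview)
Your argument is correct, and your closed-form identity $\xq=\sum_{k=d+1}^{\ell}W_k$ together with the explicit evaluation $W_{d+1}=\tfrac{n}{3}(2/5)^d$ is clean. However, your route differs from the paper's in one important respect. The paper refutes XQUATH using a \emph{single} Pauli path $\vec{s^*}=(s^*,\dots,s^*)$ with $s^*=Z_1\otimes I^{\otimes n-1}/\sqrt{2^n}$, giving the estimator $q(C,0^n)=\tfrac{1}{2^n}+f(C,\vec{s^*},0^n)$; the same orthogonality calculation then yields $\xq=2^{2n}\E_C f(C,\vec{s^*},0^n)^2=(1/15)^d$, computed in time $O(nd)$ regardless of $d$. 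This refutes XQUATH (a statement about polynomial-time algorithms) for the entire range $d=o(n)$. Your estimator sums over all $n\cdot 2^d\cdot 3^{d-1}$ weight-$(d{+}1)$ paths, which buys you the sharper value $\xq\ge\tfrac{n}{3}(2/5)^d\gg(1/15)^d$, but at running time $\poly(n)\cdot 6^d$; consequently your refutation of XQUATH as stated is confined to $d=O(\log n)$, and for $\log n\ll d=o(n)$ you only obtain the weaker conclusion about subexponential-time algorithms that you noted. Since XQUATH already falls to $\xq=(1/15)^d$, the extra factor of $n\cdot 6^d/3$ in your bound is not needed, and the single-path estimator is both simpler and strictly more widely applicable. (Minor point: your appeal to Eq.~\eqref{eq:boundarysign} is superfluous here, since you are already working at $x=0^n$.)
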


The intuition behind this conjecture is the Feynman path integral in the computational basis (Eq.~\eqref{eq:feynmanpathcomputational}). There are $2^{O(n d)}$ Feynman paths that are uniform, meaning that each path gives the same contribution on average. Therefore, a polynomial time classical algorithm cannot obtain a good estimate by calculating polynomial paths. In contrast, the Pauli path integral is highly non-uniform and the contribution of a path decays exponentially with the Hamming weight (Eq.~\eqref{eq:pathcontribution}).

Remarkably, we show that using the Pauli path integral, a single path suffices to refute this conjecture below linear depth.

\begin{theorem}\label{thm:xquath}
    Let $\mc D$ be a distribution over quantum circuits with Haar random 2-qubit gates. Then there exists an algorithm that, on input $C$, outputs a number $q(C,0^n)$ in time $O(nd)$ that achieves
    \begin{equation}
        \xq=\left(\frac{1}{15}\right)^d.
    \end{equation}
    Therefore XQUATH is false for random circuits with depth $d=o(n)$.
\end{theorem}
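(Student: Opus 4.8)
The plan is to have the classical algorithm report the contribution of the all-identity Pauli path together with the contribution of a \emph{single} carefully chosen weight-$(d+1)$ path, and to show that this one extra path already improves the variance by exactly the claimed amount. Recall that $p(C,0^n)=\sum_{s\in\mathsf P_n^{d+1}}f(C,s,0^n)$, and that for the all-identity path $\bm I$ every intermediate transition amplitude equals $1$ while both boundary inner products equal $1/\sqrt{2^n}$, so $f(C,\bm I,0^n)=1/2^n$ exactly; thus the trivial estimate $1/2^n$ is precisely the contribution of $\bm I$. Moreover, by Lemma~\ref{lemma:totalfourierweight} we have $W_k=0$ for $1\le k\le d$, so the lightest paths that contribute to the variance have Hamming weight $d+1$.

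Fix once and for all the legal Pauli path $s^\star$ all of whose components carry a $Z$ on qubit $1$ and the identity elsewhere. This is legal in the sense of Definition~\ref{def:legalpath}: each of the $d$ gates touching qubit $1$ has non-\texttt{II} input and non-\texttt{II} output, every other gate carries \texttt{II}$\to$\texttt{II}, and $s^\star_0,s^\star_d$ contain only \texttt{I} and \texttt{Z}; it has Hamming weight $d+1$; and in the notation of Eq.~\eqref{eq:pathcontribution} it has $G(s^\star)=d$, one ``active'' gate per layer. The algorithm, on input $C$, computes $f(C,s^\star,0^n)=\frac{1}{2^n}\prod_{i=1}^d\lmel{s^\star_i}{\mc U_i}{s^\star_{i-1}}$ (only the gate on qubit $1$ in each layer contributes a nontrivial factor) in time $O(nd)$, and outputs
\begin{equation}
    q(C,0^n):=\frac{1}{2^n}+f(C,s^\star,0^n).
\end{equation}

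To evaluate $\xq$, write $p(C,0^n)-\tfrac1{2^n}=\sum_{s\ne \bm I}f(C,s,0^n)$ and $p(C,0^n)-q(C,0^n)=\sum_{s\notin\{\bm I,\,s^\star\}}f(C,s,0^n)$. Expanding both squares and taking $\E_C$, orthogonality of Fourier coefficients (Lemma~\ref{lemma:orthogonality}) kills every cross term, so
\begin{equation}
    \E_C\left[\Big(p(C,0^n)-\tfrac1{2^n}\Big)^2\right]-\E_C\left[\big(p(C,0^n)-q(C,0^n)\big)^2\right]=\E_C\left[f(C,s^\star,0^n)^2\right].
\end{equation}
Hence $\xq=2^{2n}\,\E_C[f(C,s^\star,0^n)^2]$, which by the per-path computation in Eq.~\eqref{eq:pathcontribution} (using Lemma~\ref{lemma:haarrandomgate}, i.e., Eq.~\eqref{eq:2qubitgate}) equals $(1/15)^{G(s^\star)}=(1/15)^d$. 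Since $(1/15)^d=2^{-d\log_2 15}$, when $d=o(n)$ this is $2^{-o(n)}$, which is $\Omega(2^{-n})$ (in fact far larger), so the algorithm refutes XQUATH for every distribution of random circuits with Haar random $2$-qubit gates and depth $d=o(n)$.

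I expect no genuine obstacle: the only two points requiring care are that $f(C,\bm I,0^n)$ equals $1/2^n$ \emph{exactly} (so the trivial estimate coincides with one Pauli path and nothing else), and that the variance gap telescopes \emph{exactly} to a single squared Fourier coefficient, which is entirely due to orthogonality. The conceptual content is simply that, in contrast with the computational-basis Feynman sum that motivates XQUATH, a single Pauli path $s^\star$ already carries Fourier weight $(1/15)^d\gg 2^{-n}$.
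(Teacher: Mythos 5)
Your proposal is correct and follows essentially the same route as the paper: the same estimator $q(C,0^n)=\tfrac{1}{2^n}+f(C,s^\star,0^n)$ with the single weight-$(d+1)$ path $Z_1$ in every layer, with the variance gap reduced to $2^{2n}\E_C[f(C,s^\star,0^n)^2]=(1/15)^d$ via orthogonality (Lemma~\ref{lemma:orthogonality}) and the Haar moment of Lemma~\ref{lemma:haarrandomgate}. The only difference is presentational: the paper expands $\xq$ algebraically using $\E_C[p(C,0^n)]=\tfrac1{2^n}$ and $\E_C[f(C,s^\star,0^n)]=0$, whereas you identify $\tfrac1{2^n}$ with the all-identity path and let the cross terms cancel in the path decomposition, which is the same calculation in different bookkeeping.
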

\begin{proof}
    On input $C=U_d\cdots U_1$, define the algorithm as computing
    \begin{equation}
        q(C,0^n):=\frac{1}{2^n}+\lbraket{0^n}{s^*}\lmel{s^*}{\mc U_d}{s^*}\cdots \lmel{s^*}{\mc U_1}{s^*}\lbraket{s^*}{0^n}=\frac{1}{2^n}+f(C,\vec{s^*},0^n)
    \end{equation}
    where $s^*=\frac{1}{\sqrt{2^n}}Z_1\otimes I^{\otimes n-1}$ ($Z_1$ acts on the first qubit). This takes time $O(nd)$. Then
    \begin{equation}\label{eq:XQUATH}
        \begin{aligned}
        \xq&=2^{2n}\E_{C\sim\mc D}\left(\frac{1}{2^{2n}}-\frac{2}{2^n}p(C,0^n)-q(C,0^n)^2+2p(C,0^n)q(C,0^n)\right)\\
        &=2^{2n}\E_{C\sim\mc D}\left(-\frac{1}{2^{2n}}-q(C,0^n)^2+2p(C,0^n)q(C,0^n)\right)\\
        &=2^{2n}\E_{C\sim\mc D}\left(-\frac{2}{2^{2n}}-f(C,\vec{s^*},0^n)^2+2p(C,0^n)q(C,0^n)\right)\\
        &=2^{2n}\E_{C\sim\mc D}\left(-f(C,\vec{s^*},0^n)^2+2p(C,0^n)f(C,\vec{s^*},0^n)\right)\\
        &=2^{2n}\E_{C\sim\mc D}\left(-f(C,\vec{s^*},0^n)^2+2f(C,\vec{s^*},0^n)^2\right)\\
        &=2^{2n}\E_{C\sim\mc D}f(C,\vec{s^*},0^n)^2\\
        &=\left(\frac{1}{15}\right)^d.
        \end{aligned}
    \end{equation}
Here, the first line is by calculation; the second line is because $\E_{C\sim \mc D}[p(C,0^n)]=\frac{1}{2^n}$; the third line is because $\E_{C\sim \mc D}[f(C,\vec{s^*},0^n)]=0$ which follows from orthogonality with the all-identity path; the fourth line is again because $\E_{C\sim \mc D}[p(C,0^n)]=\frac{1}{2^n}$; the fifth line follows from orthogonality (Lemma~\ref{lemma:orthogonality}); the final step follows from the fact that there are exactly $d$ 2-qubit gates that has a non-identity at the input and output, and each gate contributes a $\frac{1}{15}$ factor due to Lemma~\ref{lemma:haarrandomgate}.
\end{proof}

As XQUATH is closely related to the XEB test, next we give a similar result by applying the above algorithm to XEB. Note that in actual experiments the XEB should be viewed as a statistical test, and our main result already implies that no such tests can distinguish between noisy RCS and the efficient classical algorithm in our main result. Thus the discussions below are for demonstration purposes, and for simplicity we only consider the expected value of XEB, and show that one Pauli path already suffices to achieve $2^{-O(d)}$ XEB.

For a random circuit $C$, let $p(C,x)$ be the output distribution of $C$, and let $q(C,x)$ be the output distribution of the noisy implementation of $C$ or a simulation algorithm. The expected value of the linear cross entropy is defined as
\begin{equation}
    \xeb := 2^n\E_C\sum_{x\in\{0,1\}^n}p(C,x)q(C,x) - 1.
\end{equation}

First we consider noisy random circuits with the same model as in our main result. A useful property is that the XEB of noisy random circuits can be viewed as the Fourier weight polynomial.
\begin{equation}
    \begin{aligned}
        \xeb &= 2^n\E_C\sum_{x\in\{0,1\}^n}p(C,x)q(C,x) - 1\\
        &=2^n\E_C\sum_{x\in\{0,1\}^n}\sum_{s}(1-\gamma)^{|s|}f(C,s,x)^2 - 1\\
        &=2^{2n}\E_C\sum_{s}(1-\gamma)^{|s|}f(C,s,0^n)^2 - 1\\
        &=\sum_{k>0}(1-\gamma)^{k} W_k.
    \end{aligned}
\end{equation}
Here, the second line follows from the Pauli path integral and orthogonality (Lemma~\ref{lemma:orthogonality}); the third line is by Eq.~\eqref{eq:boundarysign}; the fourth line is by definition of Fourier weight and the fact that $W_0=1$.

\begin{theorem}
    Assuming anti-concentration, the linear cross entropy of a noisy random circuit with $\gamma$ depolarizing noise satisfies
    \begin{equation}
        (1-\gamma)^{d+1}\cdot n\cdot 2^d\cdot 3^{d-1}\cdot \left(\frac{1}{15}\right)^d \leq\xeb\leq O(1)\cdot e^{-\gamma d}.
    \end{equation}
    Note that anti-concentration is only used for the upper bound; the lower bound does not require anti-concentration.
\end{theorem}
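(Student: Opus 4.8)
The plan is to read everything off the identity $\xeb=\sum_{k>0}(1-\gamma)^kW_k$ derived immediately above the statement, together with the structural facts about the Fourier weights $W_k$ already established. Since Lemma~\ref{lemma:totalfourierweight} gives $W_k=0$ for all $0<k\le d$, the sum effectively starts at $k=d+1$, so
\begin{equation}
    \xeb=\sum_{k\ge d+1}(1-\gamma)^kW_k.
\end{equation}
Here each $W_k$ is manifestly nonnegative, being (up to the $2^{2n}$ normalization) a sum over legal weight-$k$ paths of the expected squares $\E_C f(C,s,0^n)^2$. The proof then splits into two one-line estimates.

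For the upper bound I would factor out the largest surviving power of $(1-\gamma)$: since $(1-\gamma)^k\le(1-\gamma)^{d+1}$ for every $k\ge d+1$,
\begin{equation}
    \xeb\le(1-\gamma)^{d+1}\sum_{k\ge d+1}W_k\le O(1)\cdot(1-\gamma)^{d+1}\le O(1)\cdot e^{-\gamma d},
\end{equation}
where the middle inequality is item~3 of Lemma~\ref{lemma:totalfourierweight} — the only place anti-concentration is invoked — and the last step uses $1-\gamma\le e^{-\gamma}$.

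For the lower bound I would simply discard all terms with $k>d+1$, which only decreases the sum since every $W_k\ge0$, leaving $\xeb\ge(1-\gamma)^{d+1}W_{d+1}$. It remains to evaluate $W_{d+1}$ exactly. By Lemma~\ref{lemma:smallestweightpaths} there are precisely $n\cdot2^d\cdot3^{d-1}$ legal Pauli paths of Hamming weight $d+1$, and by the computation in Eq.~\eqref{eq:pathcontribution} each contributes $(1/15)^{G(s)}$ to $W_{d+1}$, where $G(s)$ counts the $2$-qubit gates with both input and output not \texttt{II}. A legal path of weight $d+1$ has $|s_i|=1$ for every $i=0,\dots,d$, so in each of the $d$ gate layers exactly one gate acts on the single non-identity Pauli, and by legality that gate's output is also non-identity while every other gate in the layer has input and output \texttt{II}; hence $G(s)=d$ for every such path. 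This gives $W_{d+1}=n\cdot2^d\cdot3^{d-1}\cdot(1/15)^d$ and therefore the claimed lower bound, with no appeal to anti-concentration.

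The argument is essentially accounting built on already-proved lemmas; the one step I would be most careful about is verifying that $G(s)=d$ holds \emph{identically} — not merely on average — for every legal path of the minimal weight $d+1$, since this is precisely what lets the exact count of Lemma~\ref{lemma:smallestweightpaths} propagate to an exact value of $W_{d+1}$ rather than only a bound, which in turn is what makes the lower bound match the prefactor in the statement.
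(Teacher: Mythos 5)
Your proposal is correct and follows essentially the same route as the paper: the identity $\xeb=\sum_{k\geq d+1}(1-\gamma)^k W_k$, the total Fourier weight bound (the only use of anti-concentration) for the upper bound, and the exact evaluation $W_{d+1}=n\cdot 2^d\cdot 3^{d-1}\cdot(1/15)^d$ via Lemma~\ref{lemma:smallestweightpaths} for the lower bound. Your explicit verification that $G(s)=d$ for every legal weight-$(d+1)$ path is exactly the detail the paper leaves implicit when asserting each such path contributes $(1/15)^d$, so nothing is missing.
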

\begin{proof}
    For the upper bound we use the upper bound on total Fourier weight (Lemma~\ref{lemma:totalfourierweight}),
    \begin{equation}
        \xeb=\sum_{k\geq d+1}(1-\gamma)^{k} W_k\leq O(1)\cdot (1-\gamma)^{d+1}\leq O(1)\cdot e^{-\gamma d}.
    \end{equation}
    The lower bound follows from $\xeb\geq (1-\gamma)^{d+1}W_{d+1}$. We have $W_{d+1}=n\cdot 2^d\cdot 3^{d-1}\cdot \left(\frac{1}{15}\right)^d$ because we have shown in Lemma~\ref{lemma:smallestweightpaths} that the number of legal Pauli paths at degree $d+1$ equals $n\cdot 2^d\cdot 3^{d-1}$; each of them contributes $\left(\frac{1}{15}\right)^d$ to the Fourier weight.
\end{proof}

Next, consider a classical algorithm which samples from the same distribution as in the proof of Theorem~\ref{thm:xquath}, which is the following distribution
\begin{equation}
    q(C,x)=\frac{1}{2^n}+\lbraket{x}{s^*}\lmel{s^*}{\mc U_d}{s^*}\cdots \lmel{s^*}{\mc U_1}{s^*}\lbraket{s^*}{0^n}=\frac{1}{2^n}+f(C,\vec{s^*},x)
\end{equation}
where $s^*=\frac{1}{\sqrt{2^n}}Z_1\otimes I^{\otimes n-1}$ ($Z_1$ acts on the first qubit). $\{q(C,x)\}_{x\in\{0,1\}^n}$ is a probability distribution because $\left|f(C,\vec{s^*},x)\right|\leq\frac{1}{2^n}$ and $\sum_{x\in\{0,1\}^n}f(C,\vec{s^*},x)=0$.

The algorithm only samples the first qubit non-trivially, and uniformly on all other qubits.
\begin{theorem}
    There exists an efficient classical algorithm that, given a random circuit, outputs a sample in time $O(n d)$ that achieves
    \begin{equation}
        \xeb=\left(\frac{1}{15}\right)^d.
    \end{equation}
\end{theorem}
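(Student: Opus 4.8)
The plan is to repeat the computation in the proof of Theorem~\ref{thm:xquath}, but now keeping track of the whole output distribution rather than only the probability of $0^n$, and to observe that the single Pauli path $\vec{s^*}=(s^*,\dots,s^*)$ with $s^*=\tfrac{1}{\sqrt{2^n}}Z_1\otimes I^{\otimes n-1}$ already contributes the entire $\xeb$ value.

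First I would spell out the sampling algorithm and its running time. Set $\beta:=\lmel{s^*}{\mc U_d}{s^*}\cdots\lmel{s^*}{\mc U_1}{s^*}\in\mathbb{R}$. Since $\lbraket{s^*}{0^n}=\tfrac{1}{\sqrt{2^n}}$ and $\lbraket{x}{s^*}=\tfrac{(-1)^{x_1}}{\sqrt{2^n}}$, we have $f(C,\vec{s^*},x)=\tfrac{(-1)^{x_1}}{2^n}\beta$, hence $q(C,x)=\tfrac{1}{2^n}\bigl(1+(-1)^{x_1}\beta\bigr)$. As $|\lmel{p}{\mc U}{q}|\le 1$ for normalized two-qubit Paulis, $|\beta|\le 1$, so $q(C,\cdot)$ is the product distribution in which bit $1$ equals $0$ with probability $(1+\beta)/2$ and every other bit is uniform and independent; in particular it is a genuine probability distribution. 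Because $s^*$ is supported on a single qubit, in each layer only the gate touching qubit $1$ has a non-\texttt{II} transition amplitude (every other gate acts as the identity channel on $s^*$), so $\beta$ can be computed, and a sample drawn, in time $O(nd)$.

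Next I would compute $\xeb$. Expanding $q(C,x)=\tfrac{1}{2^n}+f(C,\vec{s^*},x)$,
\begin{align*}
\xeb &= 2^n\E_C\sum_{x}p(C,x)\Bigl(\tfrac{1}{2^n}+f(C,\vec{s^*},x)\Bigr)-1\\
&=\E_C\sum_x p(C,x)+2^n\E_C\sum_x\Bigl(\sum_s f(C,s,x)\Bigr)f(C,\vec{s^*},x)-1\\
&=2^n\E_C\sum_x\sum_s f(C,s,x)\,f(C,\vec{s^*},x),
\end{align*}
using $\sum_x p(C,x)=1$ to cancel the first and last terms. By orthogonality (Lemma~\ref{lemma:orthogonality}), every term with $s\neq\vec{s^*}$ vanishes in expectation, so the right-hand side equals $2^n\E_C\sum_x f(C,\vec{s^*},x)^2=2^{2n}\E_C f(C,\vec{s^*},0^n)^2$ by Eq.~\eqref{eq:boundarysign}. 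Finally, exactly as in the proof of Theorem~\ref{thm:xquath}, the constant path $\vec{s^*}$ passes through exactly $d$ two-qubit gates with non-identity input and output (one per layer, the gate on qubit $1$), each contributing a factor $\tfrac{1}{15}$ by Lemma~\ref{lemma:haarrandomgate}, whence $2^{2n}\E_C f(C,\vec{s^*},0^n)^2=(1/15)^d$ and $\xeb=(1/15)^d$.

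There is essentially no real obstacle; the only points requiring care are that $q(C,\cdot)$ is a bona fide distribution (which reduces to $|\beta|\le 1$) and that the single path $\vec{s^*}$ really touches exactly $d$ gates with non-identity input and output — this is where we use that the path is constant in time and that $s^*$ is supported on one qubit, so at each layer it is precisely the gate on qubit $1$ that carries a non-\texttt{II} Pauli. Note also that, unlike the upper bound on $\xeb$ for noisy RCS, this identity is exact and requires no anti-concentration assumption. Everything else is a reuse of the orthogonality lemma and the boundary-sign identity already established, together with the two-design value $\E_U\lmel{p}{\mc U}{p}^2=1/15$ for $p\neq I^{\otimes 2}/2$.
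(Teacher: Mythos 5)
Your proposal is correct and follows essentially the same route as the paper's proof: expand $q(C,x)=\tfrac{1}{2^n}+f(C,\vec{s^*},x)$ in the XEB, cancel the uniform part, apply the Pauli path integral and orthogonality (Lemma~\ref{lemma:orthogonality}) to isolate the single path $\vec{s^*}$, use Eq.~\eqref{eq:boundarysign}, and evaluate the $d$ per-layer Haar averages via Lemma~\ref{lemma:haarrandomgate}. The extra care you take (that $|\beta|\le 1$ makes $q(C,\cdot)$ a genuine distribution and that sampling takes $O(nd)$ time) matches what the paper establishes in the text preceding the theorem.
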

\begin{proof}
\begin{equation}
    \begin{aligned}
        \xeb&= 2^n\E_C\sum_{x\in\{0,1\}^n}p(C,x)f(C,s^*,x)\\
        &= 2^n\E_C\sum_{x\in\{0,1\}^n}\sum_{s=(s_0,\dots,s_d)\in\mathsf{P}_n}f(C,s,x)f(C,s^*,x)\\
        &=2^n\E_C\sum_{x\in\{0,1\}^n}f(C,s^*,x)^2\\
        &=2^{2n}\E_Cf(C,s^*,0^n)^2=\left(\frac{1}{15}\right)^d.
    \end{aligned}
\end{equation}
Here, the first line is by definition of $q(C,x)$; the second line is by Pauli path integral; the third line follows from orthogonality (Lemma~\ref{lemma:orthogonality}); the fourth line is by Eq.~\eqref{eq:boundarysign}.
\end{proof}

\section{Improved bounds on the convergence of noisy random circuits to the uniform distribution}
\label{sec:convergetouniform}

In this section we develop improved bounds on the total variation distance between the output distribution of noisy random circuits and the uniform distribution, focusing on Haar random 2-qubit gates and depolarizing noise. The result can be directly applied to other depolarizing-like noise models as in~\cite{Gao2018efficient,Deshpande2021tight}.

Let $U$ denote the uniform distribution. In \cite{Deshpande2021tight}, the authors prove that the average total variation distance is lower bounded as
\begin{equation}
    \E_C\left[\frac{1}{2}\left\|\tilde{p}(C)-U\right\|_1\right]\geq\frac{(1-\gamma)^{2d}}{4\cdot 30^d}.
\end{equation}
Here we improve this bound by applying the Fourier weight and Pauli path integral technique. Note that in order to match with the notation in \cite{Deshpande2021tight}, we remove the layer of noise channels applied before the first layer of gates in Fig.~\ref{fig:rcs} (b).

\begin{theorem}
Let $\mc D$ be a distribution over quantum circuits on any parallel circuit architecture with Haar random 2-qubit gates. Let $\tilde{p}(C)$ be the output distribution of $C$ subject to depolarizing noise with error rate $\gamma$ on each qubit after each layer of gates, then we have 
\begin{equation}
\E_{C\sim\mc D}\left[\frac{1}{2}\left\|\tilde{p}(C)-U\right\|_1\right]\geq\frac{1}{12}\cdot (1-\gamma)^{2d}\cdot\left(\frac{2}{5}\right)^d.
\end{equation}
\end{theorem}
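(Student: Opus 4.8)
The plan is to relate the $L_1$ distance from the uniform distribution to a low-order slice of the Fourier weights, just as the total variation distance bound in Section~\ref{sec:boundtvd} was related to a high-order tail. The key observation is that the uniform distribution $U$ corresponds exactly to the all-identity Pauli path: $U(x) = 1/2^n = \lbraket{x}{I^{\otimes n}/\sqrt{2^n}} \cdot \lbraket{I^{\otimes n}/\sqrt{2^n}}{0^n}$, so $\tilde p(C,x) - U(x) = \sum_{s : |s| \geq 1} (1-\gamma)^{|s|} f(C,s,x)$. Since the smallest nonzero Hamming weight of a legal path is $d+1$ (Lemma~\ref{lemma:totalfourierweight}), in fact $\tilde p(C,x) - U(x) = \sum_{s : |s| \geq d+1} (1-\gamma)^{|s|} f(C,s,x)$.

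First I would lower bound $\E_C[\frac12\|\tilde p(C) - U\|_1]$ by a second-moment / Cauchy–Schwarz argument in the reverse direction: using $\|v\|_1 \geq \|v\|_2^2 / \|v\|_\infty$ together with $|\tilde p(C,x) - U(x)| \leq $ (something like $1/2^n$ after controlling the tail), or more cleanly using the bound $\E_C \|v\|_1 \geq \E_C[\|v\|_2^2]/\max_x \E_C$-type estimates. The cleaner route, which I expect the authors take, is to isolate the degree-$(d+1)$ contribution, which dominates: write $\frac12\|\tilde p(C) - U\|_1 \geq \frac12 |\langle \tilde p(C) - U, \chi\rangle|$ for a suitable sign vector $\chi$, or directly bound $\E_C\|\tilde p - U\|_1$ from below by the $\ell_1$-mass concentrated on the weight-$(d+1)$ Fourier modes minus the higher-weight tail. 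Using orthogonality (Lemma~\ref{lemma:orthogonality}) and Eq.~\eqref{eq:boundarysign}, the weight-$(d+1)$ part contributes $(1-\gamma)^{2(d+1)} W_{d+1}$ to the expected squared $L_2$ norm, and by Lemma~\ref{lemma:smallestweightpaths} and the $(1/15)^d$ per-path contribution we have $W_{d+1} = n \cdot 2^d \cdot 3^{d-1} \cdot (1/15)^d = \frac{n}{3}(2/5)^d$; the higher-weight tail is controlled by $\sum_{k > d+1}(1-\gamma)^{2k} W_k$ and is smaller by a $(2/5)$-type factor per additional unit of weight (each extra layer of weight costs another factor bounded by the $2/15$ or $1/15$ transition, beating $2/5$). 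So the weight-$(d+1)$ term is the leading contribution.

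To convert an $L_2$-type lower bound into an $L_1$ lower bound one needs an $L_\infty$ (or $L_4$) upper bound on $\tilde p(C,x) - U(x)$; the natural bound is $|\tilde p(C,x) - U(x)| \leq |f(C,s,x)|$ summed, but the crude $\sum |f| $ bound is too lossy, so I expect the argument instead passes through a per-$x$ second moment: bound $\E_C(\tilde p(C,x)-U(x))^2$ or use that the dominant weight-$(d+1)$ modes each have magnitude exactly $\sqrt{(1/15)^d}/2^n$ times a gate-product sign, so $|f(C,s,0^n)|$ is deterministic on the support and one can count supports. The main obstacle is precisely this $L_1$-from-$L_2$ step: getting the constant $\frac{1}{12}$ and the clean exponential base $2/5$ requires carefully bounding the ratio of $\ell_1$ to $\ell_2$ mass, handling the fact that $f(C,s,x)$ has fluctuating signs, and showing the higher Fourier weights do not cancel the leading term. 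I would handle the sign issue by noting that for each fixed legal weight-$(d+1)$ path the value $f(C,s,0^n)^2 = (1/15)^d/2^{2n}$ is constant over $C$ in magnitude (the randomness only affects phases that square away), so on the event that the weight-$(d+1)$ contribution dominates — which holds with probability bounded below using the tail bound $\sum_{k>d+1}(1-\gamma)^{2k}W_k = O((1-\gamma)^{2d}(2/5)^d \cdot (\text{small}))$ via Lemma~\ref{lemma:totalfourierweight} — one gets $\frac12\|\tilde p(C) - U\|_1 \gtrsim \sqrt{W_{d+1}}(1-\gamma)^{d+1}/(\text{const})$, and after tracking constants this yields $\frac{1}{12}(1-\gamma)^{2d}(2/5)^d$ (the $(1-\gamma)^{2d}$ rather than $(1-\gamma)^{2(d+1)}$ coming from the modified noise placement noted before the theorem statement).
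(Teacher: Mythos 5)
Your proposal assembles the right raw ingredients (orthogonality, the weight-$(d+1)$ path count from Lemma~\ref{lemma:smallestweightpaths}, the $(1/15)^d$ per-path factor), but the one step that actually matters --- converting second-moment (Fourier) information into a lower bound on the $L_1$ distance --- is never carried out, and the claims you make to bridge it are incorrect. First, $f(C,s,0^n)^2$ is \emph{not} deterministic for a weight-$(d+1)$ path: each factor $\lmel{q}{\mc U}{p}=\Tr(q U p U^\dag)$ varies continuously with the Haar random gate, and only its \emph{second moment} equals $1/15$; there are no ``phases that square away.'' Second, the dominance claim fails: the weight-$(d+1)$ slice carries Fourier weight $W_{d+1}=\frac{n}{3}(2/5)^d$, which is exponentially small in $d$, while the higher-weight tail $\sum_{k>d+1}(1-\gamma)^{2k}W_k$ is only suppressed by $(1-\gamma)^2$ per unit of weight and (under anti-concentration) totals $\Theta(1)\cdot(1-\gamma)^{2(d+2)}$-scale mass; moreover this theorem assumes no anti-concentration at all, so you have no handle on the tail whatsoever. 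Third, the reverse H\"older route ($\|v\|_1\geq\|v\|_2^2/\|v\|_\infty$ or $\|v\|_1\geq\|v\|_2^3/\|v\|_4^2$) requires a sup-norm or fourth-moment bound on $\tilde p(C,x)-U(x)$ over the random circuit, which the second-moment machinery of the paper does not supply; so ``after tracking constants this yields $\frac{1}{12}(1-\gamma)^{2d}(2/5)^d$'' is unsupported. Note also that your leading term carries a factor of $n$ (through $W_{d+1}$) that is absent from the target bound, a discrepancy your sketch never resolves.

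The paper sidesteps the $L_1$-versus-$L_2$ difficulty entirely with a different reduction, which is the missing idea: it lower bounds the total variation distance by the bias of a single-qubit marginal, $\frac{1}{2}\left\|\tilde p(C)-U\right\|_1\geq\left|\tilde p_0-\tfrac12\right|\geq\left(\tilde p_0-\tfrac12\right)^2$, where $\tilde p_0$ is the probability that the first output bit is $0$ (the squaring is valid because the bias is at most one). Writing $\tilde p_0-\tfrac12$ as a Pauli path integral with the fixed final operator $Z_1\otimes I^{\otimes n-1}$, orthogonality gives $\E_C\bigl(\tilde p_0-\tfrac12\bigr)^2=\sum_{s:|s|>0}\E_C\, g(C,s)^2$ with every term nonnegative, so one may simply \emph{drop} all but the weight-$(d+1)$ paths --- no tail control, sign analysis, or anti-concentration needed. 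There are $2^d\cdot 3^{d-1}$ such paths (the factor $n$ from Lemma~\ref{lemma:smallestweightpaths} disappears because the endpoint is pinned to $Z_1$), each contributing $(1-\gamma)^{2d}/(4\cdot 15^d)$ in expectation, which yields exactly $\frac{1}{12}(1-\gamma)^{2d}\left(\frac{2}{5}\right)^d$. To rescue your route you would have to prove a genuine $L_\infty$ or $L_4$ upper bound on $\tilde p(C)-U$, a substantially harder task than the marginal trick.
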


\begin{proof}
Let $\delta:=\frac{1}{2}\left\|\tilde{p}(C)-U\right\|_1$ be the total variation distance, and let
\begin{equation}
    \tilde{p}_0:=\sum_{y\in\{0,1\}^{n-1}}\tilde{p}(C,0y)
\end{equation}
be the marginal output probability of the first qubit being 0. As also noted by \cite{Deshpande2021tight}, the total variation distance can be lower bounded as $\delta\geq\left|\tilde{p}_0-\frac{1}{2}\right|\geq\left(\tilde{p}_0-\frac{1}{2}\right)^2$.

Following the proof of Lemma~\ref{lemma:marginal}, the marginal output probability can be written as the Pauli path integral
\begin{equation}
    \tilde{p}_0=\sum_{s=(s_0,\dots,s_d)\in\mathsf{P}_n^{d+1}}g(C,s)
\end{equation}
where
\begin{equation}
\begin{aligned}
    g(C,s)&:=\lbraket{0I^{\otimes n-1}}{s_d}\lmel{s_d}{\mc E^{\otimes n}\mc U_d}{s_{d-1}}\cdots \lmel{s_1}{\mc E^{\otimes n}\mc U_1}{s_0}\lbraket{s_0}{0^n}\\
    &=(1-\gamma)^{|s_d|+\cdots +|s_1|}\lbraket{0I^{\otimes n-1}}{s_d}\lmel{s_d}{\mc U_d}{s_{d-1}}\cdots \lmel{s_1}{\mc U_1}{s_0}\lbraket{s_0}{0^n}.
\end{aligned}
\end{equation}
The trivial all-identity path contributes $\frac{1}{2}$ to $\tilde{p}_0$. Therefore,
\begin{equation}\label{eq:tvdpaths}
    \begin{aligned}
    \E_{C\sim\mc D}\left[\frac{1}{2}\left\|\tilde{p}(C)-U\right\|_1\right]&\geq\E_{C\sim\mc D}\left(\tilde{p}_0-\frac{1}{2}\right)^2\\
    &=\E_{C\sim\mc D}\left(\sum_{s:|s|>0}g(C,s)\right)^2\\
    &=\E_{C\sim\mc D}\sum_{s:|s|>0}g(C,s)^2,
    \end{aligned}
\end{equation}
where the third line is by orthogonality (Lemma~\ref{lemma:orthogonality}). To lower bound the above sum, we consider all Pauli paths of weight $d+1$. Any such path $s$ that gives a non-zero contribution will have Fourier weight
\begin{equation}
    \E_{C\sim\mc D}g(C,s)^2=(1-\gamma)^{2d} \cdot 2^{n-2}\cdot \left(\frac{1}{15}\right)^d\cdot \frac{1}{2^n}=\frac{(1-\gamma)^{2d}}{4\cdot 15^d}.
\end{equation}
It remains to count the number of such paths. We have shown in Lemma~\ref{lemma:smallestweightpaths} that the number of legal Pauli paths with weight $d+1$ equals $n\cdot 2^d\cdot 3^{d-1}$. However, here we lose a factor of $n$ because the final layer has to be $s_d=\frac{1}{\sqrt{2^n}}Z_1\otimes I^{\otimes n-1}$ ($Z_1$ acts on the first qubit) and therefore the path has a fixed ending point, giving $2^d\cdot 3^{d-1}$ paths in total. This gives
\begin{equation}
    \begin{aligned}
    \E_{C\sim\mc D}\left[\frac{1}{2}\left\|\tilde{p}(C)-U\right\|_1\right]&\geq\E_{C\sim\mc D}\sum_{s:|s|>0}g(C,s)^2\\
    &\geq \E_{C\sim\mc D}\sum_{s:|s|=d+1}g(C,s)^2\\
    &=\frac{(1-\gamma)^{2d}}{4\cdot 15^d}\cdot 2^d\cdot 3^{d-1}=\frac{1}{12}\cdot (1-\gamma)^{2d}\cdot\left(\frac{2}{5}\right)^d.
    \end{aligned}
\end{equation}
This bound can be further improved by considering more paths in Eq.~\eqref{eq:tvdpaths}.
\end{proof}

For completeness, we also summarize known upper bounds for the total variation distance. \cite{Aharonov1996limitations} showed that the KL divergence is upper bounded by $D_{\mathrm{KL}}(\tilde{p}(C)\|U)\leq n\cdot e^{-\gamma d}$. Thus by Pinsker's inequality
\begin{equation}
    \frac{1}{2}\left\|\tilde{p}(C)-U\right\|_1\leq \sqrt{\frac{n}{2}e^{-\gamma d}}.
\end{equation}
Note that the above result holds for any circuit $C$, without averaging. In the anti-concentration regime, \cite{Gao2018efficient} showed an improved upper bound which gives
\begin{equation}
    \E_{C\sim\mc D}\left[\frac{1}{2}\left\|\tilde{p}(C)-U\right\|_1\right]\leq O(1)\cdot e^{-\gamma d}.
\end{equation}

\end{document}